\newtheorem{theorem}{Theorem}[section]
\newtheorem{lemma}[theorem]{Lemma}
\newtheorem{proposition}[theorem]{Proposition} 
\newtheorem{corollary}[theorem]{Corollary}
\theoremstyle{definition}
\newtheorem{definition}[theorem]{Definition}
\newcommand{\cA}{\mathcal{A}}
\newcommand{\cB}{\mathcal{B}}
\newcommand{\cC}{\mathcal{C}}
\newcommand{\cD}{\mathcal{D}}
\newcommand{\cE}{\mathcal{E}}
\newcommand{\cH}{\mathcal{H}}
\newcommand{\cJ}{\mathcal{J}}
\newcommand{\cL}{\mathcal{L}}
\newcommand{\cM}{\mathcal{M}}
\newcommand{\cN}{\mathcal{N}}
\newcommand{\cO}{\mathcal{O}}
\newcommand{\cP}{\mathcal{P}}
\newcommand{\cS}{\mathcal{S}}
\newcommand{\cT}{\mathcal{T}}
\newcommand{\cU}{\mathcal{U}}
\newcommand{\cV}{\mathcal{V}}
\newcommand{\cZ}{\mathcal{Z}}
\newcommand{\one}{\mathds{1}}
\newcommand{\eps}{\varepsilon}
\DeclareMathOperator{\tr}{Tr}
\DeclareMathOperator{\id}{id}
\newcommand{\sumi}{\sum\nolimits}
\newcommand{\ox}{\otimes}
\DeclareMathOperator{\cov}{cov}
\DeclareMathOperator{\Had}{Had}
\newcommand{\Haddeg}{\Had_{\mathrm{deg}}}
\newcommand{\vphi}{\varphi}
\newcommand{\kH}{\mathfrak{H}}
\newcommand{\kE}{\mathfrak{E}}
\DeclareMathOperator{\eb}{EB}
\begin{document}
	\title{Approaches for approximate additivity \\ of the Holevo information of quantum channels}
	\author{Felix Leditzky}\email{felix.leditzky@jila.colorado.edu}
	\affiliation{JILA, University of Colorado/NIST, Boulder, Colorado 80309, USA}
	\affiliation{Center for Theory of Quantum Matter, University of Colorado, Boulder, Colorado 80309, USA}
	\author{Eneet Kaur}\email{ekaur1@lsu.edu}
	\affiliation{Hearne Institute for Theoretical Physics, Department of Physics and Astronomy, Baton Rouge, Louisiana 70803, USA}
	\author{Nilanjana Datta}\email{n.datta@damtp.cam.ac.uk}
	\affiliation{Department of Applied Math and Theoretical Physics, Centre for Mathematical Sciences, University of Cambridge, Cambridge CB3 0WA, UK}
	\author{Mark M.~Wilde}\email{mwilde@lsu.edu}
	\affiliation{Hearne Institute for Theoretical Physics, Department of Physics and Astronomy, Baton Rouge, Louisiana 70803, USA}
	\affiliation{Center for Computation and Technology, Louisiana State University, Baton Rouge, Louisiana 70803, USA}

\begin{abstract}
We study quantum channels that are close to another channel with weakly additive Holevo information and derive upper bounds on their classical capacity.
Examples of channels with weakly additive Holevo information are entanglement-breaking channels, unital qubit channels, and Hadamard channels.
Related to the method of approximate degradability, we define approximation parameters for each class above that measure how close an arbitrary channel is to satisfying the respective property.
This gives us upper bounds on the classical capacity in terms of functions of the approximation parameters, as well as an outer bound on the dynamic capacity region of a quantum channel.
Since these parameters are defined in terms of the diamond distance, the upper bounds can be computed efficiently using semidefinite programming (SDP).
We exhibit the usefulness of our method with two example channels: a convex mixture of amplitude damping and depolarizing noise, and a composition of amplitude damping and dephasing noise.
For both channels, our bounds perform well in certain regimes of the noise parameters in comparison to a recently derived SDP upper bound on the classical capacity.
Along the way, we define the notion of a generalized channel divergence (which includes the diamond distance as an example), and we prove that for jointly covariant channels these quantities are maximized by purifications of a state invariant under the covariance group.
This latter result may be of independent interest.
\end{abstract}

\maketitle

\section{Introduction}\label{sec:introduction}
In information theory, an imperfect communication link between a sender and a receiver is modeled as a noisy channel.
The \emph{capacity} of such a channel is defined as the largest rate at which information can be sent through the channel reliably. 
In his 1948 paper that founded information theory, \textcite{Sha48} gave a simple formula for the capacity of a channel in terms of the mutual information between an input random variable and the corresponding channel output random variable, maximized over all possible input distributions.
One of the most remarkable aspects of Shannon's formula is its \emph{single-letter} nature, meaning that the capacity of a channel only depends on the output statistics of a single use of it.

Quantum information theory generalizes the classical theory, incorporating quantum phenomena like entanglement that have the potential of enhancing communication capabilities.
Communication is modeled by quantum channels.
Depending on the type of information to be sent through a quantum channel, and the resources available to the sender and the receiver, there are various capacities characterizing a channel's capabilities.
In this paper, we focus on the task of unassisted classical information transmission through a quantum channel.
One may define a quantum analogue of the channel mutual information mentioned above, the \emph{Holevo information} (see \eqref{eq:holevo-information} in Section~\ref {sec:holevo-information-classical-capacity}), which again quantifies the maximal possible mutual information between a (classical) input random variable to a quantum channel and its (quantum) output state.
Restricting the encoding in the information-processing task so that entangled inputs across different channel uses are \emph{not} allowed, the Holevo information of a quantum channel is indeed equal to the \emph{product-state classical capacity} of the channel \cite{Hol98,SW97}.

In quantum information theory, it is natural to allow for general encodings that entangle inputs to the channel across different channel uses, since this potentially improves communication rates.
In this scenario, the situation is more complicated, as the classical capacity of a quantum channel is now given by the \emph{regularized} Holevo information \cite{Hol98,SW97}.
The regularization here means that the Holevo information needs to be evaluated for an arbitrary (unbounded) number $n$ of channel uses; the largest such value normalized by $n$ is then equal to the classical capacity.
The evaluation of the Holevo information for an unbounded number of uses of the channel renders the classical capacity intractable to compute, unless the Holevo information is additive.
In this latter case, the normalized Holevo information is the same for all $n$.

Additivity of the Holevo information is known to hold for certain classes of channels such as entanglement-breaking channels \cite{Sho02b}, unital qubit channels \cite{Kin02}, depolarizing channels \cite{Kin03}, and Hadamard channels \cite{Kin06,KMNR07}.
Moreover, a covariant qubit channel whose symmetry group forms a one-design on the output space also has additive Holevo information (see Section~\ref {sec:covariance} for details).
However, \textcite{Has09} found an example of a channel that violates additivity of the Holevo information. 
Thus, the regularization in the classical capacity formula seems to be necessary in general.
A well-known example of a qubit channel with unknown classical capacity is the amplitude damping channel. 
Deriving a tractable expression for its classical capacity remains a major open problem in quantum information theory.

A quantum channel can also be used to transmit other types of information besides classical information.
For example, the \emph{quantum capacity} of a quantum channel quantifies the highest rate at which quantum information can be sent through the channel reliably.
Similar to the classical capacity, the quantum capacity was shown to be equal to the regularization of a quantity called the coherent information \cite{Sch96,SN96,BNS98,BKN00,Llo97,Sho02,Dev05}.
However, the coherent information can be superadditive \cite{DSS98} (see also \cite{CEMOPS15} for an extreme form of this).
Thus, the regularization in the quantum capacity formula is generally necessary, rendering the capacity intractable to compute in most cases.

A similar situation holds when a quantum channel is used to transmit \emph{private} classical information.
The corresponding \emph{private capacity} can again be expressed as the regularization of a quantity called the private information \cite{Dev05,CWY04}, and the latter was found to be superadditive as well \cite{SRS08}.
However, for the class of \emph{degradable} quantum channels \cite{DS05} both the coherent information and the private information are additive (\cite{DS05} resp.~\cite{Smi08}), and in fact equal to each other \cite{Smi08}.
Subsequently, for these channels both the quantum and private capacity are equal to the coherent information.

Due to the non-additivity of the Holevo information, the coherent information, and the private information, the corresponding capacities (classical, quantum, private) are poorly understood except in a few particular cases.
Thus, to characterize the communication capabilities of most quantum channels we are left with finding good (lower and) upper bounds on their various capacities.
A powerful method to find such upper bounds on the quantum and private capacity of a channel was recently developed by \textcite{SSWR15}.
For an arbitrary channel, they defined an \emph{approximate degradability} parameter, which measures how close a channel is to being degradable (in which case it would have additive coherent and private information).
Approximately degradable channels have approximately additive coherent and private information, and this fact can be used to obtain strong upper bounds on the quantum and private capacity, respectively.
Moreover, these bounds can be easily evaluated as the degradability parameter is the solution to a semidefinite program (SDP).
The method has also been applied recently to bound various capacities of bosonic thermal channels \cite{SWAT17}.

\subsection{Main results and organization of the paper}
In the current work, we apply techniques similar to those developed in \cite{SSWR15} to the task of classical information transmission, in order to obtain upper bounds on the classical capacity of a channel.
The main mathematical tool is a continuity result for the classical capacity of quantum channels recently proved in \cite{Shi15}.
This result can be used to obtain upper bounds on the classical capacity of a channel that is close (in diamond distance) to a quantum channel with weakly additive Holevo information (Corollary~\ref {thm:approximate-additivity}).
We then define notions of approximate covariance, approximate entanglement-breaking, and approximate Hadamard-ness, that measure how far a given arbitrary channel is from satisfying the respective defining property.
In each case, Corollary~\ref {thm:approximate-additivity} gives upper bounds on the classical capacity of the channel (Corollaries~\ref {cor:cov-upper-bounds},  \ref {cor:eb-upper-bound}, and \ref {cor:hadamard-upper-bound}, respectively).
In addition, we define an alternative notion of approximate Hadamard-ness in the spirit of \cite{SSWR15}, and derive an analogous upper bound on the classical capacity in Theorem~\ref{thm:eps-hadamard}. 
This alternative approximate Hadamard-ness parameter can also be used to obtain an outer bound on the ``dynamic capcaity region'' of a channel (Theorem~\ref{thm:triple}), characterizing the ability of a quantum channel to simultaneously transmit classical and quantum information as well as generate entanglement between sender and receiver.
In all cases above, the corresponding approximation parameters can be computed (or bounded) via SDPs, and thus evaluated efficiently.

We demonstrate the usefulness of our approach on two examples of channels for which additivity of the Holevo information is not known to hold: a convex mixture of an amplitude damping channel and a depolarizing channel, and the composition of an amplitude damping channel and a $Z$-dephasing channel.
For both channels, we compare our upper bounds to an SDP upper bound on the classical capacity recently derived by \textcite{WXD16}.
Along the way, we define the notion of a generalized channel divergence that includes the diamond distance between quantum channels.
We prove in Proposition~\ref {prop:covariance} that for jointly covariant channels (i.e., two channels that are covariant with respect to the same group) any generalized channel divergence is maximized by purifications of a covariant state, i.e., a state invariant under the covariance group.
This result may be of independent interest and has been employed recently in \cite{SWAT17}.
Specializing it to the diamond distance, we obtain an analytical formula for the covariance parameter of the amplitude damping channel (Proposition~\ref {prop:amp-damp-cov-param}).

The rest of the paper is organized as follows:
In Section~\ref {sec:preliminaries} we first fix some notation, and then define and discuss the following central objects of our paper: quantum channels, the diamond norm, generalized channel divergences, the Holevo information of a quantum channel, and the classical capacity of a quantum channel.
In Section~\ref {sec:approximate-channels}, we introduce four different notions of approximate additivity of the Holevo information of a quantum channel, based on how close the channel is to being covariant, entanglement-breaking, or Hadamard, respectively.
We show how these notions lead to upper bounds on the classical capacity of a quantum channel, as well as an outer bound on the dynamic capacity region of a quantum channel.
In Section~\ref {sec:applications} we apply our results to two examples of channels and furthermore discuss why our methods do not give useful bounds for the amplitude damping channel.
Finally, we give some concluding remarks in Section~\ref {sec:conclusion}.
Appendices~\ref {sec:bitwirl}, \ref{sec:eps-hadamard} and \ref {sec:ampdamp-properties} contain the proofs of some technical results.

\section{Preliminaries}\label{sec:preliminaries}
In this paper, we use the following notation:
For a finite dimensional Hilbert space $\cH$, we denote by $\cB(\cH)$ the algebra of linear operators acting on $\cH$.
For a Hilbert space $\cH_A$ associated to a quantum system $A$, we set $|A|\coloneqq \dim\cH_A$.
We write $X_{A_1\dots A_n}$ for operators in $\cB(\cH_{A_1\dots A_n})$, where $\cH_{A_1\dots A_n} \coloneqq \cH_{A_1}\ox\cdots \ox \cH_{A_n}$.
A (quantum) state $\rho_A$ is a positive semidefinite, normalized operator, i.e., $\rho_{A}\geq 0$ and $\tr\rho_A = 1$.
A \emph{pure} state $\psi_A$ is a state with rank 1, to which we can associate a normalized vector $|\psi\rangle_A\in\cH_A$ (i.e., $\langle\psi|\psi\rangle_A=1$) such that $\psi_A = |\psi\rangle\langle \psi|_A$.
We denote by $\cP(\cH_A)\coloneqq \lbrace \rho\in\cB(\cH_A)\colon \rho\geq 0\rbrace$ the set of positive semidefinite operators on $\cH_A$, and by $\cD(\cH_A)\coloneqq \lbrace \rho\in\cP(\cH_A)\colon \tr(\rho_A)=1\rbrace$ the set of states on $\cH_A$.
For a state $\rho_A$, the \emph{von Neumann entropy} $S(\rho_A)\equiv S(A)_\rho$ is defined by $S(A)_\rho = -\tr\rho_A\log\rho_A$, where $\log$ is taken to base 2.

\subsection{Quantum channels}
A \emph{quantum channel} $\cN\colon A\to B$ is a linear, completely positive, and trace-preserving map from $\cB(\cH_A)$ to $\cB(\cH_B)$, where $\cH_A$ and $\cH_B$ are Hilbert spaces associated to the quantum systems $A$ and $B$.
We also use the notation $\cN_{A\to B}$.
We denote the identity channel on $\cB(\cH_A)$ by $\id_A$.
If a quantum channel acts on one system of a bipartite operator, we occasionally omit the identity channel, i.e., we write $\cN_{A\to B}(\rho_{RA})\equiv (\id_R\ox \cN_{A\to B})(\rho_{RA})$.
For every quantum channel $\cN\colon A\to B$ we can choose an auxiliary Hilbert space $\cH_E$, the \emph{environment}, and an isometry $V\colon \cH_A\to \cH_B\ox\cH_E$, the \emph{Stinespring dilation}, such that $\cN(\rho_A) = \tr_E (V\rho_A V^\dagger)$ \cite{Sti55}.
The isometry $V$ is unique up to left multiplication by a unitary operator acting on $\cH_E$.
A \emph{complementary channel} $\cN_c\colon A\to E$ of $\cN$ is defined by $\cN_c(\rho_A)\coloneqq \tr_B(V\rho_A V^\dagger)$ and unique up to a unitary operator acting on the output.

Let $\lbrace |i\rangle_A\rbrace_{i=1}^{|A|}$ be a basis for $\cH_A$, and define the unnormalized maximally entangled vector 
\begin{align}
|\gamma\rangle_{AA'} \coloneqq \sum_{i=1}^{|A|} |i\rangle_{A} \ox |i\rangle_{A'},\label{eq:gamma}
\end{align} 
where $A'\cong A$ (by which we mean $\cH_{A'}\cong \cH_{A}$ as Hilbert spaces).
The \emph{Choi operator} $N_{AB}$ of a quantum channel $\cN\colon A\cong A'\to B$ is defined as 
\begin{align}
N_{AB} \coloneqq (\id_{A}\ox \cN)(\gamma_{AA'}),
\end{align}
where $\id_A$ denotes the identity map on $\cB(\cH_A)$.
The Choi operator satisfies $N_{AB} \geq 0$ and $\tr_BN_{AB} = \one_{A}$, where $\one_{A}$ denotes the identity operator on $\cH_A$.
Conversely, any bipartite operator $M_{AB}$ satisfying $M_{AB}\geq 0$ and $\tr_B M_{AB}=\one_{A}$ is the Choi operator of some quantum channel $\cM\colon A\to B$.
We use this correspondence extensively throughout the paper.

Let $G$ be a group with unitary representations $U_A(g)\in\cU(\cH_A)$ on $\cH_A$ and $V_B(g)\in\cU(\cH_B)$ on $\cH_B$, respectively, where $\cU(\cH)$ denotes the unitary group acting on the Hilbert space $\cH$.
We call a quantum channel $\cN\colon A\to B$ \emph{covariant with respect to $\lbrace (U_A(g),V_B(g)) \rbrace_{g\in G}$}, if
\begin{align}
V_B(g)\, \cN(\cdot) V_B(g)^\dagger = \cN ( U_A(g) \cdot U_A(g)^\dagger)
\end{align}
for all $g\in G$.
We drop direct reference to the representations $\lbrace (U_A(g),V_B(g)) \rbrace_{g\in G}$ whenever their choice is clear from the context.

\subsection{Diamond norm}
The \emph{trace norm} $\|X\|_1$ of an operator $X\in\cB(\cH)$ is defined as
\begin{align}
\|X\|_1 \coloneqq \tr\sqrt{X^\dagger X}.
\end{align}
The diamond norm $\|\Phi\|_\diamond$ of a linear map $\Phi\colon A\to B$ is defined as
\begin{align}
\|\Phi\|_\diamond \coloneqq \max_{X_{A'A}\in\cB(\cH_{A'A})} \frac{\|(\id_{A'}\ox\Phi)(X_{A'A})\|_1}{\|X_{A'A}\|_1},
\end{align}
where $A'\cong A$.
For two quantum channels $\cN,\cM\colon A\to B$, the diamond norm $\frac{1}{2}\|\cN-\cM\|_\diamond$ of half their difference is the solution to the following semidefinite program (SDP) \cite{Wat09}:
\begin{align}
\begin{aligned}
	\text{\normalfont minimize: } & \mu\\
	\text{\normalfont subject to: } & \tr_BZ_{AB} \leq \mu \one_A\\
	& Z_{AB} \geq N_{AB} - M_{AB}\\
	& Z_{AB} \geq 0,
\end{aligned}
\label{eq:diamondnorm-SDP}
\end{align}
where $N_{AB}$ and $M_{AB}$ denote the Choi operators of the quantum channels $\cN$ and $\cM$, respectively.

\subsection{Generalized channel divergences}\label{sec:generalized-channel-divergence}
In the following, we use the notation $\cD\equiv \cD(\cH)$ and $\cP\equiv\cP(\cH)$ for the sets of density matrices and positive semidefinite operators on a generic Hilbert space $\cH$, respectively.
\begin{definition}
	[Generalized divergence; \cite{SW13,WWY14}]
	A functional $\mathbf{D}\colon \cD\times
	\cP\rightarrow\mathbb{R}$ is a \textit{generalized divergence} if it
	satisfies the monotonicity (data processing) inequality%
	\begin{align}
	\mathbf{D}(\rho\| \sigma)\geq\mathbf{D}(\mathcal{N}(\rho)\Vert
	\mathcal{N}(\sigma)),
	\end{align}
	where $\cN$ is a quantum channel.
\end{definition}

Particular examples of a generalized divergence are the quantum relative entropy $D(\rho\|\sigma)\coloneqq \tr(\rho(\log\rho-\log\sigma))$ \cite{Lin74} and the trace distance $\|\rho-\sigma\|_1$.
It follows directly from monotonicity that any generalized divergence is invariant with respect to isometries, in the sense that $\mathbf{D}(\rho \Vert\sigma)=\mathbf{D}(U\rho U^{\dag}\Vert U\sigma U^{\dag})$, where $U$ is an isometry, and that it is invariant under tensoring with another quantum state $\tau$, namely $\mathbf{D}(\rho\Vert\sigma)=\mathbf{D}(\rho\otimes\tau\Vert\sigma\otimes\tau)$. 
Note that to establish isometric invariance from monotonicity, we require a channel that can reverse the action of an isometry (see, e.g., \cite[Section~4.6.3]{Wil13} for this standard construction).

We say that a generalized channel divergence satisfies the direct-sum property with respect to classical-quantum states if the following equality holds:
\begin{multline}
\mathbf{D}\!\left(  \sumi_{x}p_{X}(x)|x\rangle\langle x|_{X}\otimes\rho
^{x}\middle\Vert\sumi_{x}p_{X}(x)|x\rangle\langle x|_{X}\otimes\sigma
^{x}\right) \\  =\sumi_{x}p_{X}(x)\mathbf{D}(\rho^{x}\Vert\sigma^{x}),
\end{multline}
where $p_{X}$ is a probability distribution, $\{|x\rangle\}_{x}$ is an orthonormal basis, and $\{\rho^{x}\}_{x}$ and $\{\sigma^{x}\}_{x}$ are sets of states. 
We note that this property holds, e.g., for trace distance and quantum relative entropy.

\begin{definition}
	[Generalized channel divergence]
	Given quantum channels $\mathcal{N}_{A\rightarrow B}$ and $\mathcal{M}_{A\rightarrow B}$, we define the generalized channel divergence as
	\begin{align}
	\mathbf{D}(\mathcal{N}\Vert\mathcal{M}) \equiv\sup_{\rho_{RA}}\mathbf{D}%
	(\mathcal{N}_{A\rightarrow B}(\rho_{RA}%
	)\Vert \mathcal{M}_{A\rightarrow B}(\rho_{RA})),
	\end{align}
	where the supremum is over all mixed states $\rho_{RA}$, and the reference system $R$ is allowed to be arbitrarily large. 
	However, as a consequence of purification, data processing, and the Schmidt decomposition, it follows that
	\begin{align}
	\mathbf{D}(\mathcal{N}\Vert\mathcal{M}) =\sup_{\psi_{RA}}\mathbf{D}%
	(\mathcal{N}_{A\rightarrow B}(\psi_{RA}%
	)\Vert \mathcal{M}_{A\rightarrow B}(\psi_{RA})),
	\end{align}
	such that the supremum can be restricted to be with respect to pure states and the reference system $R$ is isomorphic to the channel input system $A$.
\end{definition}

Particular cases of the generalized channel divergence are the diamond norm of
the difference of $\mathcal{N}_{A\rightarrow B}$ and $\mathcal{M}%
_{A\rightarrow B}$ as well as the Rényi channel divergence from \cite{CMW14}.

In the following development, the notion of joint covariance plays a central role.
We say that channels $\mathcal{N}_{A\rightarrow B}$ and $\mathcal{M}%
_{A\rightarrow B}$ are jointly covariant with respect to $\left\{  \left(
U_{A}(g),V_{B}(g)\right)  \right\}  _{g\in G}$ if each of them is covariant
with respect to $\left\{  \left(  U_{A}(g),V_{B}(g)\right)  \right\}  _{g}$.
We also use the abbreviations
\begin{align}
\mathcal{U}_{A}^{g}(\rho_{A})  &  =U_{A}(g)\rho_{A}U_{A}^{\dag}(g),\\
\mathcal{V}_{B}^{g}(\sigma_{B})  &  =V_{B}(g)\sigma_{B}V_{B}^{\dag}(g).
\end{align}

We begin with the following lemma, which will be helpful in establishing
several follow-up results:

\begin{lemma}\label{lemma:cov-critical-step}
	Let $\mathcal{N}_{A\rightarrow B}$ and
	$\mathcal{M}_{A\rightarrow B}$ be quantum channels, and let $\left\{  \left(
	U_{A}(g),V_{B}(g)\right)  \right\}  _{g\in G}$ denote unitary representations
	of a group $G$. 
	Let $\rho_{A}$ be a density operator, and let $\phi_{RA}%
	^{\rho}$ be a purification of $\rho_{A}$. 
	Let $\bar{\rho}_{A}$ denote the
	group average of $\rho_{A}$ according to a probability distribution $p_{G}$, i.e.,%
	\begin{align}
	\bar{\rho}_{A}=\sum_{g}p_{G}(g)\mathcal{U}_{A}^{g}(\rho_{A}),
	\end{align}
	and let $\phi_{RA}^{\bar{\rho}}$ be a purification of $\bar{\rho}_{A}$. 
	Moreover, for $g\in G$ we use the notation $\cN^g_{A\to B}\equiv \mathcal{V}_{B}^{g\dag}\circ\mathcal{N}_{A\rightarrow B}\circ\mathcal{U}_{A}^{g}$, and similarly for $\cM^g_{A\to B}$.
	Then the following inequality holds:
	\begin{multline}
	\mathbf{D}\left(\mathcal{N}_{A\rightarrow B}\left(\phi_{RA}^{\bar{\rho}}\right) \middle\Vert
	\mathcal{M}_{A\rightarrow B}\left(\phi_{RA}^{\bar{\rho}}\right)\right)\\
	\geq\mathbf{D}\!\left(  \sumi_{g}p_{G}(g)|g\rangle\langle g|_{P}\otimes \cN^g_{A\to B}  (\phi_{RA}^{\rho})\middle\Vert \right.\\
	 \left. \sumi_{g}p_{G}(g)|g\rangle
	\langle g|_{P}\otimes \cM^g_{A\to B}  (\phi_{RA}^{\rho})\right)\!.
	\end{multline}
	If the generalized divergence has the direct-sum property with respect to classical-quantum states, then the following inequality holds:
	\begin{multline}
	\mathbf{D}(\mathcal{N}_{A\rightarrow B}(\phi_{RA}^{\bar{\rho}})\Vert
	\mathcal{M}_{A\rightarrow B}(\phi_{RA}^{\bar{\rho}}))
	\\ \geq\sum_{g}%
	p_{G}(g)\mathbf{D}\!\left( \cN^g_{A\to B} (\phi_{RA}^{\rho
	})\middle\Vert \cM^g_{A\to B} (\phi_{RA}^{\rho})\right)\!.
	\end{multline}
\end{lemma}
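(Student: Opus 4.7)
The plan is to use data processing and isometric invariance, together with a carefully chosen explicit purification of $\bar\rho_A$ that exposes the group-averaging structure. First I would define the candidate purification
\begin{align}
|\tilde\phi\rangle_{PRA} \coloneqq \sum_g \sqrt{p_G(g)}\,|g\rangle_P \otimes (\one_R \ox U_A(g))|\phi^\rho\rangle_{RA},
\end{align}
where $\{|g\rangle_P\}$ is orthonormal. A direct computation gives $\tr_{PR}|\tilde\phi\rangle\langle\tilde\phi|_{PRA} = \bar\rho_A$, so $|\tilde\phi\rangle_{PRA}$ is indeed a purification. Since any two purifications of the same state are related by a partial isometry acting on the purifying systems, and since the excerpt records that any generalized divergence is isometrically invariant, the LHS of the lemma is unchanged if $\phi^{\bar\rho}_{RA}$ is replaced by $|\tilde\phi\rangle\langle\tilde\phi|_{PRA}$.

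Next I would apply the completely dephasing channel $\Delta_P(X) = \sum_g |g\rangle\langle g|_P X |g\rangle\langle g|_P$ on the $P$ register. By monotonicity under the channel $\Delta_P \ox \id_{B}$, we have
\begin{align}
\mathbf{D}\bigl(\cN(\tilde\phi_{PRA}) \,\big\Vert\, \cM(\tilde\phi_{PRA})\bigr) \geq \mathbf{D}\bigl(\Delta_P\circ\cN(\tilde\phi_{PRA}) \,\big\Vert\, \Delta_P\circ\cM(\tilde\phi_{PRA})\bigr).
\end{align}
The cross terms in $|\tilde\phi\rangle\langle\tilde\phi|_{PRA}$ are killed by $\Delta_P$, leaving
\begin{align}
\Delta_P\circ\cN_{A\to B}(\tilde\phi_{PRA}) = \sum_g p_G(g)\,|g\rangle\langle g|_P \ox (\cN_{A\to B}\circ\cU^g_A)(\phi^\rho_{RA}),
\end{align}
and an identical expression for $\cM$. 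To turn $\cN\circ\cU^g_A$ into $\cN^g = \cV^{g\dagger}_B\circ\cN\circ\cU^g_A$, I would conjugate by the controlled unitary $W \coloneqq \sum_g |g\rangle\langle g|_P \ox V_B(g)^\dagger$ acting on $PB$. Isometric invariance of $\mathbf{D}$ (a consequence of data processing, as already noted in the excerpt) gives equality, and the conjugation produces exactly $\sum_g p_G(g)|g\rangle\langle g|_P \ox \cN^g(\phi^\rho_{RA})$, and similarly for $\cM$. Chaining these three steps yields the first inequality.

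For the second inequality, I would simply invoke the hypothesized direct-sum property of $\mathbf{D}$ with respect to classical-quantum states, applied to the $P$ register with probability distribution $p_G$, converting the classical-quantum divergence on the RHS of the first inequality into $\sum_g p_G(g)\,\mathbf{D}(\cN^g(\phi^\rho_{RA}) \Vert \cM^g(\phi^\rho_{RA}))$.

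The only step requiring real thought is the opening move: one must be willing to swap out the arbitrary purification $\phi^{\bar\rho}_{RA}$ named in the statement for the more convenient explicit purification $|\tilde\phi\rangle_{PRA}$ on a larger purifying system. This is justified by the well-known fact that purifications are unique up to a partial isometry on the purifying systems together with the isometric invariance property already established for generalized divergences in the excerpt. After that, the dephasing channel and the controlled unitary $W$ are routine; the whole argument is essentially a bookkeeping of which systems the data-processing and isometric-invariance moves act on.
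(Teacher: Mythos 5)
Your proposal is correct and follows essentially the same route as the paper's proof: the same explicit purification $\sum_g \sqrt{p_G(g)}\,|g\rangle_P \otimes (\one_R \ox U_A(g))|\phi^\rho\rangle_{RA}$, the same isometry relating it to the arbitrary purification $\phi^{\bar\rho}_{RA}$ (which, since $PR$ is at least as large as $R$, extends from your ``partial isometry'' to a genuine isometric channel as the paper uses), the same dephasing of $P$ via monotonicity, and the same controlled unitary $\sum_g |g\rangle\langle g|_P \ox V_B(g)^\dagger$, with the direct-sum property closing the second inequality. No gaps.
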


\begin{proof}
	Our proof is related to an approach from \cite[Proposition 2]{TWW14} as well as that given
	in \cite{Mat10}.
	Given the purification $\phi_{RA}^{\rho}$, consider the following state%
	\begin{align}
	\left\vert \psi\right\rangle _{PRA}\equiv\sum_{g}\sqrt{p_{G}(g)}|g\rangle
	_{P}\left[  I_{R}\otimes U_{A}(g)\right]  \left\vert \phi^{\rho}\right\rangle
	_{RA}\text{.}%
	\end{align}
	Observe that $\left\vert \psi\right\rangle _{PRA}$ is a purification of
	$\bar{\rho}_{A}$ with purifying systems $P$ and $R$. By the fact that all
	purifications are related by an isometry, there exists an isometric channel
	$\mathcal{W}_{R\rightarrow PR}$ such that $\mathcal{W}_{R\rightarrow PR}%
	(\phi_{RA}^{\bar{\rho}})=\psi_{PRA}$. Then the following chain of inequalities
	holds:
	\begin{align}
	&  \mathbf{D}(\mathcal{N}_{A\rightarrow B}(\phi_{RA}^{\bar{\rho}}%
	)\Vert\mathcal{M}_{A\rightarrow B}(\phi_{RA}^{\bar{\rho}}))\nonumber\\
	&  =\mathbf{D}(\mathcal{W}_{R\rightarrow PR}(\mathcal{N}_{A\rightarrow B}%
	(\phi_{RA}^{\bar{\rho}}))\Vert\mathcal{W}_{R\rightarrow PR}(\mathcal{M}%
	_{A\rightarrow B}(\phi_{RA}^{\bar{\rho}})))\\
	&  =\mathbf{D}(\mathcal{N}_{A\rightarrow B}[\mathcal{W}_{R\rightarrow PR}%
	(\phi_{RA}^{\bar{\rho}})]\Vert\mathcal{M}_{A\rightarrow B}[\mathcal{W}%
	_{R\rightarrow PR}(\phi_{RA}^{\bar{\rho}})])\\
	&  =\mathbf{D}(\mathcal{N}_{A\rightarrow B}(\psi_{PRA})\Vert\mathcal{M}%
	_{A\rightarrow B}(\psi_{PRA}))\\
	&  \geq\mathbf{D}\!\left(  \sumi_{g}p_{G}(g)|g\rangle\langle g|_{P}%
	\otimes\left(  \mathcal{N}_{A\rightarrow B}\circ\mathcal{U}_{A}^{g}\right)
	(\phi_{RA}^{\rho})\middle\Vert \right.\nonumber\\
	&\qquad\quad \left. \sumi_{g}p_{G}(g)|g\rangle\langle g|_{P}%
	\otimes\left(  \mathcal{M}_{A\rightarrow B}\circ\mathcal{U}_{A}^{g}\right)
	(\phi_{RA}^{\rho})\right) \label{eq:classical-on-P}\\
	&  =\mathbf{D}\!\left(  \sumi_{g}p_{G}(g)|g\rangle\langle g|_{P}\otimes \cN^g_{A\to B} (\phi_{RA}^{\rho})\middle\Vert \right.\nonumber\\ 
	&\qquad\quad \left. \sumi_{g}p_{G}(g)|g\rangle
	\langle g|_{P}\otimes \cM^g_{A\to B}  (\phi_{RA}^{\rho})\right)  .
	\end{align}
	The first equality follows from isometric invariance of the channel
	divergence. The second equality follows because the isometric channel
	$\mathcal{W}_{R\rightarrow PR}$ commutes with $\mathcal{N}_{A\rightarrow B}$
	and $\mathcal{M}_{A\rightarrow B}$. The third equality follows because
	$\mathcal{W}_{R\rightarrow PR}(\phi_{RA}^{\bar{\rho}})=\psi_{PRA}$. The first
	inequality follows from monotonicity of the generalized divergence
	$\mathbf{D}$ under a dephasing of the $P$ register (where the dephasing
	operation is given by $\sum_{g}|g\rangle\langle g|\cdot|g\rangle\langle g|$).
	The last equality follows from invariance of the generalized divergence under
	unitaries, with the unitary chosen to be%
	\begin{align}
	\sum_{g}|g\rangle\langle g|_{P}\otimes V_{B}^{\dag}(g).
	\end{align}
	Note that one could also implement this operation as a classically controlled
	LOCC\ operation, i.e., a von Neumann measurement $\left\{  |g\rangle\langle
	g|\right\}  $ of the register $P$ followed by a rotation $V_{B}^{\dag}(g)$ of
	the $B$ register, as discussed in \cite[Proposition 2]{TWW14}. One can do so here because both arguments to $\mathbf{D}$ in
	\eqref{eq:classical-on-P} are classical on $P$.
\end{proof}

We then have the following proposition, which allows us to restrict the form
of the input states needed to optimize the generalized channel divergence of
two jointly covariant channels:

\begin{proposition}
	\label{prop:covariance} 
	Let $\mathcal{N}_{A\rightarrow B}$ and $\mathcal{M}%
	_{A\rightarrow B}$ be quantum channels that are jointly covariant with respect to $\left\{  \left(
	U_{A}(g),V_{B}(g)\right)  \right\}  _{g\in G}$ for a group $G$ as above. Then,
	\begin{align}
	\mathbf{D}(\mathcal{N}\Vert\mathcal{M}) =\sup_{\psi_{RA}} \left\lbrace
	\mathbf{D}(\mathcal{N}_{A\rightarrow B}%
	(\psi_{RA})\Vert \mathcal{M}_{A\rightarrow B}(\psi_{RA}))\right\rbrace,
	\end{align}
	where the supremum is over all pure states $\psi_{RA}$ such that $\psi_{A}=\frac{1}{\left\vert G\right\vert }\sumi_{g\in G}%
	\mathcal{U}_{A}^{g}\left(  \psi_{A}\right)$.
	That is, it suffices to restrict the optimization to be over pure input states
	$\psi_{RA}$ such that the reduced state $\psi_{A}$ is invariant with respect
	to the symmetrizing channel $\frac{1}{\left\vert G\right\vert }\sum
	_{g}\mathcal{U}_{A}^{g}\left(  \cdot\right)  $.
\end{proposition}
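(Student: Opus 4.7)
The plan is to specialize Lemma~\ref{lemma:cov-critical-step} to the uniform distribution $p_G(g)=1/|G|$ on $G$ and combine it with the joint covariance hypothesis. This will show that every pure input state is dominated in generalized divergence by a pure input state whose $A$-marginal is invariant under the symmetrizing channel, which is precisely the restriction claimed in the proposition.

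Fix an arbitrary pure state $\phi_{RA}^\rho$ with reduced state $\rho_A$, and form the group-averaged state
\begin{align}
\bar\rho_A = \frac{1}{|G|}\sum_{g\in G}\cU_A^g(\rho_A),
\end{align}
which by construction is a fixed point of the symmetrizing channel $\frac{1}{|G|}\sum_g \cU_A^g(\cdot)$. Let $\phi_{RA}^{\bar\rho}$ be any purification of $\bar\rho_A$. Joint covariance with respect to $\{(U_A(g),V_B(g))\}_{g\in G}$ implies $\cN_{A\to B}\circ\cU_A^g=\cV_B^g\circ\cN_{A\to B}$ and the analogous identity for $\cM$, so the ``twisted'' channels appearing in Lemma~\ref{lemma:cov-critical-step} collapse to the original ones: $\cN^g_{A\to B}=\cV_B^{g\dagger}\circ\cN_{A\to B}\circ\cU_A^g=\cN_{A\to B}$, and similarly $\cM^g_{A\to B}=\cM_{A\to B}$.

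With these choices, the first inequality of Lemma~\ref{lemma:cov-critical-step} becomes
\begin{align}
&\mathbf{D}\!\left(\cN_{A\to B}(\phi_{RA}^{\bar\rho})\,\middle\|\,\cM_{A\to B}(\phi_{RA}^{\bar\rho})\right)\nonumber\\
&\quad\geq \mathbf{D}\!\left(\tau_P\otimes\cN_{A\to B}(\phi_{RA}^\rho)\,\middle\|\,\tau_P\otimes\cM_{A\to B}(\phi_{RA}^\rho)\right),
\end{align}
where $\tau_P\coloneqq \frac{1}{|G|}\sum_g |g\X g|_P$. By invariance of any generalized divergence under tensoring with the same ancillary state on both arguments (equivalently, by monotonicity under the partial trace over $P$ together with the trivial reverse direction via tensoring), the right-hand side equals $\mathbf{D}(\cN_{A\to B}(\phi_{RA}^\rho)\|\cM_{A\to B}(\phi_{RA}^\rho))$.

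Hence every pure input state $\phi_{RA}^\rho$ is dominated by the pure input state $\phi_{RA}^{\bar\rho}$, whose reduced state on $A$ is group-invariant. Taking the supremum on both sides over all pure $\phi_{RA}^\rho$, and noting that purifications of group-invariant states are themselves admissible inputs in the supremum defining $\mathbf{D}(\cN\|\cM)$, yields the claimed equality. The only substantive step is the collapse $\cN^g=\cN$, $\cM^g=\cM$ forced by joint covariance, which reduces the generic averaged expression from the lemma to a product state on $P$ tensored with a single evaluation of each channel; once this collapse is in hand, the remainder is bookkeeping via data processing and the definition of the generalized channel divergence.
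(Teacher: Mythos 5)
Your proof is correct and takes essentially the same approach as the paper's own: you apply Lemma~\ref{lemma:cov-critical-step} with the uniform distribution $p_G(g)=1/|G|$, use joint covariance to collapse $\cN^g_{A\to B}=\cN_{A\to B}$ and $\cM^g_{A\to B}=\cM_{A\to B}$ so that the classical register factors out as $\tau_P\otimes(\cdot)$ on both arguments, and then remove it via invariance of the generalized divergence under tensoring with a fixed state, yielding the domination of every pure input by a purification of the group-averaged marginal. The concluding supremum argument matches the paper's as well, so no gaps remain.
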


\begin{proof}
	This is an immediate consequence of Lemma~\ref{lemma:cov-critical-step}, which
	follows from the assumption of joint covariance. Applying it and taking
	$p_{G}(g)=1/\left\vert G\right\vert $, we find that%
	\begin{widetext}
	\begin{align}
	\mathbf{D}(\mathcal{N}_{A\rightarrow B}(\phi_{RA}^{\bar{\rho}}%
	)\Vert\mathcal{M}_{A\rightarrow B}(\phi_{RA}^{\bar{\rho}})) &\geq\mathbf{D}\!\left(  \sumi_{g\in G}\frac{1}{\left\vert G\right\vert }%
	|g\rangle\langle g|_{P}\otimes \cN^g_{A\to B}  (\phi_{RA}^{\rho
	})\;\middle\Vert \;\sumi_{g\in G}\frac{1}{\left\vert G\right\vert }|g\rangle\langle
	g|_{P}\otimes \cM^g_{A\to B} (\phi_{RA}^{\rho})\right) \\
	&  =\mathbf{D}\!\left(  \sumi_{g\in G}\frac{1}{\left\vert G\right\vert }%
	|g\rangle\langle g|_{P}\otimes\mathcal{N}_{A\rightarrow B}(\phi_{RA}^{\rho
	})\; \middle\Vert  \;\sumi_{g\in G}\frac{1}{\left\vert G\right\vert }|g\rangle\langle
	g|_{P}\otimes\mathcal{M}_{A\rightarrow B}(\phi_{RA}^{\rho})\right) \\
	&  =\mathbf{D}\!\left(  \mathcal{N}_{A\rightarrow B}(\phi_{RA}^{\rho
	})\middle\Vert\mathcal{M}_{A\rightarrow B}(\phi_{RA}^{\rho})\right)  .
	\end{align}
	\end{widetext}
	The first equality follows from the assumption of joint covariance, which
	implies that%
	\begin{align}
	\cN^g_{A\to B} & \equiv \mathcal{V}_{B}^{g\dag}\circ\mathcal{N}_{A\rightarrow B}\circ\mathcal{U}%
	_{A}^{g}   =\mathcal{N}_{A\rightarrow B},\\
	\cM^g_{A\to B} &\equiv \mathcal{V}_{B}^{g\dag}\circ\mathcal{M}_{A\rightarrow B}\circ\mathcal{U}%
	_{A}^{g}  =\mathcal{M}_{A\rightarrow B}.
	\end{align}
	The last inequality follows because the generalized divergence $\mathbf{D}$ is
	invariant with respect to tensoring another quantum state.
\end{proof}

Applying Proposition~\ref{prop:covariance}\ to the case in which $\left\{U_{A}(g)\right\}_{g\in G}$ is a one-design leads to the following corollary:

\begin{corollary}
	Let $\mathcal{N}_{A\rightarrow B}$ and $\mathcal{M}_{A\rightarrow B}$ be quantum channels that are
	jointly covariant with respect to $\left\{  \left(  U_{A}(g),V_{B}(g)\right)
	\right\}  _{g\in G}$ for a group $G$ and where $\left\{  U_{A}(g)\right\}
	_{g\in G}$ is a one-design. Then,
	\begin{multline}
	\mathbf{D}(\mathcal{N}\Vert\mathcal{M}) \\ =\mathbf{D}((\operatorname{id}%
	_{R}\otimes\mathcal{N}_{A\rightarrow B})(\Phi_{RA})\Vert(\operatorname{id}%
	_{R}\otimes\mathcal{M}_{A\rightarrow B})(\Phi_{RA})),
	\end{multline}
	where $|\Phi\rangle_{RA}\coloneqq |A|^{-1/2} |\gamma\rangle_{RA}$ denotes the (normalized) maximally entangled state.
\end{corollary}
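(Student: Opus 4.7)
The plan is to combine Proposition~\ref{prop:covariance} with the defining property of a one-design and the fact that all purifications of a fixed state are related by an isometry on the purifying system. Concretely, I would proceed as follows.

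First, I invoke Proposition~\ref{prop:covariance} to restrict the supremum defining $\mathbf{D}(\mathcal{N}\Vert\mathcal{M})$ to pure input states $\psi_{RA}$ whose reduced state satisfies $\psi_A = \frac{1}{|G|}\sum_{g\in G}\mathcal{U}_A^g(\psi_A)$. Next I use the one-design hypothesis: by definition, $\{U_A(g)\}_{g\in G}$ being a one-design means that $\frac{1}{|G|}\sum_{g\in G}\mathcal{U}_A^g(\sigma) = \one_A/|A|$ for every state $\sigma$. Applied to $\psi_A$, this forces every admissible input to satisfy $\psi_A = \one_A/|A|$, i.e., its reduction on $A$ is the maximally mixed state.

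Second, I observe that the maximally entangled state $\Phi_{RA}$ is itself a purification of $\one_A/|A|$, and that any two purifications of the same state are related by an isometry acting only on the purifying system. Hence, for any such admissible $\psi_{RA}$ there exists an isometry $W_{R\to R}$ (with $R\cong A$ sufficient, since $\Phi_{RA}$ already purifies with $|R|=|A|$) such that $|\psi\rangle_{RA} = (W_R\otimes I_A)|\Phi\rangle_{RA}$. Because such an isometry commutes with $\mathcal{N}_{A\to B}$ and $\mathcal{M}_{A\to B}$, and because any generalized divergence is invariant under isometries (noted in Section~\ref{sec:generalized-channel-divergence} as a consequence of monotonicity), we have
\begin{align}
\mathbf{D}(\mathcal{N}_{A\to B}(\psi_{RA})\Vert\mathcal{M}_{A\to B}(\psi_{RA})) = \mathbf{D}(\mathcal{N}_{A\to B}(\Phi_{RA})\Vert\mathcal{M}_{A\to B}(\Phi_{RA})).
\end{align}
Since the right-hand side is independent of the particular $\psi_{RA}$, the supremum collapses to this single value, yielding the claim.

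There is essentially no obstacle here: the content is already delivered by Proposition~\ref{prop:covariance}, and the corollary is a one-line specialization. The only subtlety worth stating explicitly is the standard fact that isometric invariance holds for any generalized divergence (even though the monotonicity axiom is stated only for channels), and that the isometry $W_{R\to R}$ relating $\psi_{RA}$ to $\Phi_{RA}$ acts on the reference system and hence commutes trivially with $\mathcal{N}_{A\to B}$ and $\mathcal{M}_{A\to B}$; both points have been justified earlier in the paper.
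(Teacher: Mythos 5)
Your proof is correct and follows exactly the route the paper intends: the corollary is stated as an immediate application of Proposition~\ref{prop:covariance}, with the one-design property forcing $\psi_A = \one_A/|A|$ and isometric invariance (via an isometry on the reference system, which commutes with the channels) collapsing all admissible purifications to the value at $\Phi_{RA}$. You have merely made explicit the standard details the paper leaves implicit, so there is nothing to correct.
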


\subsection{Holevo information and classical capacity of quantum channels}\label{sec:holevo-information-classical-capacity}
We define the \emph{Holevo information} $\chi(\cN)$ of a quantum channel $\cN\colon A\to B$ as
\begin{align}
\chi(\cN) &\coloneqq \max_{\cE} \chi(\cN,\cE),\label{eq:holevo-information}
\end{align}
where
\begin{multline}
\chi(\cN,\cE) \coloneqq S\left(\sumi_x p_X(x) \cN(\rho^x_A) \right)\\  - \sumi_x p_X(x) S(\cN(\rho^x_A)),
\end{multline}
and the maximum in \eqref{eq:holevo-information} is over all quantum state ensembles $\cE=\lbrace p_X(x), \rho^x_A\rbrace_x$ with $\rho^x_A\in\cD(\cH_A)$.
Note that this maximum is achieved by pure state ensembles of cardinality at most $|A|^2$, i.e., $\rho^x_A = |\psi^x\rangle\langle \psi^x|_A$ for all $x=1,\dots,|A|^2$.
Defining the classical-quantum (cq) states 
\begin{align}
\rho_{XA} &= \sumi_x p_X(x) |x\rangle \langle x|_X \ox \rho^x_A\label{eq:rho-XA}\\
\sigma_{XB} &= (\id_X\ox \cN)(\rho_{XA}),\label{eq:sigma-XB}
\end{align}
the Holevo information can be expressed as
\begin{align}
\chi(\cN) = \max_{\rho_{XA}} I(X;B)_\sigma,\label{eq:holevo-capacity-cq}
\end{align}
where the maximum is over all cq states $\rho_{XA}$ of the form in \eqref{eq:rho-XA}, the state $\sigma_{XB}$ is defined as in \eqref{eq:sigma-XB}, and $I(A;B)_\theta = S(A)_\theta + S(B)_\theta - S(AB)_\theta$ is the mutual information of a bipartite state $\theta_{AB}$.
The \emph{classical capacity} $C(\cN)$ of $\cN$ is given by the regularized Holevo information \cite{Hol98,SW97}:
\begin{align}
C(\cN) = \lim_{n\to \infty} \frac{1}{n} \chi(\cN^{\ox n}).
\label{eq:classical-capacity}
\end{align}
We say that a quantum channel $\cN$ has \emph{weakly additive} Holevo information if
\begin{align}
\chi(\cN^{\ox n}) = n\chi(\cN)
\end{align}
holds for all $n\in\mathbb{N}$.
For such a channel, the limit in the classical capacity formula \eqref{eq:classical-capacity} becomes trivial, and the classical capacity is equal to the Holevo information, $C(\cN) = \chi(\cN)$.
Furthermore, we say that a quantum channel $\cN$ has \emph{strongly additive} Holevo information if
\begin{align}
\chi(\cN\ox\cM) = \chi(\cN) + \chi(\cM)
\end{align}
for \emph{any} other channel $\cM$.
It is easy to see that every channel with strongly additive Holevo information also has weakly additive Holevo information.
Subsequently, for channels with strongly additive Holevo information we also have $C(\cN) = \chi(\cN)$.
In Section~\ref {sec:approximate-channels} we discuss examples of channels with weakly or strongly additive Holevo information.

\textcite{LS08} proved a number of continuity results for quantum channel capacities with respect to the diamond distance, including a continuity bound for the classical capacity.
\textcite{Shi15} recently refined their result on the classical capacity using techniques developed by \textcite{Win15}, as well as giving an improved continuity bound for the Holevo information of two quantum channels.
To state Shirokov's results, we introduce the function $g\colon [0,1]\to\mathbb{R}$, defined as
\begin{align}
g(\eps) \coloneqq (1+\eps)\log (1+\eps) - \eps\log\eps. \label{eq:g-function}
\end{align}
We then have:
\begin{theorem}[\cite{Shi15}]~\label{prop:continuity}
	Let $\cN,\cM\colon A\to B$ be quantum channels with $\frac{1}{2}\|\cN-\cM\|_\diamond \leq \eps$ for some $\eps\in[0,1]$.
	Then,
	\begin{enumerate}[{\normalfont (i)}]
		\item\label{item:holevo-continuity} %
		$|\chi(\cN) - \chi(\cM)|\leq \eps \log |B| + g(\eps);$
		
		\item\label{item:classical-cap-continuity} %
		$|C(\cN) - C(\cM)| \leq 2\eps \log |B| + g(\eps),$
	\end{enumerate}
	where $g(\eps)$ is defined as in \eqref{eq:g-function}.
\end{theorem}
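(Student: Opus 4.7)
The plan is to prove both parts by Winter-style coupling arguments combined with properties of the Holevo quantity and its regularization. In both cases the diamond-norm hypothesis $\frac{1}{2}\|\cN-\cM\|_\diamond\leq\eps$ implies that for any cq input $\rho_{XA}$, the output cq states $\sigma_{XB}\coloneqq\cN(\rho_{XA})$ and $\tau_{XB}\coloneqq\cM(\rho_{XA})$ satisfy $\frac{1}{2}\|\sigma-\tau\|_1\leq\eps$, since $\rho_{XA}$ is an admissible input for the diamond norm.

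For part (i), I would fix a near-optimal input cq state for $\chi(\cN)$ and apply Winter's coupling to $\sigma$ and $\tau$: normalize the positive and negative parts of $\sigma-\tau$ to obtain cq states $\omega_\pm$, set $\delta=\frac{1}{2}\|\sigma-\tau\|_1\leq\eps$, and introduce the common cq reference state $\omega_{XB}\coloneqq(\sigma+\delta\omega_-)/(1+\delta)=(\tau+\delta\omega_+)/(1+\delta)$. The key ingredient is the ensemble-mixing identity for the Holevo quantity, $\chi(\lambda E_1+(1-\lambda)E_2)=\lambda\chi(E_1)+(1-\lambda)\chi(E_2)+\chi^\ast$ with $0\leq\chi^\ast\leq h(\lambda)$ (the binary entropy of the mixing weights), applied with $\lambda=1/(1+\delta)$ to the two decompositions of $\omega$. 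Subtracting the resulting expressions for $\chi(\sigma)$ and $\chi(\tau)$, using the algebraic identity $(1+\delta)h(\delta/(1+\delta))=g(\delta)$, and bounding $\chi(\omega_-)\geq 0$ and $\chi(\omega_+)\leq\log|B|$, should yield $|\chi(\sigma)-\chi(\tau)|\leq\eps\log|B|+g(\eps)$. Maximizing over the input ensemble then gives (i).

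For part (ii), the naive approach of applying (i) to $\cN^{\otimes n}$ produces a useless bound because the output dimension $|B|^n$ grows with $n$. Instead I would telescope at the level of the ensemble: interpolate between $\cN^{\otimes n}$ and $\cM^{\otimes n}$ by swapping one channel at a time along the chain $\cP_k\coloneqq\cN^{\otimes(n-k)}\otimes\cM^{\otimes k}$. The $j$-th swap alters only the $j$-th output register $B_j$, so the chain rule $I(X;B_1\cdots B_n)=I(X;B_{-j})+I(X;B_j\mid B_{-j})$ leaves the first term invariant, and the change in the second term is controlled by a tight Alicki--Fannes--Winter-type bound of the form $2\eps\log|B|+g(\eps)$ for the conditional mutual information of cq states. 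Summing the $n$ single-letter bounds and dividing by $n$, then passing $n\to\infty$, yields $|C(\cN)-C(\cM)|\leq 2\eps\log|B|+g(\eps)$.

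The main obstacle is establishing the cq-version of the Alicki--Fannes--Winter continuity bound with the correct coefficient $2$ in the classical-capacity setting; the fact that this coefficient is $2$ in (ii) but only $1$ in (i) reflects the structural difference between applying Winter coupling directly to a single-letter Holevo quantity and applying it through the chain rule to a conditional mutual information of a multi-letter output.
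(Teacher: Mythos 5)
First, note that the paper does not actually prove Theorem~\ref{prop:continuity}: it is imported verbatim from \cite{Shi15}, so your attempt must be measured against Shirokov's argument rather than anything in this paper.

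Your part (i) has a genuine gap at its key ingredient. The claimed two-sided mixing identity for the Holevo quantity, $\chi(\lambda E_1+(1-\lambda)E_2)=\lambda\chi(E_1)+(1-\lambda)\chi(E_2)+\chi^\ast$ with $0\leq\chi^\ast\leq h(\lambda)$, is false on the lower side for the mixtures you actually perform. Your coupling state $\omega=(\sigma+\delta\omega_-)/(1+\delta)=(\tau+\delta\omega_+)/(1+\delta)$ is obtained by mixing cq \emph{operators} on a common classical register, which merges the conditional states attached to the same label, and under such merging $I(X;B)$ is not concave. Concretely, take $\sigma_{XB}=\frac{1}{2}\left(|00\rangle\langle00|+|11\rangle\langle11|\right)$ and $\tau_{XB}=\frac{1}{2}\left(|01\rangle\langle01|+|10\rangle\langle10|\right)$: then $\delta=1$, $\omega_-=\tau$, $\omega=\pi_X\ox\pi_B$, so $\chi(\omega)=0$ while $\lambda\chi(\sigma)+(1-\lambda)\chi(\omega_-)=1$, i.e., $\chi^\ast=-h(1/2)$, the maximal violation. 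Concavity ($\chi^\ast\geq0$) \emph{does} hold when ensembles are mixed as measures with members kept distinct (e.g., via $\chi(\mu)=\min_{c}\int D(\rho\Vert c)\,d\mu(\rho)$), but in that setting the Jordan decomposition $\sigma+\delta\omega_-=\tau+\delta\omega_+$ is an identity of cq operators only, not of flagged ensembles, so the two ``decompositions of $\omega$'' are no longer decompositions of one common object and the subtraction step breaks. With the correct range $\chi^\ast\in[-h(\lambda),h(\lambda)]$, your algebra (which is otherwise fine, including $(1+\delta)h(\delta/(1+\delta))=g(\delta)$) yields only $|\chi(\cN)-\chi(\cM)|\leq\eps\log|B|+2g(\eps)$. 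Removing that factor of $2$ on $g(\eps)$ is precisely the nontrivial refinement of the Alicki--Fannes--Winter method contributed by \cite{Shi15}; it does not follow from the steps you wrote.

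Part (ii) is structurally on the right track: the Leung--Smith-style interpolation $\cN^{\ox(n-k)}\ox\cM^{\ox k}$, the chain-rule splitting with $I(X;B_{-j})$ invariant under the $j$-th swap (correct, since tracing out $B_j$ removes all dependence on the $j$-th channel), and the observation that consecutive hybrid states share marginals on the untouched systems are exactly the route of \cite{LS08,Shi15}, and indeed the same skeleton this paper uses in its own proof of Theorem~\ref{thm:triple}, where the per-step bound $2\eps\log|B|+g(\eps)$ is cited as \cite[Corollary~1]{Shi15}. But that per-step conditional-mutual-information bound with a \emph{single} $g(\eps)$ is again the crux: the generic coupling argument, applied to a functional whose upper and lower mixing defects are both $h(\lambda)$, only gives $2\eps\log|B|+2g(\eps)$. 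You explicitly flag this lemma as the ``main obstacle'' and give no argument for it, so (ii) is likewise not established. Incidentally, your closing heuristic is off: the coefficient $2$ versus $1$ in front of $\eps\log|B|$ is not a chain-rule artifact but reflects the range of the relevant functional ($I(X;B|C)$ ranges over an interval of length $2\log|B|$, whereas $I(X;B)$ of a cq state lies in $[0,\log|B|]$). In summary: both parts reduce the theorem to exactly the sharpened single-$g(\eps)$ continuity bounds that constitute Shirokov's contribution, and in part (i) the mechanism you propose for obtaining them relies on a false concavity statement.
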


From Theorem~\ref {prop:continuity} we can easily deduce the following result, which serves as the main mathematical tool in our discussion:

\begin{corollary}\label{thm:approximate-additivity}
	Let $\cN\colon A\to B$ be an arbitrary quantum channel, and let $\cM\colon A\to B$ be a quantum channel with weakly additive Holevo information, $\chi(\cM^{\ox n}) = n\chi(\cM)$ for all $n\in \mathbb{N}$.
	If $\frac{1}{2}\|\cN-\cM\|_\diamond \leq \eps$ for some $\eps\in [0,1]$, the classical capacity of $\cN$ can be bounded as
	\begin{align}
	C(\cN) &\leq \chi(\cM) + 2\eps \log|B| + g(\eps) \label{eq:bound-holevo-M}\\
	&\leq \chi(\cN) + 3\eps \log|B| + 2g(\eps),\label{eq:bound-holevo-N}
	\end{align}
	with $g(\eps)$ as defined in \eqref{eq:g-function}.
\end{corollary}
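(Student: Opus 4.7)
The plan is to prove the two bounds by directly invoking the two parts of Theorem~\ref{prop:continuity} together with the weak additivity hypothesis on $\cM$. Since the hypothesis tells us that $\chi(\cM^{\ox n})=n\chi(\cM)$ for all $n\in\bN$, dividing by $n$ and sending $n\to\infty$ in \eqref{eq:classical-capacity} gives $C(\cM)=\chi(\cM)$. This identification is the only place the weak additivity assumption enters.

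For \eqref{eq:bound-holevo-M}, I would apply the classical-capacity continuity bound, Theorem~\ref{prop:continuity}\eqref{item:classical-cap-continuity}, to the pair $(\cN,\cM)$, using the assumption $\tfrac{1}{2}\|\cN-\cM\|_\diamond\leq\eps$. This yields
\begin{align}
C(\cN) \leq C(\cM) + 2\eps\log|B| + g(\eps),
\end{align}
and substituting $C(\cM)=\chi(\cM)$ gives \eqref{eq:bound-holevo-M} directly. Note that one cannot get the same bound simply by applying the Holevo continuity bound to $\cN^{\ox n}$ and $\cM^{\ox n}$: although the diamond distance is subadditive under tensor products, this would blow up the error term by a factor of $n$ and give nothing useful after normalization. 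The strength here comes from already having a continuity statement for the capacity itself in Theorem~\ref{prop:continuity}\eqref{item:classical-cap-continuity}, which implicitly handles the regularization.

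For \eqref{eq:bound-holevo-N}, I would further bound $\chi(\cM)$ in terms of $\chi(\cN)$ using Theorem~\ref{prop:continuity}\eqref{item:holevo-continuity} applied to the same pair:
\begin{align}
\chi(\cM) \leq \chi(\cN) + \eps\log|B| + g(\eps).
\end{align}
Chaining this with \eqref{eq:bound-holevo-M} gives
\begin{align}
C(\cN) &\leq \chi(\cN) + \eps\log|B| + g(\eps) + 2\eps\log|B| + g(\eps) \nonumber \\
&= \chi(\cN) + 3\eps\log|B| + 2g(\eps),
\end{align}
as claimed in \eqref{eq:bound-holevo-N}.

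There is essentially no hard step: the proof is a two-line combination of the cited continuity bounds with the observation that weak additivity makes $C(\cM)=\chi(\cM)$. The only subtlety worth emphasizing is the one noted above, namely that a naive approach via continuity of $\chi$ on $n$-fold tensor powers fails, so one must use the capacity-level continuity bound of \textcite{Shi15} rather than trying to regularize a Holevo-information continuity bound by hand.
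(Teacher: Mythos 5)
Your proof is correct and follows exactly the route the paper intends: the paper presents this corollary as an easy deduction from Theorem~\ref{prop:continuity}, namely applying part~(\ref{item:classical-cap-continuity}) together with $C(\cM)=\chi(\cM)$ (from weak additivity) for \eqref{eq:bound-holevo-M}, and then part~(\ref{item:holevo-continuity}) to replace $\chi(\cM)$ by $\chi(\cN)$ for \eqref{eq:bound-holevo-N}. Your side remark about why a naive regularization of the Holevo continuity bound over $\cN^{\ox n}$ versus $\cM^{\ox n}$ would fail is accurate and a nice touch, but it does not change the fact that the argument matches the paper's.
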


\section{Channels with approximately additive Holevo information}\label{sec:approximate-channels}

\subsection{Approximately covariant channels}\label{sec:covariance}
In this subsection we define a notion of approximate covariance of a quantum channel, and we show how the assumptions of Corollary~\ref {thm:approximate-additivity} can be met using this concept.
First, we discuss channels that are covariant with respect to certain groups called \emph{unitary designs}, and show how the Holevo information of these channels becomes (weakly or strongly) additive.

A group $G$ is said to form a \emph{unitary $1$-design}, if there is a unitary representation $U_A(g)\in\cU(\cH_A)$ of $G$ on $\cH_A$ such that
\begin{align}
\frac{1}{|G|} \sum_{g\in G} U_A(g) \rho_A U_A(g)^\dagger = \pi_A
\end{align}
for all $\rho_A\in\cB(\cH_A)$, where $\pi_A = \frac{1}{|A|}\one_A$ denotes the completely mixed state on $\cH_A$.

A group $G$ is said to form a \emph{unitary $2$-design}, if there is a unitary representation $U_A(g)\in\cU(\cH_A)$ of $G$ on $\cH_A$ such that for all quantum channels $\Lambda\colon A\to A$,
\begin{multline}
\frac{1}{|G|} \sum_{g\in G} U_A(g) \Lambda\left(U_A(g)^\dagger\cdot U_A(g)\right) U_A(g)^\dagger \\ = \int_{\cU(\cH_A)} d\mu(U)\, U \Lambda\left(U^\dagger\cdot U\right) U^\dagger,
\end{multline}
where $d\mu(U)$ denotes the Haar measure on $\cU(\cH_A)$.
Equivalently \cite{DCEL09}, for all $\rho_{A'A}$ (with $A'\cong A$),
\begin{multline}
\frac{1}{|G|} \sum_{g\in G} (U_{A'}(g)\ox U_A(g))\rho_{A'A} (U_{A'}(g)\ox U_A(g))^\dagger \\ = \int_{\cU(\cH_A)} d\mu(U)\,(U\ox U)\rho_{A'A} (U\ox U)^\dagger.
\end{multline}

For covariant channels with a one-design as the input space representation, the formula for the Holevo information simplifies in the following way:
\begin{lemma}[\cite{Hol13}]\label{lem:1-design-holevo}
	Let $\cN\colon A\to B$ be a quantum channel that is covariant with respect to $\lbrace (U_A(g),V_B(g)) \rbrace_{g\in G}$, where the representation $U_A(g)\in\cU(\cH_A)$ is a one-design.
	Then,
	\begin{align}
	\chi(\mathcal{N})=S(\mathcal{N}(\pi))-\min_{\psi}S(\mathcal{N}(\psi)).
	\end{align}
\end{lemma}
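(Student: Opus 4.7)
The plan is to establish the two directions of the equality separately, working throughout with pure-state ensembles (which attain the Holevo maximum) and exploiting concavity of $\rho \mapsto S(\cN(\rho))$, so that its minimum over all states is also attained at a pure state.

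For the direction $\chi(\cN) \geq S(\cN(\pi)) - \min_\psi S(\cN(\psi))$, I fix any pure state $\psi_A$ and consider the orbit ensemble $\cE_\psi = \{1/|G|,\; U_A(g)\psi U_A(g)^\dagger\}_{g \in G}$. The one-design property collapses the average input to $\pi_A$, so the average output equals $\cN(\pi)$. Covariance rewrites each conditional output as $V_B(g)\cN(\psi)V_B(g)^\dagger$, and unitary invariance of $S$ makes each conditional output entropy equal to $S(\cN(\psi))$. Reading off the Holevo quantity gives $\chi(\cN, \cE_\psi) = S(\cN(\pi)) - S(\cN(\psi))$; optimizing over $\psi$ yields the lower bound.

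For the direction $\chi(\cN) \leq S(\cN(\pi)) - \min_\psi S(\cN(\psi))$, I take an arbitrary pure-state ensemble $\cE = \{p_x, \psi_x\}$ and compute $\chi(\cN,\cE) = S(\sigma) - \sum_x p_x S(\cN(\psi_x))$, where $\sigma := \sum_x p_x \cN(\psi_x)$. Using linearity, covariance, and the one-design property applied to the input $\sum_x p_x \psi_x$, I would write
\begin{align}
\cN(\pi) = \frac{1}{|G|}\sum_g V_B(g)\,\sigma\,V_B(g)^\dagger,
\end{align}
so that concavity and unitary invariance of von Neumann entropy yield $S(\sigma) \leq S(\cN(\pi))$. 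Bounding $\sum_x p_x S(\cN(\psi_x)) \geq \min_\psi S(\cN(\psi))$ then gives $\chi(\cN,\cE) \leq S(\cN(\pi)) - \min_\psi S(\cN(\psi))$, and the supremum over $\cE$ completes the argument.

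The main obstacle is the inequality $S(\sigma) \leq S(\cN(\pi))$ in the upper bound: a priori, the arbitrary mixture $\sigma$ of channel outputs has no obvious relation to the distinguished state $\cN(\pi)$. The key observation is that the covariance identity combined with the one-design average realizes $\cN(\pi)$ as the output twirl of $\sigma$ under $\{V_B(g)\}$, turning the inequality into an immediate consequence of concavity of entropy. Everything else follows routinely from covariance and unitary invariance of $S$.
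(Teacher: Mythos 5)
Your proof is correct: the orbit ensemble $\{1/|G|,\, \mathcal{U}_A^g(\psi)\}_{g\in G}$ together with the one-design property and covariance gives $\chi(\cN)\geq S(\cN(\pi))-S(\cN(\psi))$ for every pure $\psi$, and your key observation for the converse—that $\cN(\pi)$ is the $\{V_B(g)\}$-twirl of the average output $\sigma$ of \emph{any} ensemble, so that concavity and unitary invariance of $S$ yield $S(\sigma)\leq S(\cN(\pi))$—is exactly what is needed, with the restriction to pure ensembles justified as you note. The paper itself states this lemma without proof, deferring to \cite{Hol13}, and your argument is essentially the standard one found there, so there is nothing to flag.
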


For covariant qubit-qubit channels with a one-design as the output space representation, the Holevo information is weakly additive:
\begin{lemma}[\cite{Kin02}]\label{lem:1-design-additivity}
	Let $\cN\colon A\to B$ be a qubit-qubit channel, $|A|=|B|=2$, that is covariant with respect to $\lbrace (U_A(g),V_B(g)) \rbrace_{g\in G}$, and where $V_B(g)\in\cU(\cH_B)$ is a one-design.
	Then the Holevo information of $\cN$ is weakly additive,
	\begin{align}
	\chi(\cN^{\ox n}) = n\chi(\cN)\qquad \text{for all $n\in\mathbb{N}$}.
	\end{align}
	Consequently, $C(\cN) = \chi(\cN)$ for such covariant channels.
\end{lemma}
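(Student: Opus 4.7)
The plan is to reduce the statement to King's theorem \cite{Kin02} on the additivity of the minimum output entropy (and hence the Holevo information) for unital qubit channels. The key observation is that covariance with a one-design on the \emph{output} side forces $\cN$ to be unital, at which point King's result applies directly and even gives strong additivity.

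First, I would establish that $\cN(\pi_A) = \pi_B$. The maximally mixed input $\pi_A = \one_A/|A|$ is fixed by conjugation by every unitary, so $\cU_A^g(\pi_A) = \pi_A$ for each $g\in G$. Averaging the identity $\cN(\pi_A) = \cN(\cU_A^g(\pi_A))$ over $g$ and using covariance gives
\begin{align}
\cN(\pi_A) = \frac{1}{|G|}\sum_{g\in G} \cN\!\left(\cU_A^g(\pi_A)\right) = \frac{1}{|G|}\sum_{g\in G} V_B(g)\,\cN(\pi_A)\,V_B(g)^\dagger.
\end{align}
Since $\{V_B(g)\}_{g\in G}$ is a unitary one-design on $\cH_B$, the right-hand side equals $\pi_B$, so $\cN$ is unital.

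Second, I would invoke King's theorem \cite{Kin02}, which states that for every unital qubit channel $\cN$ and every quantum channel $\cM$ on any system, $\chi(\cN\otimes \cM) = \chi(\cN) + \chi(\cM)$. Specializing to $\cM = \cN^{\otimes(n-1)}$ and iterating yields $\chi(\cN^{\otimes n}) = n\,\chi(\cN)$ for all $n\in\bN$. The classical capacity formula \eqref{eq:classical-capacity} then immediately collapses to $C(\cN) = \chi(\cN)$.

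There is no genuine obstacle in carrying out the proof beyond King's theorem itself; the only original content is the short averaging argument promoting an output one-design plus covariance to unitality. The one subtlety worth flagging is that the hypothesis does \emph{not} require the input representation $\{U_A(g)\}$ to be a one-design, so one cannot appeal to Lemma~\ref{lem:1-design-holevo} to simplify $\chi(\cN)$; the additivity nonetheless follows because King's result requires only unitality in the qubit-to-qubit setting.
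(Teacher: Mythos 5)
Your proposal is correct and matches the paper's own proof essentially step for step: the same averaging argument (covariance plus the output one-design forces $\cN(\pi_A)=\pi_B$, i.e., unitality) followed by an appeal to King's strong additivity theorem for unital qubit channels. Your added remark that the input representation need not be a one-design is accurate and consistent with the hypotheses of the lemma.
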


\begin{proof}
	First, let $A$ and $B$ be arbitrary systems with $|A|=|B|$ (i.e., not necessarily qubits), and let $\cN\colon A\to B$ be a channel that is covariant with respect to $\lbrace (U_A(g),V_B(g)) \rbrace_{g\in G}$, where the representation $V_B(g)\in\cU(\cH_B)$ is a one-design.
	Then $\cN$ is unital, $\cN(\one_A) = \one_B$:
	\begin{align}
	\cN(\one_A) &= |A|\, \cN(\pi_A)\\
	&= \frac{|A|}{|G|} \sum_{g\in G} \cN\left( U_A(g) \pi_A U_A(g)^\dagger \right)\\
	&= \frac{|A|}{|G|} \sum_{g\in G} V_B(g)\cN\left(\pi_A \right) V_B(g)^\dagger\\
	&= |A|\pi_B\\
	&= \one_B.
	\end{align}
	For qubits $A$ and $B$ (i.e., $|A|=|B|=2$), the result now follows from King's result about unital qubit channels \cite{Kin02}.
\end{proof}

For covariant quantum channels $\cN\colon A\to B$, where $A$ and $B$ are isomorphic, $d$-dimensional systems, the Holevo information is additive if the group representations are unitary two-designs.
This result is a direct consequence of \cite{DCEL09,MGE11,Kin03}, and we give a proof in Appendix~\ref {sec:bitwirl} for the sake of completeness.
\begin{lemma}\label{lem:2-design-additivity}
	Let $A$ and $B$ be isomorphic $d$-dimensional quantum systems, $A\cong B$, let $G$ be a group, and let $\cN\colon A\to B$ be a quantum channel that is covariant with respect to a unitary representation $U_A(g)\in\cU(\cH_A)$ of $G$ on $\cH_A$ resp.~$\cH_B$ that is a unitary two-design.
	Then $\cN$ is an $|A|$-dimensional depolarizing channel, and hence its Holevo information is strongly additive,
	\begin{align}
	\chi(\cN\ox \cM) = \chi(\cN) + \chi(\cM)
	\end{align}
	for an arbitrary quantum channel $\cM\colon A'\to B'$.
	Consequently, $C(\cN) = \chi(\cN)$ for such covariant channels.
\end{lemma}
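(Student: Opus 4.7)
The plan is to use covariance together with the 2-design property to pin down the Choi operator of $\cN$ to a two-parameter family, recognize the corresponding channel as depolarizing, and then invoke King's strong additivity result. Let $N_{RB} \coloneqq (\id_R \ox \cN_{A\to B})(|\gamma\rangle\langle\gamma|_{RA})$ denote the Choi operator of $\cN$, with $R \cong A$. Combining the ``ricochet'' identity $(M_R \ox \one_A)|\gamma\rangle_{RA} = (\one_R \ox M_A^T)|\gamma\rangle_{RA}$ with the covariance condition $\cN(U(g)\rho\, U(g)^\dagger) = U(g)\cN(\rho)U(g)^\dagger$, I would show the Choi-level symmetry
\begin{align}
(\overline{U}(g)_R \ox U(g)_B)\, N_{RB}\, (\overline{U}(g)_R \ox U(g)_B)^\dagger = N_{RB}
\end{align}
for every $g \in G$.

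Averaging this identity over $g \in G$ and invoking the 2-design property of $\{U(g)\}$ --- after applying a partial transpose on the second tensor factor, which intertwines the defining $U^{\ox 2}$-twirl with the $\overline{U}\ox U$-twirl --- yields
\begin{align}
\int dU\, (\overline{U} \ox U)\, N_{RB}\, (\overline{U} \ox U)^\dagger = N_{RB},
\end{align}
with integration against the Haar measure. Hence $N_{RB}$ lies in the commutant of the $\cU(\cH_A)$-representation $U \mapsto \overline{U} \ox U$ on $\cH_R \ox \cH_B$, which decomposes as trivial $\oplus$ adjoint; by Schur's lemma this commutant is two-dimensional and spanned by $\one_{RB}$ and $|\gamma\rangle\langle\gamma|_{RB}$. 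Therefore $N_{RB} = \alpha\,\one_{RB} + \beta\,|\gamma\rangle\langle\gamma|_{RB}$, with scalars determined by $N_{RB} \geq 0$ and $\tr_B N_{RB} = \one_R$; inverting the Choi-Jamio{\l}kowski correspondence identifies $\cN$ as a $d$-dimensional depolarizing channel $\rho \mapsto p\rho + (1-p)\pi$ with $p$ read off from $\alpha,\beta$.

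The conclusion then follows from King's theorem \cite{Kin03}, which establishes that every depolarizing channel has strongly additive Holevo information: $\chi(\cN \ox \cM) = \chi(\cN) + \chi(\cM)$ for any quantum channel $\cM$. Iterating gives $\chi(\cN^{\ox n}) = n\chi(\cN)$, and the regularized formula \eqref{eq:classical-capacity} yields $C(\cN) = \chi(\cN)$. The only delicate part is the first step: careful bookkeeping with conjugates and transposes when passing from channel covariance to a symmetry of $N_{RB}$, together with the check that the standard $U^{\ox 2}$-twirl definition of a 2-design really does imply the $\overline{U}\ox U$-twirl property used to invoke Schur. Both are routine given the ricochet identity and the partial-transpose intertwiner, and the references \cite{DCEL09,MGE11,Kin03} cited for the statement provide the underlying ingredients.
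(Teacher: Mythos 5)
Your proposal is correct and follows essentially the same route as the paper: covariance plus the transpose trick yields invariance of the Choi operator under $\bar{U}(g)\ox U(g)$, the two-design assumption (transferred from the $U\ox U$-twirl to the $\bar{U}\ox U$-twirl by the partial-transpose intertwining you describe) upgrades this to Haar-twirl invariance, the twirled operator is pinned down to the two-parameter family $\alpha\,\one_{AB}+\beta\,\gamma_{AB}$ and recognized as the Choi operator of a $d$-dimensional depolarizing channel, and King's theorem \cite{Kin03} then gives strong additivity and $C(\cN)=\chi(\cN)$. The only cosmetic difference is that the paper extracts the form $x\,\one+y\,\Phi$ from the explicit Vollbrecht--Werner twirl formula (Lemma~\ref{lem:bitwirl}, itself proved via Schur--Weyl duality and the same partial-transpose trick), whereas you invoke Schur's lemma on the trivial-plus-adjoint decomposition of the $\bar{U}\ox U$ representation directly and fix the coefficients from positivity and $\tr_B N_{AB}=\one_A$.
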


We now introduce a notion of approximate covariance.
For a group $G$ with unitary representations $U_A(g)\in\cU(\cH_A)$ on $\cH_A$ and $V_B(g)\in\cU(\cH_B)$ on $\cH_B$, respectively, and an arbitrary quantum channel $\cN\colon A\to B$, the \emph{twirled channel} $\cN_G$ of $\cN$ is defined as
\begin{align}
\cN_G \coloneqq \frac{1}{|G|} \sum_{g\in G} V_B(g)^\dagger \cN( U_A(g) \cdot U_A(g)^\dagger) V_B(g).
\label{eq:twirled-channel}
\end{align}
This twirled channel $\cN_G$ is covariant with respect to $\lbrace (U_A(g),V_B(g))\rbrace_{g\in G}$ by construction.
Our notion of approximate covariance of a quantum channel is based on how close the channel is in diamond norm to its twirled channel:

\begin{definition}[Approximate covariance]\label{def:approximate-covariance}
		We fix a group $G$ with unitary representations $U_A(g)\in\cU(\cH_A)$ on $\cH_A$ and $V_B(g)\in\cU(\cH_B)$ on $\cH_B$.
		For a given $\eps\in [0,1]$, we call a channel $\cN$ \emph{$\eps$-covariant with respect to $\lbrace (U_A(g),V_B(g)) \rbrace_{g\in G}$}, if
		\begin{align}
		\frac{1}{2}\| \cN - \cN_G\|_\diamond \leq \eps.
		\end{align}
		We define the \emph{covariance parameter} $\cov_G(\cN)$ as the smallest $\eps\geq0$ such that $\cN$ is $\eps$-covariant with respect to the given representations of $G$.
\end{definition}

For given representations of a group $G$, the covariance parameter $\cov_G(\cN)$ can be efficiently computed using the SDP in \eqref{eq:diamondnorm-SDP} for the diamond norm. 
The Choi operator $N_{AB}^G$ of the twirled channel $\cN_G$ can be obtained from the Choi operator $N_{AB}$ via the relation
\begin{align}
N_{AB}^G = \frac{1}{|G|} \sum_{g\in G} (\bar{U}_A(g)\ox V_B(g)) N_{AB} (\bar{U}_A(g)\ox V_B(g))^\dagger. \label{eq:twirled-choi-state}
\end{align}

By Lemma~\ref {lem:1-design-additivity}, the Holevo information of qubit-qubit channels that are covariant with respect to one-designs is weakly additive.
More generally, quantum channels acting on a $d$-dimensional system which are covariant with respect to a two-design have strongly additive Holevo information, since these channels are $d$-dimensional depolarizing channels (Lemma~\ref {lem:2-design-additivity}).
In view of Definition~\ref {def:approximate-covariance}, we can therefore apply Corollary~\ref {thm:approximate-additivity} to any quantum channel once we consider suitable groups (or rather representations thereof) and the corresponding twirled channel.

It is well known that the ``projective'' \emph{Pauli group}\footnote{Since for example $XY = iZ$, one needs to add the phases $\pm1, \pm i$ to turn $\cP$ into a group. However, under the action of $\cP$ by conjugation these phases drop out, and it suffices to just consider $\cP$.} $\cP = \lbrace \one, X, Y, Z\rbrace$ forms a one-design,
\begin{align}
\frac{1}{4} (\rho + X\rho X + Y \rho Y + Z\rho Z) = \frac{1}{2} \one
\end{align}
for all $\rho\in \cB(\mathbb{C}^2)$, where $X,Y,Z$ are the usual Pauli matrices.
For an $n$-qubit system $A$ (i.e., $|A|=2^n$), the Pauli group $\cP_n$ consists of all $n$-fold tensor products of elements in $\cP$, i.e., $\cP_n\coloneqq \cP^{\ox n}$. 
The ``projective'' \emph{Clifford group} $\cC_n$ is defined as the normalizer of $\cP_n$ in $\cU(\cH_A)$, that is,
\begin{align}
\cC_n = \lbrace U\in \cU(\cH_A)\colon U \cP_n U^\dagger = \cP_n\rbrace.
\end{align}
The Clifford group $\cC_n$ forms a unitary two-design \cite{DLT02} (see also \cite{DCEL09}).
More generally, the Clifford group $\cC^{\text{HW}}_d$ of the Heisenberg-Weyl group %
acting on a $d$-dimensional system is also a unitary two-design \cite{CLLW16}.

Combining the above observations with Lemma~\ref {lem:1-design-additivity}, Lemma~\ref {lem:2-design-additivity}, and Corollary~\ref {thm:approximate-additivity}, we obtain the following:
\begin{corollary}\label{cor:cov-upper-bounds}
   Let $A$ and $B$ be quantum systems with $|A|=|B|$, and let $\cN\colon A\to B$ be an arbitrary quantum channel.
   \begin{enumerate}[{\normalfont (i)}]
   	\item\label{item:cov-qubits} For $|A|=|B|=2$, the choices $G=\cP$ and $\eps = \cov_\cP(\cN)$ yield
   	\begin{align}
   	C(\cN) &\leq \chi(\cN_\cP) + 2\eps + g(\eps)\\
   	&\leq \chi(\cN) + 3\eps + 2g(\eps),
   	\end{align}
   	where $g(\eps)$ is defined through \eqref{eq:g-function}.
   	  	
   	\item\label{item:n-qubits} For $|A|=|B|=2^n$, the choices $G = \cC_n$ and $\eps = \cov_{\cC_n}(\cN)$ yield
   	\begin{align}
   	C(\cN) &\leq \chi(\cN_{\cC_n}) + 2n\eps + g(\eps)\\
   	&\leq \chi(\cN) + 3n\eps + 2g(\eps).
   	\end{align}
   	   	
   	\item\label{cov:qudits} For $|A|=|B|=d$, the choices $G = \cC^{\mathrm{HW}}_d$ and $\eps = \cov_{\cC^{\mathrm{HW}}_d}(\cN)$ yield
   	\begin{align}
   	C(\cN) &\leq \chi(\cN_{\cC^{\mathrm{HW}}_d}) + 2\eps\log d + g(\eps)\\
   	&\leq \chi(\cN) + 3\eps\log d + 2 g(\eps).
   	\end{align}
   \end{enumerate}
\end{corollary}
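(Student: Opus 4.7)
The plan is to verify the hypotheses of Corollary~\ref{thm:approximate-additivity} for $\cM = \cN_G$ in each of the three cases, with $\eps = \cov_G(\cN)$, and then read off the stated bounds. Two things need to be checked: (a) that $\frac{1}{2}\|\cN - \cN_G\|_\diamond \leq \eps$, which is immediate from Definition~\ref{def:approximate-covariance}; and (b) that $\cN_G$ has weakly additive Holevo information, so that the continuity-based upper bound of Corollary~\ref{thm:approximate-additivity} applies.

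For case \eqref{item:cov-qubits}, I would use the standard fact (recalled just before the corollary) that the Pauli group $\cP$ is a one-design on a qubit, with the natural representation $U_A(g)=V_B(g)=g$ on both input and output. By construction $\cN_\cP$ is covariant with respect to this pair of representations, so Lemma~\ref{lem:1-design-additivity} applies and gives $\chi(\cN_\cP^{\ox n}) = n \chi(\cN_\cP)$ for all $n \in \bN$. Plugging $\cM = \cN_\cP$ and $|B|=2$ (so $\log|B|=1$) into \eqref{eq:bound-holevo-M} and \eqref{eq:bound-holevo-N} of Corollary~\ref{thm:approximate-additivity} yields the pair of inequalities claimed.

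For cases \eqref{item:n-qubits} and \eqref{cov:qudits}, I would invoke the fact stated in the preceding paragraph of the paper that $\cC_n$ is a unitary two-design on $n$ qubits \cite{DLT02,DCEL09} and that $\cC^{\mathrm{HW}}_d$ is a unitary two-design on a qudit \cite{CLLW16}, again with the natural representation on both input and output systems. Then the twirled channels $\cN_{\cC_n}$ and $\cN_{\cC^{\mathrm{HW}}_d}$ are covariant with respect to a two-design on isomorphic input/output systems, so Lemma~\ref{lem:2-design-additivity} identifies each of them as a depolarizing channel and hence as having strongly additive (in particular weakly additive) Holevo information. Corollary~\ref{thm:approximate-additivity} then applies with $|B|=2^n$ (so $\log|B|=n$) and $|B|=d$ (so $\log|B|=\log d$) respectively, which produces exactly the stated bounds.

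There is no substantive obstacle here, since the corollary is essentially a packaging statement: the nontrivial content has already been established in Lemmas~\ref{lem:1-design-additivity} and \ref{lem:2-design-additivity} together with Corollary~\ref{thm:approximate-additivity}. The only point that warrants a brief remark is that the $\eps$ appearing in Corollary~\ref{thm:approximate-additivity} applies to \emph{any} channel $\cM$ satisfying weak additivity, so the freedom to choose $\cM = \cN_G$ (rather than optimizing over all additive channels) is what makes the bound computable by the SDP of \eqref{eq:diamondnorm-SDP} combined with the twirling formula \eqref{eq:twirled-choi-state}.
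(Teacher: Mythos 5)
Your proposal is correct and matches the paper's own (implicit) argument exactly: the paper states this corollary as an immediate combination of Lemma~\ref{lem:1-design-additivity} (Pauli one-design, qubit case), Lemma~\ref{lem:2-design-additivity} (Clifford two-designs, cases (ii) and (iii)), and Corollary~\ref{thm:approximate-additivity} applied to $\cM = \cN_G$ with $\eps = \cov_G(\cN)$ and $\log|B| \in \{1, n, \log d\}$, which is precisely the packaging you describe. Your closing remark on why choosing $\cM = \cN_G$ keeps the bound SDP-computable is a fair observation consistent with the paper's discussion following Definition~\ref{def:approximate-covariance}.
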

Note that in all three cases of Corollary~\ref {cor:cov-upper-bounds}, the Holevo information $\chi(\cN_G)$ of the twirled channel $\cN_G$ for $G\in \lbrace \cP, \cC_n, \cC^{\text{HW}}_d\rbrace$ can be computed via Lemma~\ref {lem:1-design-holevo}.

\subsection{Approximately entanglement-breaking channels}\label{sec:entanglement-breaking}

A channel $\cN\colon A\to B$ is called \emph{entanglement-breaking} \cite{HSR03}, if $(\id_R\ox \cN)(\rho_{RA})$ is separable for all auxiliary quantum systems $R$ and input states $\rho_{RA}$.
Equivalently, $\cN$ is entanglement-breaking if and only if its Choi operator $N_{AB}$ is separable.
Entanglement-breaking channels are an important class of quantum channels with strongly additive Holevo information \cite{Sho02b}:
If $\cN\colon A\to B$ is entanglement-breaking, then
\begin{align}
\chi(\cN\ox \cM) = \chi(\cN) + \chi(\cM)
\label{eq:eb-additivity}
\end{align}
for any arbitrary quantum channel $\cM\colon A'\to B'$.
In particular, we have $\chi(\cN^{\ox n}) = n\chi(\cN)$ for entanglement-breaking channels, and thus $C(\cN) = \chi(\cN)$ by \eqref{eq:classical-capacity}.
Hence, an appropriate notion of an approximately entanglement-breaking channel, together with \eqref{eq:eb-additivity}, provides another way of bounding the classical capacity of an arbitrary quantum channel via Corollary~\ref {thm:approximate-additivity}.

Fix quantum systems $A$ and $B$, and let $\kE \equiv \kE(A\to B) = \lbrace \cN\colon A\to B\mid N_{AB}\text{ is separable} \rbrace$ denote the set of all entanglement-breaking channels.

\begin{definition}\label{def:approximate-entanglement-breaking}
	For a given $\eps\in [0,1]$, a quantum channel $\cN\colon A\to B$ is called \emph{$\eps$-entanglement-breaking}, if
	\begin{align}
	\min_{\cM\in\kE} \frac{1}{2}\| \cN - \cM \|_\diamond \leq \eps.
	\end{align}
	We define the \emph{entanglement-breaking parameter} $\eb(\cN)$ to be the smallest $\eps\in[0,1]$ such that $\cN$ is $\eps$-entanglement-breaking.
\end{definition}

With this definition, we have the following application of Corollary~\ref {thm:approximate-additivity}:

\begin{corollary}\label{cor:eb-upper-bound}
	Let $\cN\colon A\to B$ be a quantum channel, and set $\eps\coloneqq \eb(\cN)$.
	Then, 
	\begin{align}
	C(\cN) &\leq \chi(\cM) + 2\eps\log|B| + g(\eps)\\
	&\leq \chi(\cN) + 3\eps\log|B| + 2g(\eps),
	\end{align}
	where the function $g(\eps)$ is defined through \eqref{eq:g-function}, and $\cM\colon A\to B$ is the entanglement-breaking channel achieving the minimum in Definition~\ref {def:approximate-entanglement-breaking}.
\end{corollary}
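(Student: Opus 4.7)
The plan is to observe that Corollary~\ref{cor:eb-upper-bound} is essentially an immediate specialization of Corollary~\ref{thm:approximate-additivity} to the class of entanglement-breaking channels. Once I identify a concrete channel $\cM$ that both (a) is entanglement-breaking and (b) is within diamond distance $\eps$ of $\cN$, the two displayed inequalities are simply the two conclusions \eqref{eq:bound-holevo-M} and \eqref{eq:bound-holevo-N} of that Corollary. So the proof is really a matter of verifying the hypotheses, not a new calculation.

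First I would argue existence of an optimizer. The set $\kE(A\to B)$ is characterized by separability of the Choi operator together with the trace-preserving condition $\tr_B N_{AB} = \one_A$; since the set of separable states on a finite-dimensional bipartite space is compact and the trace-preservation constraint is closed, $\kE$ is compact. The map $\cM\mapsto \frac{1}{2}\|\cN-\cM\|_\diamond$ is continuous (indeed convex), so the infimum defining $\eb(\cN)$ is attained: there is an entanglement-breaking channel $\cM\colon A\to B$ with $\frac{1}{2}\|\cN-\cM\|_\diamond = \eps$.

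Second, I would invoke Shor's additivity theorem for entanglement-breaking channels \cite{Sho02b}, recorded as \eqref{eq:eb-additivity}, to obtain weak additivity of $\chi(\cM)$. Applied iteratively with $\cM' = \cM^{\otimes(n-1)}$, it gives $\chi(\cM^{\otimes n}) = n\chi(\cM)$ for all $n\in\mathbb{N}$, so $\cM$ satisfies the hypothesis of Corollary~\ref{thm:approximate-additivity} on the channel being approximated.

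Finally, I would plug $\cM$ and $\eps = \eb(\cN)$ into Corollary~\ref{thm:approximate-additivity}. Inequality \eqref{eq:bound-holevo-M} of that Corollary yields $C(\cN)\leq \chi(\cM) + 2\eps\log|B| + g(\eps)$, and inequality \eqref{eq:bound-holevo-N} yields $C(\cN)\leq \chi(\cN) + 3\eps\log|B| + 2g(\eps)$, which together are the claim. There is no genuine obstacle here; the only mildly substantive step is the compactness argument justifying that the minimum in Definition~\ref{def:approximate-entanglement-breaking} is actually achieved by some $\cM\in\kE$ (so that one can speak of ``the entanglement-breaking channel achieving the minimum''), and everything else is invoking results already recorded in the excerpt.
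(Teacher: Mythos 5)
Your proposal is correct and follows exactly the route the paper intends: Corollary~\ref{cor:eb-upper-bound} is stated as a direct application of Corollary~\ref{thm:approximate-additivity}, with weak additivity of $\chi(\cM)$ supplied by Shor's strong additivity result \eqref{eq:eb-additivity} applied iteratively, which is precisely your argument. Your compactness justification for attainment of the minimum in Definition~\ref{def:approximate-entanglement-breaking} is a sound addition that the paper leaves tacit.
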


If $|A||B|\leq 6$, a state $\rho_{AB}$ is separable if and only if it has positive partial transpose (PPT) \cite{HHH96}, i.e., $\rho_{AB}^{T_B}\geq 0$, where $T_B$ denotes the transpose on the $B$ system.
Hence, in low dimensions we can efficiently compute $\eb(\cN)$:
\begin{lemma}
	Let $\cN\colon A\to B$ be a quantum channel.
	If $|A||B|\leq 6$, then the entanglement-breaking parameter $\eb(\cN)$ is the solution to the following SDP:
	\begin{align}
	\begin{aligned}
	\text{\normalfont minimize: } & \mu\\
	\text{\normalfont subject to: } & \tr_BZ_{AB} \leq \mu \one_A\\
	& Z_{AB} \geq N_{AB} - M_{AB}\\
	& Z_{AB} \geq 0\\
	& M_{AB} \geq 0\\
	& \tr_BM_{AB} = \one_A\\
	& M_{AB}^{T_B} \geq 0.
	\end{aligned}
	\label{eq:EB-SDP}
	\end{align}
\end{lemma}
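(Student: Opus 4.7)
The plan is to combine the SDP characterization of the diamond norm from \eqref{eq:diamondnorm-SDP} with the Horodecki PPT criterion for separability in small dimensions, treating the Choi operator of the unknown entanglement-breaking channel as an additional optimization variable.

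First I would recall that by Definition~\ref{def:approximate-entanglement-breaking}, the entanglement-breaking parameter is
\begin{align}
\eb(\cN) = \min_{\cM\in\kE}\, \tfrac{1}{2}\|\cN - \cM\|_\diamond.
\end{align}
By the Choi-Jamiołkowski correspondence, minimizing over $\cM\in\kE(A\to B)$ is the same as minimizing over bipartite operators $M_{AB}$ satisfying the channel constraints $M_{AB}\geq 0$ and $\tr_B M_{AB} = \one_A$, together with the constraint that $M_{AB}$ is separable (since $\cM$ is entanglement-breaking iff its Choi operator is separable, as recalled in Section~\ref{sec:entanglement-breaking}).

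Next I would invoke the Horodecki PPT criterion \cite{HHH96}: when $|A||B|\leq 6$, a bipartite state $M_{AB}$ is separable if and only if $M_{AB}^{T_B}\geq 0$. This replaces the intractable separability constraint with the semidefinite constraint $M_{AB}^{T_B}\geq 0$, keeping the whole program an SDP.

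Finally I would substitute the SDP \eqref{eq:diamondnorm-SDP} for $\frac{1}{2}\|\cN-\cM\|_\diamond$, treating $M_{AB}$ no longer as fixed but as an optimization variable subject to the three constraints above (positivity, trace-preservation, PPT). Since both the outer minimization over $\cM\in\kE$ and the inner minimization defining the diamond norm are minimizations over semidefinite constraints, the two programs merge into a single SDP with variables $\mu$, $Z_{AB}$, and $M_{AB}$, yielding exactly \eqref{eq:EB-SDP}. There is no real obstacle here: the main conceptual point is just verifying that it is legitimate to turn the parameter $M_{AB}$ of the inner SDP into a free variable, which follows because the constraints coupling $Z_{AB}$ and $M_{AB}$ (namely $Z_{AB}\geq N_{AB}-M_{AB}$) remain linear when $M_{AB}$ varies, preserving the SDP structure. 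The restriction $|A||B|\leq 6$ is essential only in this last step; without it, separability is not captured by any finite set of semidefinite constraints, and one would have to settle for an SDP \emph{outer relaxation} of $\eb(\cN)$ based on PPT alone.
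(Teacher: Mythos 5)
Your proposal is correct and takes essentially the same route as the paper's own proof: both replace the separability constraint on the Choi operator $M_{AB}$ of the entanglement-breaking channel by the PPT constraint $M_{AB}^{T_B}\geq 0$, which is equivalent for $|A||B|\leq 6$ by the Horodecki criterion \cite{HHH96}, and then fold $M_{AB}$ into the diamond-norm SDP \eqref{eq:diamondnorm-SDP} as a free variable to obtain \eqref{eq:EB-SDP}. Your added observation that the coupling constraint $Z_{AB}\geq N_{AB}-M_{AB}$ remains linear in $M_{AB}$, so the merged program is still an SDP, is a routine but correct verification left implicit in the paper.
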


\begin{proof}
	The constraint $M_{AB}^{T_B} \geq 0$ forces the operator $M_{AB}$ to be PPT, which by the assumption $|A||B|\leq 6$ is equivalent to $M_{AB}$ being separable.
	Hence, $M_{AB}$ corresponds to the Choi operator of an entanglement-breaking channel $\cM\colon A\to B$, and the SDP in \eqref{eq:EB-SDP} computes $\eb(\cN)$.
\end{proof}

\subsection{Approximately Hadamard channels}\label{sec:hadamard}

A quantum channel $\cN\colon A \to B$ is called \emph{Hadamard} if its complementary channel $\cN_c\colon A\to E$ is entanglement-breaking.
Since a channel is entanglement-breaking if and only if its Kraus operators are rank-one operators \cite{HSR03}, we can parametrize a Hadamard channel as follows (cf.~\cite{Wil13}): 
Let $K$ be the Kraus rank of $\cN$, i.e., the minimal number of Kraus operators, which is equal to the smallest dimension of the environment $|E|$ \cite{Wol12}. 
Let $\lbrace |\vphi_k\rangle_{E}\rbrace_{k=1}^K$ be a collection of normalized vectors on $\cH_E$, $\langle \vphi_k|\vphi_k\rangle = 1$ for all $k=1,\dots,K$, let $\cS=\lbrace |\psi_k\rangle_A\rbrace_{k=1}^K$ be an overcomplete set of vectors on $\cH_A$ (such that $\sum_{k=1}^K |\psi_k\rangle\langle \psi_k|_A = \one_A$), and let $\lbrace |i\rangle_B\rbrace_{i=1}^{|B|}$ be an orthonormal basis for $\cH_B$.\footnote{For simplicity, we assume $|B|=K=|E|$ here, though this is not necessary.}
These sets of vectors give rise to a Hadamard channel $\cN\colon A\to B$, whose action on an input state $\rho_A$ is given by
\begin{align}
\cN(\rho_A) = \sum_{k,\,l} \langle \psi_k| \rho_A |\psi_l\rangle \langle \vphi_l|\vphi_k\rangle_E |k\rangle \langle l|_B.
\label{eq:hadamard}
\end{align}
Define $[\rho_A]_{\cS}$ as the matrix representation of $\rho_A$ with respect to $\cS=\lbrace |\psi_k\rangle_A\rbrace_{k=1}^K$, i.e., $([\rho_A]_{\psi})_{ij} = \langle \psi_i|\rho_A|\psi_j\rangle$.
Furthermore, define the matrix $(\Gamma)_{ij} = \langle \vphi_i|\vphi_j\rangle_E$.
Then \eqref{eq:hadamard} can be rewritten as
\begin{align}
\cN(\rho_A) = \Gamma^\dagger \ast [\rho_A]_{\cS},
\label{eq:hadamard-product}
\end{align}
where $\ast$ denotes the Hadamard product (viz.~element-wise product) of matrices: $(A\ast B)_{ij} \coloneqq (A)_{ij} (B)_{ij}$.
Equation \eqref{eq:hadamard-product} is the reason such channels are called Hadamard channels.

Hadamard channels are another class of quantum channels with strongly additive Holevo information \cite{Kin06,KMNR07}:
If $\cN$ is a Hadamard channel, then for any other channel $\cM$ we have
\begin{align}
\chi(\cN\ox \cM) = \chi(\cN) + \chi(\cM).\label{eq:hadamard-additivity}
\end{align}
Hence, an appropriate notion of approximately Hadamard channels leads to another application of Corollary~\ref {thm:approximate-additivity} for bounding the classical capacity of arbitrary quantum channels.

In what follows, we define two notions of  approximately Hadamard channels, one called $\varepsilon$-close Hadamard channels and the other called
$\varepsilon$-Hadamard channels. 
To define $\varepsilon$-close Hadamard channels,
fix quantum systems $A$ and $B$ and let $\kH$ be the set of Hadamard channels from $A$ to~$B$.

\begin{definition}\label{def:approximate-hadamard}
	For a given $\eps\in[0,1]$, a quantum channel $\cN\colon A\to B$ is called $\eps$-close-Hadamard, if
	\begin{align}
	\min_{\cM\in \kH}\frac{1}{2}\|\cN - \cM\|_\diamond \leq \eps.
	\end{align} 
	We define the \emph{Hadamard parameter} $\Had(\cN)$ to be the smallest $\eps\in[0,1]$ such that $\cN$ is $\eps$-close-Hadamard.
\end{definition}

For reasons that will become clear shortly, it is useful to define the following upper bound on $\Had(\cN)$.
Let $\cS=\lbrace |\psi_k\rangle_A\rbrace_{i}$ be an overcomplete set of vectors on $\cH_A$, i.e., $\sum_i |\psi_i\rangle\langle \psi_i|_A = \one_A$, and define the set
\begin{align}
\kH_\cS \coloneqq \lbrace \cN\in \kH\colon \cN(\rho_A) = \Gamma^\dagger \ast [\rho_A]_{\cS} \rbrace \subseteq \kH,
\end{align}
where $\Gamma\geq 0$ satisfies $\Gamma_{kk}=1$ for all $k$.
We then consider the parameter
\begin{align}
\Had_\cS(\cN) \coloneqq \min_{\cM\in\kH_\cS} \frac{1}{2} \|\cN - \cM\|_\diamond,
\label{eq:approximate-hadamard-S}
\end{align}
which clearly satisfies $\Had(\cN) \leq \Had_\cS(\cN)$.
In addition, $\Had_\cS(\cN)$ is the solution to an SDP and thus efficiently computable:
\begin{lemma}\label{lem:Had-S-SDP}
	For fixed $\cS$, the parameter $\Had_\cS(\cN)$ is the solution to the following SDP:
	\begin{align}
	\begin{aligned}
	\text{\normalfont minimize: } & \mu\\
	\text{\normalfont subject to: } & \tr_BZ_{AB} \leq \mu \one_A\\
	& Z_{AB} \geq N_{AB} - \cO^\Gamma_{AB}\\
	& Z_{AB} \geq 0\\
	& \Gamma \geq 0\\
	& \langle k|\Gamma |k\rangle = 1\text{ for all $k$},
	\end{aligned}
	\label{eq:H_S-SDP}
	\end{align}
	where $\cO^\Gamma_{AB} \coloneqq \sumi_{i,\,j,\,k,\,l}(\Gamma\ast [|i\rangle\langle j|]_\cS)_{kl} \, |i\rangle \langle j|_A \ox |k\rangle \langle l|_B$.
\end{lemma}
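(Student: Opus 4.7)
The plan is to recognize the SDP in \eqref{eq:H_S-SDP} as a combination of the standard diamond-norm SDP from \eqref{eq:diamondnorm-SDP}, applied to the difference of Choi operators, together with an SDP parametrization of the set of Choi operators of channels in $\kH_\cS$.

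First, I would verify that $\cO^\Gamma_{AB}$ is precisely the Choi operator of the Hadamard channel $\cM \in \kH_\cS$ defined by the matrix $\Gamma$. Applying the prescription $\cM(\rho_A) = \Gamma^\dagger \ast [\rho_A]_\cS$ to $|i\rangle\langle j|_A$ and using the definitions of $[\cdot]_\cS$ and of the Hadamard product gives
\begin{align}
\cM(|i\rangle\langle j|_A) = \sum_{k,l} (\Gamma^\dagger)_{kl}\, \langle \psi_k|i\rangle\langle j|\psi_l\rangle\, |k\rangle\langle l|_B.
\end{align}
Summing against $|i\rangle\langle j|_A$ and using $\Gamma^\dagger = \Gamma$ (valid since $\Gamma \geq 0$) yields exactly the expression for $\cO^\Gamma_{AB}$.

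Second, I would check that the last two constraints of \eqref{eq:H_S-SDP}, namely $\Gamma \geq 0$ and $\langle k|\Gamma|k\rangle = 1$, exactly parametrize the Choi operators of channels in $\kH_\cS$. In the forward direction, any such $\Gamma$ admits a factorization $\Gamma_{kl} = \langle \vphi_k|\vphi_l\rangle$ with normalized vectors $|\vphi_k\rangle_E$, which is precisely the data defining a Hadamard channel in $\kH_\cS$ via \eqref{eq:hadamard}. Conversely, any $\cM \in \kH_\cS$ arises from such data, whose Gram matrix is positive semidefinite with unit diagonal. I would also note that trace preservation of the resulting channel, i.e.\ $\tr_B \cO^\Gamma_{AB} = \one_A$, follows from $\sum_k |\psi_k\rangle\langle\psi_k|_A = \one_A$ combined with the unit-diagonal condition on $\Gamma$, while complete positivity is automatic from the explicit Hadamard channel structure.

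Third, with the identification $M_{AB} = \cO^\Gamma_{AB}$, the constraints $\tr_B Z_{AB} \leq \mu\one_A$, $Z_{AB} \geq N_{AB} - \cO^\Gamma_{AB}$, and $Z_{AB} \geq 0$ are exactly those of the diamond-norm SDP \eqref{eq:diamondnorm-SDP} for $\frac{1}{2}\|\cN - \cM\|_\diamond$ at a fixed $\cM$. Jointly minimizing $\mu$ over $Z_{AB}$ and $\Gamma$ therefore computes $\min_{\cM \in \kH_\cS} \frac{1}{2}\|\cN - \cM\|_\diamond = \Had_\cS(\cN)$, as claimed.

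The main subtlety is bookkeeping: verifying that the specific indexing in the definition of $\cO^\Gamma_{AB}$ lines up, under the Choi convention $N_{AB} = \sum_{i,j} |i\rangle\langle j|_A \otimes \cN(|i\rangle\langle j|)$ used in the paper, with the matrix elements obtained from the Hadamard product formula \eqref{eq:hadamard-product}. Once this identification is settled --- and the Hermiticity of $\Gamma$ absorbs the dagger --- the remainder is a routine concatenation of the diamond-norm SDP with the PSD+unit-diagonal parametrization of Gram matrices.
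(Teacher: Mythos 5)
Your proposal is correct and follows essentially the same route as the paper's proof: the Gram-matrix characterization of positive semidefinite matrices with unit diagonal parametrizes $\kH_\cS$, the Choi operator $\cO^\Gamma_{AB}$ is linear in $\Gamma$, and the remaining constraints are exactly the diamond-norm SDP \eqref{eq:diamondnorm-SDP}, so the joint minimization computes $\Had_\cS(\cN)$. You are in fact somewhat more thorough than the paper, which omits the explicit verification that $\cO^\Gamma_{AB}$ is the Choi operator of $\cM$ and that trace preservation follows from $\sum_k|\psi_k\rangle\langle\psi_k|_A=\one_A$ together with the unit-diagonal constraint.
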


\begin{proof}
	A matrix $\Gamma$ is positive semidefinite if and only if it is the Gram matrix of a collection of vectors, i.e., $(\Gamma)_{ij} = \langle\vphi_i|\vphi_j\rangle$ for some set of vectors $\lbrace|\vphi_i\rangle\rbrace_i$. 
	Hence, we can optimize over positive semidefinite $\Gamma$, and the constraints $\langle k|\Gamma |k\rangle = 1$ for all $k$ enforce that $\langle \vphi_k|\vphi_k\rangle=1$.
	Moreover, since the Hadamard product $\ast$ is distributive over addition, $(A+B)\ast C = A\ast C + B\ast C$, the Choi operator $\sumi_{i,\,j,\,k,\,l}(\Gamma\ast [|i\rangle\langle j|]_\cS)_{kl} \, |i\rangle \langle j|_A \ox |k\rangle \langle l|_B$ of a Hadamard channel $\cM$ is linear in $\Gamma$ for a fixed overcomplete set of vectors $\cS = \lbrace |\psi_k\rangle_A\rbrace_{i}$.
	Thus, the SDP in \eqref{eq:H_S-SDP} optimizes the diamond distance $\frac{1}{2}\|\cN-\cM\|_\diamond$ over all $\cM\in\kH_\cS$.
\end{proof}

We can combine Definition~\ref {def:approximate-hadamard}, Corollary~\ref {thm:approximate-additivity}, and \eqref{eq:hadamard-additivity} to obtain an upper bound on $C(\cN)$ similar to Corollary~\ref {cor:cov-upper-bounds}.
Since the function $2\eps \log |B| + g(\eps)$ is monotonically increasing for $\eps\in[0,1]$, we can use the parameter $\Had_\cS(\cN)$ for a fixed $\cS$ instead of $\Had(\cN)$.
The former has the advantage of being efficiently computable via its SDP representation given in Lemma~\ref {lem:Had-S-SDP}.
In summary, we have the following:
\begin{corollary}\label{cor:hadamard-upper-bound}
	Let $\cN\colon A\to B$ be a quantum channel, and set $\eps=\Had(\cN)$ and $\eps_\cS\coloneqq \Had_\cS(\cN)$ for a fixed collection $\cS=\lbrace |\psi_k\rangle_A\rbrace_k$ of vectors.
	Then, 
	\begin{align}
	C(\cN) &\leq \chi(\cM) + 2\eps\log|B| + g(\eps)\\
	&\leq \chi(\cN) + 3\eps\log|B| + 2g(\eps)\\
	&\leq \chi(\cN) + 3\eps_\cS\log|B| + 2g(\eps_\cS),
	\end{align}
	where the function $g(\eps)$ is defined through \eqref{eq:g-function}, and $\cM\colon A\to B$ is the Hadamard channel achieving the minimum in Definition~\ref {def:approximate-hadamard}.
	Note that we also have 
	\begin{align}
	C(\cN) \leq \chi(\cM_\cS) + 2\eps_\cS\log|B| + g(\eps_\cS),
	\end{align}
	where $\cM_\cS\colon A\to B$ is the Hadamard channel defined in terms of $\cS$ achieving the minimum in \eqref{eq:approximate-hadamard-S}.
\end{corollary}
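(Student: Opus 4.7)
The plan is to recognize that this corollary is essentially a direct application of Corollary~\ref{thm:approximate-additivity}, with the role of the channel with weakly additive Holevo information played by a nearby Hadamard channel. The only substantive ingredients beyond bookkeeping are (i) invoking the strong additivity of Hadamard channels from \eqref{eq:hadamard-additivity} (which in particular implies weak additivity, so that Corollary~\ref{thm:approximate-additivity} applies), and (ii) checking a short monotonicity argument to chain in the parameter $\eps_\cS$.

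First I would fix the channel $\cM\in\kH$ that attains the minimum in Definition~\ref{def:approximate-hadamard}, so that by construction $\tfrac{1}{2}\|\cN-\cM\|_\diamond \leq \eps=\Had(\cN)$. Since $\cM$ is Hadamard, \eqref{eq:hadamard-additivity} shows that its Holevo information is (strongly and hence) weakly additive. Therefore Corollary~\ref{thm:approximate-additivity} applies with this choice of $\cM$, giving immediately
\begin{align}
C(\cN) &\leq \chi(\cM) + 2\eps\log|B| + g(\eps) \\
&\leq \chi(\cN) + 3\eps\log|B| + 2g(\eps),
\end{align}
which are the first two claimed inequalities.

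Next I would obtain the third inequality by exploiting the trivial bound $\Had(\cN)\leq \Had_\cS(\cN)$ that holds because $\kH_\cS\subseteq\kH$. It then suffices to observe that the function $\eps\mapsto 3\eps\log|B|+2g(\eps)$ is monotonically nondecreasing on $[0,1]$; indeed, $\tfrac{d}{d\eps}g(\eps)=\log((1+\eps)/\eps)\geq 0$, so $g$ is nondecreasing, and the linear term is clearly nondecreasing. Replacing $\eps$ by the larger quantity $\eps_\cS$ on the right-hand side then yields
\begin{align}
C(\cN)\leq \chi(\cN) + 3\eps_\cS\log|B| + 2g(\eps_\cS).
\end{align}
For the final ``Note that'' statement, I would let $\cM_\cS\in\kH_\cS$ be the minimizer in \eqref{eq:approximate-hadamard-S}. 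Since $\kH_\cS\subseteq\kH$, the channel $\cM_\cS$ is Hadamard and hence has (strongly, therefore weakly) additive Holevo information, and by definition $\tfrac{1}{2}\|\cN-\cM_\cS\|_\diamond\leq \eps_\cS$. Applying Corollary~\ref{thm:approximate-additivity} once more with $\cM\leftarrow \cM_\cS$ and $\eps\leftarrow \eps_\cS$ produces the asserted bound $C(\cN)\leq \chi(\cM_\cS)+2\eps_\cS\log|B|+g(\eps_\cS)$.

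There is no real obstacle here; the statement is packaging several instantiations of Corollary~\ref{thm:approximate-additivity} together with monotonicity of the continuity correction. The only conceptual point worth flagging is that Corollary~\ref{thm:approximate-additivity} requires the comparison channel to have weakly additive Holevo information, which is precisely what \eqref{eq:hadamard-additivity} (applied with $\cM^{\ox(n-1)}$ inductively) guarantees for any Hadamard channel. Once that is in place, each inequality reduces to substituting the appropriate minimizer and, where needed, applying monotonicity in~$\eps$.
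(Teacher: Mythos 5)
Your proposal is correct and matches the paper's own argument: the paper likewise obtains the first two inequalities by applying Corollary~\ref{thm:approximate-additivity} to the minimizing Hadamard channel (whose Holevo information is strongly, hence weakly, additive by \eqref{eq:hadamard-additivity}), and gets the $\eps_\cS$ bounds from $\Had(\cN)\leq\Had_\cS(\cN)$ together with monotonicity of the error term $\eps\mapsto 2\eps\log|B|+g(\eps)$. Your explicit check that $g'(\eps)=\log\bigl((1+\eps)/\eps\bigr)\geq 0$ is a fine (if slightly more detailed) justification of the monotonicity the paper simply asserts.
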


We now define $\varepsilon$-Hadamard channels, an alternative way of defining a Hadamard parameter of an arbitrary quantum channel.
To this end, we recall that Hadamard channels $\cN\colon A\to B$ belong to the class of degradable channels mentioned in Section~\ref {sec:introduction}: 
For every Hadamard channel $\cN\colon A\to B$ with environment $E$ there exists a quantum channel $\cD\colon B\to E$, called the \emph{degrading channel}, such that $\cN_c = \cD\circ \cN$.
In other words, there exists a quantum channel that Bob can apply locally to simulate the leakage of $\cN$ to the environment, modeled by the complementary channel $\cN$.
Moreover, the degrading channel of Hadamard channels is entanglement-breaking; that is, the Choi operator $D_{BE}$ of $\cD$ is separable \cite{BHTW10}.
This observation gives rise to the following definition:

\begin{definition}\label{def:approximate-hadamard-deg}
	For a given $\eps\in[0,1]$, a quantum channel $\cN\colon A\to B$ is called $\eps$-Hadamard, if
	\begin{align}
	\min_{\cD\in\kE(B\to E)} \frac{1}{2}\|\cN^c - \cD\circ \cN\|_\diamond \leq \eps,
	\end{align}
	where $\kE(B\to E)$ denotes the set of all entanglement-breaking channels from $B$ to $E$.
	The Hadamard parameter $\Haddeg(\cN)$ is defined to be the smallest $\eps\in[0,1]$ such that $\cN$ is $\eps$-Hadamard.
\end{definition}

If $|B||E|\leq 6$, the parameter $\Haddeg(\cN)$ of a quantum channel $\cN\colon A\to B$ with environment $E$ can be expressed as the solution to an SDP.
This SDP is obtained from the SDP that computes the approximate degradability parameter defined in \cite{SSWR15} by adding a constraint enforcing the `approximate degrading' map $\cD$ to be entanglement-breaking.
In summary, we have:
\begin{lemma}\label{lem:Hadamard-deg-sdp}
	Let $\cN\colon A\to B$ be a quantum channel with environment $E$.
	If $|B||E|\leq 6$, then the Hadamard parameter $\Haddeg(\cN)$ is the solution to the following SDP:
	\begin{align}
	\begin{aligned}
	\text{\normalfont minimize: } & \mu\\
	\text{\normalfont subject to: } & \tr_E Z_{AE} \leq \mu \one_A\\
	& Z_{AE} \geq N_{AE} - \cJ(\cD\circ \cN)\\
	& Z_{AE} \geq 0\\
	& D_{BE} \geq 0\\
	& \tr_E D_{BE} = \one_B\\
	& D_{BE}^{T_E} \geq 0,
	\end{aligned}
	\label{eq:Haddeg-SDP}
	\end{align}
	where $\cJ(\cdot)$ denotes the Choi-Jamio\l kowski isomorphism that maps a channel to its Choi operator.
\end{lemma}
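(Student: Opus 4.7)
The plan is to combine the SDP characterization of the diamond norm from \eqref{eq:diamondnorm-SDP} with an SDP-friendly parametrization of the set of entanglement-breaking degrading maps $\cD\in\kE(B\to E)$. The objective $\frac{1}{2}\|\cN^c-\cD\circ\cN\|_\diamond$ already has an SDP representation for any fixed $\cD$, so the main task is to show that the additional minimization over $\cD\in\kE(B\to E)$ can be absorbed into the same SDP while keeping linearity in all variables.

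First, I would parametrize $\cD\colon B\to E$ by its Choi operator $D_{BE}$, so that $\cD$ being a legitimate channel is captured exactly by the two linear/PSD constraints $D_{BE}\geq 0$ and $\tr_E D_{BE}=\one_B$. Next, I would invoke the Horodecki PPT criterion \cite{HHH96}: because the assumption $|B||E|\leq 6$ places us in the $2\times 2$ or $2\times 3$ regime, $D_{BE}$ is separable iff $D_{BE}^{T_E}\geq 0$. This shows that $\cD\in\kE(B\to E)$ is equivalent to the three constraints $D_{BE}\geq 0$, $\tr_E D_{BE}=\one_B$, and $D_{BE}^{T_E}\geq 0$, all of which are SDP-representable in the variable $D_{BE}$.

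The key linearity observation is that the Choi operator $\cJ(\cD\circ\cN)$ of the composed channel is linear in $D_{BE}$ for fixed $\cN$: writing $\cD\circ\cN$ explicitly via $(\cD\circ\cN)(\rho_A)=\tr_B[D_{BE}(\cN(\rho_A)^T\otimes \one_E)]$ and applying this to the unnormalized maximally entangled operator $\gamma_{AA'}$ produces an expression of the form $\cJ(\cD\circ\cN)=\Phi_{\cN}(D_{BE})$, where $\Phi_{\cN}$ is a fixed linear map depending only on $\cN$. Substituting this linear expression into the Watrous diamond-norm SDP \eqref{eq:diamondnorm-SDP} applied to the difference $\cN^c - \cD\circ\cN$ (with Choi operators $N_{AE}$ and $\cJ(\cD\circ\cN)$), one obtains exactly the constraints $\tr_E Z_{AE}\leq\mu\one_A$, $Z_{AE}\geq N_{AE}-\cJ(\cD\circ\cN)$, and $Z_{AE}\geq 0$ appearing in \eqref{eq:Haddeg-SDP}.

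Finally, I would argue that jointly minimizing over $(\mu,Z_{AE},D_{BE})$ subject to all of the above constraints equals $\min_{\cD\in\kE(B\to E)}\frac{1}{2}\|\cN^c-\cD\circ\cN\|_\diamond=\Haddeg(\cN)$: for each fixed feasible $D_{BE}$, minimization over $(\mu,Z_{AE})$ reproduces $\frac{1}{2}\|\cN^c-\cD\circ\cN\|_\diamond$ by \eqref{eq:diamondnorm-SDP}, and minimization over feasible $D_{BE}$ is exactly minimization over $\cD\in\kE(B\to E)$ by the Horodecki equivalence. The only mildly subtle step is verifying the linearity of $D_{BE}\mapsto\cJ(\cD\circ\cN)$ cleanly and checking that the PPT$=$separable equivalence indeed applies in the relevant dimensions; both are standard, so no real obstacle should arise.
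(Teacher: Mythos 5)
Your proposal is correct and takes essentially the same route as the paper: the paper likewise parametrizes $\cD$ by its Choi operator $D_{BE}$ (so that $D_{BE}\geq 0$ and $\tr_E D_{BE}=\one_B$ encode the channel constraints), enforces the entanglement-breaking property via the PPT constraint $D_{BE}^{T_E}\geq 0$ using the Horodecki criterion in dimensions $|B||E|\leq 6$, and grafts these onto the Watrous diamond-norm SDP \eqref{eq:diamondnorm-SDP} for $\frac{1}{2}\|\cN_c-\cD\circ\cN\|_\diamond$, appealing (with a citation to \cite{SSWR15}) to the linearity of $\cS\mapsto\cJ(\cS\circ\cT)$ for fixed $\cT$. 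The only difference is that you verify this linearity explicitly via the Choi formula $\cJ(\cD\circ\cN)=\Phi_\cN(D_{BE})$, where the paper simply cites it, so your argument is a slightly more self-contained version of the same proof.
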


Note that in \eqref{eq:Haddeg-SDP} the operator $D_{BE}$ corresponds to the Choi operator of $\cD\colon B\to E$, and the constraint $D_{BE}^{T_E} \geq 0$ enforces $D_{BE}$ to be PPT, which for $|B||E|\leq 6$ is equivalent to $\cD$ being entanglement-breaking.
The map $\cS\mapsto\cJ(\cS\circ \cT)$ is linear in $\cS$ for a fixed channel $\cT$ \cite{SSWR15}, and hence, the corresponding constraint in \eqref{eq:Haddeg-SDP} is indeed semidefinite.

Using the Hadamard parameter $\Haddeg(\cdot)$ from Definition~\ref {def:approximate-hadamard-deg}, we can again deduce an upper bound on the classical capacity of a quantum channel by adapting the arguments in the proof of Corollary~\ref {thm:approximate-additivity}. In particular, we have the following theorem, whose proof we give in Appendix~\ref{sec:eps-hadamard}.

\begin{theorem}\label{thm:eps-hadamard}
For a channel $\mathcal{N}\colon A\rightarrow B$, set $\varepsilon = \Haddeg(\cN)$, and let $\mathcal{D}\colon B\rightarrow E$ be the entanglement-breaking channel achieving the minimum in Definition~\ref{def:approximate-hadamard-deg}. Then%
\begin{multline}
C(\mathcal{N})\leq\max_{\{p_{X}(x),\psi^{x}\}}\left[  H(FE)_{\xi
}-H(E|X)_{\xi}\right]  \\+\left(  2\varepsilon\log|E|+g(\varepsilon
)\right)  ,
\end{multline}
where the cq state $\xi_{XFE}$ is defined as
\begin{align}
\xi_{XFE}\coloneqq \sumi_{x}p_{X}(x)|x\rangle\langle x|_{X}\otimes
(\mathcal{U}_{B\rightarrow FE}^{\mathcal{D}}\circ\mathcal{N}%
_{A\rightarrow B})(\psi_{A}^{x}),
\end{align}
$\mathcal{U}_{B\rightarrow FE}^{\mathcal{D}}$ is a Stinespring dilation of $\cD\colon B\to E$, and $\{ \psi^x_A\}_x$ is a set of pure states.
\end{theorem}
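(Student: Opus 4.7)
The strategy adapts that of Corollary~\ref{thm:approximate-additivity} but is dualized through the complementary channel so that the continuity error scales with $\log|E|$ rather than $\log|B|$. Let $\cE=\{p(x^n),\psi^{x^n}_{A^n}\}$ be a pure-state ensemble on $A^n$; the first step is to invoke the regularized formula $C(\cN)=\lim_{n}\frac1n\chi(\cN^{\ox n})$ and, using isometric invariance of entropy under $\mathcal{U}^{\cD}_{B\to FE}$ and the identity $S(\cN(\psi^x))=S(\cN_c(\psi^x))$ for pure inputs, to rewrite
\begin{equation*}
\chi(\cN^{\ox n},\cE) = H(F^nE^n)_{\bar{\Xi}} - \sumi_{x^n}p(x^n)\,S(\cN_c^{\ox n}(\psi^{x^n})),
\end{equation*}
where $\bar{\Xi}=(\mathcal{U}^{\cD}\circ\cN)^{\ox n}(\bar\psi)$.

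Next, I would swap $\cN_c$ for $\cD\circ\cN$ via a hybridization argument. Define $\tau_k^{x^n}\coloneqq(\cD\circ\cN)^{\ox k}\ox(\cN_c)^{\ox(n-k)}(\psi^{x^n})$; consecutive $\tau_k^{x^n}$ and $\tau_{k-1}^{x^n}$ share the same marginal on $E^n\setminus E_k$, so their entropy difference equals the difference of the conditional entropies $H(E_k\,|\,E^n\setminus E_k)$. Since $\frac12\|\tau_k^{x^n}-\tau_{k-1}^{x^n}\|_1\leq\eps$ by the diamond-norm hypothesis, the Alicki--Fannes--Winter continuity bound for conditional entropy (with its factor of $2$) yields a per-step error $2\eps\log|E|+g(\eps)$. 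Telescoping $n$ such steps and then maximizing over $\cE$,
\begin{equation*}
\chi(\cN^{\ox n}) \leq \tQ_n + n\bigl(2\eps\log|E|+g(\eps)\bigr),
\end{equation*}
where $\tQ_n\coloneqq\max_{\cE}[H(F^nE^n)_{\bar\Xi}-H(E^n|X^n)_\xi]$ is the $n$-shot analogue of the quantity on the right-hand side of the theorem.

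The final ingredient is the (weak) additivity $\tQ_n\leq n\tQ_1$. Decomposing
\begin{equation*}
H(F^nE^n)_{\bar\Xi}-H(E^n|X^n)_\xi = H(F^n|E^n)_{\bar\Xi} + I(X^n;E^n)_\xi,
\end{equation*}
subadditivity of conditional entropy bounds the first summand by $\sumi_i H(F_i|E_i)_{\bar\Xi_i}$, while the entanglement-breaking property of $\cD$ (and hence of $\cD\circ\cN$) together with a Shor-style argument bounds the second by $\sumi_i I(X^n;E_i)_{\xi_i}$. Dividing the bound from the previous step by $n$ and sending $n\to\infty$ then delivers the stated inequality.

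I expect the main technical obstacle to be this last additivity step: $\tQ$ is not the Holevo information of any single auxiliary channel, since it mixes the output of $\cN$ on the $F$ register with the output of $\cD\circ\cN$ on the $E$ register. Standard additivity proofs for Hadamard or entanglement-breaking channels therefore do not apply verbatim, and one must exploit the measure-and-prepare structure inherited by $\mathcal{U}^{\cD}$ from the EB-ness of $\cD$ to effectively classicalize the $F$-register before invoking the additivity of Holevo information for the EB channel $\cD\circ\cN$.
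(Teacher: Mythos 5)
Your steps 1 and 2 coincide with the paper's actual proof (Appendix~\ref{sec:eps-hadamard}): the same rewriting $\chi(\cN)=\max[H(B)-H(E|X)]$ over pure ensembles, the same use of isometric invariance under $\cU^{\cD}_{B\to FE}$ so that \emph{only} the environment term $-H(E^n|X)$ needs a continuity argument, the same hybrid states interpolating between $\cN_c^{\ox n}$ and $(\cD\circ\cN)^{\ox n}$ one factor at a time, and the same application of \cite[Lemma~2]{Win15} per step (the marginals on the untouched systems agree and the diamond-norm hypothesis gives trace distance at most $\eps$ at each step), yielding the total error $n\left(2\eps\log|E|+g(\eps)\right)$. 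Up to this point your argument is correct and essentially identical to the paper's.

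The gap is in your single-letterization of $\tQ_n$. The inequality you invoke, $I(X^n;E^n)\leq\sum_i I(X^n;E_i)$ with the \emph{same} classical variable $X^n$, is false, even though $\cD\circ\cN$ is entanglement-breaking. Counterexample: let $\cD\circ\cN$ be the completely dephasing qubit channel (which is EB), $n=2$, and encode $x=0$ into $\frac{1}{\sqrt{2}}(|00\rangle+|11\rangle)$ and $x=1$ into $\frac{1}{\sqrt{2}}(|01\rangle+|10\rangle)$. After dephasing both slots, each marginal $E_i$ is maximally mixed independently of $x$, while the parity $E_1\oplus E_2$ equals $x$; hence $I(X;E_1)=I(X;E_2)=0$ but $I(X;E_1E_2)=1$. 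So subadditivity of the mutual information across EB outputs fails once the inputs are entangled across slots---this is precisely why Shor-type additivity proofs need more than EB-ness of each single-slot output. The correct move, and the one the paper makes, is to import the entropy inequalities from the Hadamard-channel additivity proof \cite[Theorem~20.4.1]{Wil13}: simulating the EB channel $\cD\circ\cN$ on the slots $j\neq i$ by measure-and-prepare and adjoining the measurement outcomes to the classical register produces \emph{refined} single-letter ensembles $\omega^i$ (with new distributions $p_{X_i}$ and pure conditional inputs $\phi^{x_i}_{A_i}$) for which, simultaneously, $H(F^n\tilde{E}^n)_{\sigma^n}\leq\sum_i H(F_i\tilde{E}_i)_{\omega^i}$ and $-H(\tilde{E}^n|X)_{\sigma^n}\leq-\sum_i H(\tilde{E}_i|X_i)_{\omega^i}$; refinement leaves the average state (hence each $H(F_i\tilde{E}_i)$) unchanged while only decreasing the conditional entropy, so both bounds hold with the \emph{same} $\omega^i$ and each summand is dominated by $\tQ_1$.

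Two further remarks on your closing paragraph. First, the fix you gesture at ("classicalize the $F$-register") points at the wrong register: $F$ is handled by plain subadditivity and is never classicalized; what must be classicalized are the EB outputs $E_j$, $j\neq i$, appearing in the conditioning, with the outcomes absorbed into $X$. Second, your worry that "standard additivity proofs for Hadamard channels do not apply verbatim" is unfounded: $\tQ_n$ is exactly the multi-letter quantity bounded in the Hadamard-channel additivity proof, which goes through verbatim with $\cD\circ\cN$ playing the role of the complementary channel---the only property used there is that this map is entanglement-breaking.
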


\subsection{Bounds on triple trade-off capacities for $\varepsilon$-close Hadamard
channels}

In addition to the various single-resource capacities that we have discussed
so far, one can also consider the ability of a quantum channel to generate or consume
multiple resources \cite{Shor_CE,DS05,DHW05RI,HW10,HW10b,BHTW10,WH12}. One of the most general such tasks is the triple trade-off
problem \cite{HW10,HW10b}:\ What is the net rate at which classical bits, quantum bits, and
entangled bits can be consumed, in addition to many independent uses of a
quantum channel, in order to generate the same resources? Let $C$ denote the
rate of classical communication, $Q$ the rate of quantum communication, and
$E$ the rate of entanglement generation (if the rate is negative, it means
that the resource is being consumed).
The set of all achievable rates in this setting is called the ``dynamic capacity region.'' This question has been addressed in \cite{HW10,HW10b,WH12} (see also \cite{Wil13}),
via the following dynamic quantum capacity theorem:

\begin{theorem}
[{\normalfont \cite{HW10,WH12}}]The quantum dynamic capacity region $\mathcal{C}_{\operatorname{CQE}%
}(\mathcal{N})$ of a quantum channel $\mathcal{N}$ is given by the following
expression:%
\begin{align}
\mathcal{C}_{\operatorname{CQE}}(\mathcal{N})=\overline{\bigcup_{k=1}^{\infty
}\frac{1}{k}\mathcal{C}_{\operatorname{CQE}}^{(1)}(\mathcal{N}^{\otimes k})}, \label{eq:reg-3-tradeoff}
\end{align}
where the overbar indicates the closure of a set. The region $\mathcal{C}%
_{\operatorname{CQE}}^{(1)}(\mathcal{N})$ is equal to the closure of the union
of the state-dependent\ regions $\mathcal{C}_{\operatorname{CQE}}%
^{(1)}(\mathcal{N},\sigma)$:%
\[
\mathcal{C}_{\operatorname{CQE}}^{(1)}(\mathcal{N})\coloneqq\overline{\bigcup
_{\sigma}\mathcal{C}_{\operatorname{CQE}}^{(1)}(\mathcal{N},\sigma)}.
\]
The state-dependent\ region $\mathcal{C}_{\operatorname{CQE}}%
^{(1)}(\mathcal{N},\sigma)$ is the set of all rates $C$, $Q$, and $E$, such that%
\begin{align}
C+2Q &  \leq I(AX;B)_{\sigma},\label{eq-tr:CQ-bound}\\
Q+E &  \leq I(A\rangle BX)_{\sigma},\label{eq-tr:QE-bound}\\
C+Q+E &  \leq I(X;B)_{\sigma}+I(A\rangle BX)_{\sigma}.\label{eq-tr:CQE-bound},%
\end{align}
where the coherent information $I(F\rangle G)_\omega$ of a bipartite state $\omega_{FG}$ is defined as $I(F\rangle G)_\omega \coloneqq H(G)_\omega - H(F G)_\omega$.
The above entropic quantities are with respect to a classical--quantum state
$\sigma_{XAB}$, where%
\begin{align}
\sigma_{XAB}\coloneqq\sum_{x}p_{X}(x)|x\rangle\langle x|_{X}\otimes\mathcal{N}%
_{A^{\prime}\rightarrow B}(\phi_{AA^{\prime}}^{x}%
),\label{eq-tr:main-theorem-state}%
\end{align}
and the states $\phi_{AA^{\prime}}^{x}$ are pure, with system $A$ isomorphic to system $A'$. It is implicit that one
should consider states on $A^{\prime k}$ instead of $A^{\prime}$ when taking
the regularization in \eqref{eq:reg-3-tradeoff}.
\end{theorem}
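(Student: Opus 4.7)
The plan is to prove \eqref{eq:reg-3-tradeoff} by separately establishing achievability (every rate triple in the regularized union is achievable) and the converse (every achievable triple lies in the regularized union). Both directions are formulated for an asymptotic coding task in which Alice and Bob use $n$ copies of $\mathcal{N}$, together with $nE$ ebits of pre-shared entanglement, to transmit $nC$ classical bits and coherently transmit $nQ$ qubits, with error tending to zero as $n\to\infty$ (with the convention that a negative rate means that resource is consumed).

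For achievability, I would build a single ``CQE protocol'' achieving each boundary point of the region defined by \eqref{eq-tr:CQ-bound}--\eqref{eq-tr:CQE-bound} for a fixed classical-quantum state $\sigma_{XAB}$ of the form in \eqref{eq-tr:main-theorem-state}. The construction has three layers. First, for each classical symbol $x$, I apply the ``father protocol'' to the conditional state $\phi^x_{AA'}$, producing entanglement-assisted quantum communication at the rate $I(A\rangle BX)_\sigma$ per channel use while consuming entanglement at the rate $\tfrac{1}{2}[I(AX;B)_\sigma - I(A\rangle BX)_\sigma]$ per use. Second, I wrap the inner code with an HSW-style outer code that reliably communicates the classical index $X^n$ at rate $I(X;B)_\sigma$, exploiting the distinguishability of the channel output ensemble indexed by $X$. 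Third, I compose with coherent versions of super-dense coding and teleportation as resource inequalities to trade classical and quantum communication against entanglement at the appropriate $2{:}1$ and $1{:}1$ exchange rates; this traces out exactly the boundary hyperplanes specified by \eqref{eq-tr:CQ-bound}--\eqref{eq-tr:CQE-bound}. Taking the closure of the union over $\sigma$ yields $\mathcal{C}^{(1)}_{\operatorname{CQE}}(\mathcal{N})$, and repeating the construction on $\mathcal{N}^{\otimes k}$ and normalizing by $k$ gives the full regularized region.

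For the converse, I would fix an $(n,C,Q,E,\varepsilon)$ protocol with $\varepsilon\to 0$ and define an $n$-letter cq state $\sigma^{(n)}_{XAB^n}$ from the code: $X$ labels the classical message drawn uniformly, $A$ purifies Alice's quantum register after encoding, and $B^n$ is the output of $\mathcal{N}^{\otimes n}$. Standard entropic manipulations combined with quantum Fano-type inequalities (applied to the classical decoder and to Alice's reference via the decoupling/approximate-correction viewpoint for the quantum register) yield bounds of the form $n(C+2Q) \leq I(AX;B^n)_{\sigma^{(n)}} + n\delta_n$, $n(Q+E) \leq I(A\rangle B^n X)_{\sigma^{(n)}} + n\delta_n$, and $n(C+Q+E) \leq I(X;B^n)_{\sigma^{(n)}} + I(A\rangle B^n X)_{\sigma^{(n)}} + n\delta_n$, with $\delta_n\to 0$. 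Dividing by $n$ and observing that $\sigma^{(n)}$ is an admissible state for $\mathcal{N}^{\otimes n}$, the rate triple lies inside $\tfrac{1}{n}\mathcal{C}_{\operatorname{CQE}}^{(1)}(\mathcal{N}^{\otimes n})$, and taking the closure over $n$ places it inside $\mathcal{C}_{\operatorname{CQE}}(\mathcal{N})$.

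The main obstacle is proving that all three inequalities \eqref{eq-tr:CQ-bound}--\eqref{eq-tr:CQE-bound} are \emph{simultaneously} achievable by one protocol, not merely by time-sharing the three corner protocols (which would give only the convex hull). The technical device is the coherent combination of HSW coding with the father protocol, which requires the HSW codebook to be compatible with the random-unitary encoding used by the father protocol on the conditional ensemble $\{\phi^x_{AA'}\}$, so that the classical index $X$ can be recovered from the channel output while the entanglement-assisted quantum decoder simultaneously extracts the coherent information. Engineering this joint decoder (typically via sequential decoding or smooth-min-entropy/decoupling arguments) is the core of the achievability proof. Regularization in \eqref{eq:reg-3-tradeoff} is then unavoidable because $I(X;B)_\sigma$ is generically super-additive under tensor products of channels; the single-letter region $\mathcal{C}^{(1)}_{\operatorname{CQE}}(\mathcal{N})$ coincides with $\mathcal{C}_{\operatorname{CQE}}(\mathcal{N})$ only for special channels (such as Hadamard channels), motivating the $\varepsilon$-close Hadamard relaxation pursued elsewhere in the paper.
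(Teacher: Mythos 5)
First, a point of reference: the paper does not prove this theorem at all --- it is quoted with attribution to \cite{HW10,WH12}, so your proposal has to be measured against the proof in those works. At the level of architecture, your outline follows the same route as the cited proof: achievability by layering classical (HSW-type) coding over the entanglement-assisted father protocol and then appending unit resource protocols to sweep out the region, plus a multi-letter converse via Fano-type and continuity bounds yielding the three information inequalities for the $n$-letter state, followed by regularization. Your closing observations (that the crux is a single protocol saturating all three bounds simultaneously rather than time-sharing, realized in \cite{HW10} by coherently piggybacking the classical information on the father codebook) are also on target.

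However, your inner layer is quantitatively mis-specified, and as written the construction would not cover the region. The father protocol applied conditionally on $x$ gives quantum rate $\tfrac{1}{2}I(A;B|X)_{\sigma}$ while consuming entanglement at rate $\tfrac{1}{2}I(A;E|X)_{\sigma}$, so the classically-enhanced father protocol achieves the corner point $(C,Q,E)=\bigl(I(X;B)_{\sigma},\,\tfrac{1}{2}I(A;B|X)_{\sigma},\,-\tfrac{1}{2}I(A;E|X)_{\sigma}\bigr)$. This point saturates all three bounds simultaneously: $C+2Q=I(X;B)_{\sigma}+I(A;B|X)_{\sigma}=I(AX;B)_{\sigma}$ by the chain rule, and since the state on $ABE$ is pure conditionally on $X$, one has $\tfrac{1}{2}I(A;B|X)_{\sigma}-\tfrac{1}{2}I(A;E|X)_{\sigma}=I(A\rangle BX)_{\sigma}$, which gives equality in the second and third bounds as well. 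Your stated rates --- quantum rate $I(A\rangle BX)_{\sigma}$ with entanglement consumption $\tfrac{1}{2}\bigl[I(AX;B)_{\sigma}-I(A\rangle BX)_{\sigma}\bigr]$ --- describe a point that is generically \emph{strictly} interior to all three half-spaces (for instance $C+2Q=I(X;B)_{\sigma}+2I(A\rangle BX)_{\sigma}<I(AX;B)_{\sigma}$ whenever $I(A;E|X)_{\sigma}>0$), and the cone of trade-offs generated by teleportation and super-dense coding from a strictly interior point cannot reach the boundary hyperplanes; so the claim that this ``traces out exactly the boundary'' fails. Relatedly, you also need entanglement distribution as a third unit protocol (to convert quantum communication into entanglement for the $Q+E$ face). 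With the corrected corner point and the third unit protocol, the remainder of your outline matches the proof of \cite{HW10,WH12}; your final remark attributing the need for regularization solely to superadditivity of $I(X;B)_{\sigma}$ is also loose --- it is the non-additivity of the full one-shot region that forces \eqref{eq:reg-3-tradeoff}, which is precisely why the Hadamard case, where the region single-letterizes, is singled out in this paper.
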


If the channel $\mathcal{N}$\ is a Hadamard channel, then the regularization
is not needed and the dynamic capacity region is equal to $\mathcal{C}%
_{\operatorname{CQE}}^{(1)}(\mathcal{N})$ \cite{BHTW10,WH12}. Thus, we are motivated to consider
the notion of approximate Hadamard channels in this context, leading to Theorem~\ref{thm:eps-hadamard} below.
Note that Gao et al.~\cite{GJL15} derived an outer bound on the dynamic capacity region of a special class of channels using operator space methods.

\begin{theorem}\label{thm:triple}
Let $\mathcal{N}$ be an $\varepsilon$-close-Hadamard channel; i.e., there
exists a Hadamard channel $\mathcal{M}$ such that%
\[
\frac{1}{2}\Vert\mathcal{N}-\mathcal{M}\Vert_{\diamond}\leq\varepsilon.
\]
Then the dynamic capacity region $\mathcal{C}_{\operatorname{CQE}}(\mathcal{N})$ is
contained in the region $\mathcal{C}_{\operatorname{CQE}}^{(1)}(\mathcal{M}%
,\varepsilon)$, where $\mathcal{C}_{\operatorname{CQE}}^{(1)}(\mathcal{M}%
,\varepsilon)$\ is equal to the union of the state-dependent\ regions
$\mathcal{C}_{\operatorname{CQE}}^{(1)}(\mathcal{M},\tau,\varepsilon)$:%
\[
\mathcal{C}_{\operatorname{CQE}}^{(1)}(\mathcal{M},\varepsilon)\coloneqq\overline
{\bigcup_{\tau}\mathcal{C}_{\operatorname{CQE}}^{(1)}(\mathcal{M},\tau%
,\varepsilon)}.
\]
The state-dependent\ region $\mathcal{C}_{\operatorname{CQE}}%
^{(1)}(\mathcal{M},\tau,\varepsilon)$ is the set of all rates $C$, $Q$, and $E$,
such that%
\begin{align}
C+2Q &  \leq I(AX;B)_{\tau}+f_{1}(\varepsilon),\\
Q+E &  \leq I(A\rangle BX)_{\tau}+f_{2}(\varepsilon),\\
C+Q+E &  \leq I(X;B)_{\tau}+I(A\rangle BX)_{\sigma}+f_{2}(\varepsilon),
\end{align}
where%
\begin{align}
f_{1}(\varepsilon)  & \coloneqq 2\varepsilon\log|B|+g(\varepsilon), \label{eq:f1-func}\\
f_{2}(\varepsilon)  & \coloneqq 2\varepsilon\log|B|+g(\varepsilon)+2\sqrt
{2\varepsilon}\log|E|+g(\sqrt{2\varepsilon}), \label{eq:f2-func}
\end{align}
and $g(\varepsilon)$ is defined in \eqref{eq:g-function}.
The above entropic quantities are with respect to a classical--quantum state
$\tau_{XAB}$, where%
\[
\tau_{XAB}\coloneqq \sum_{x}p_{X}(x)|x\rangle\langle x|_{X}\otimes\mathcal{M}%
_{A^{\prime}\rightarrow B}(\phi_{AA^{\prime}}^{x}),
\]
and the states $\phi_{AA^{\prime}}^{x}$ are pure.
\end{theorem}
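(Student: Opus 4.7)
The plan is to bound the dynamic capacity region of $\cN$ by a (single-letter) region for the nearby Hadamard channel $\cM$, enlarged by correction terms determined by $\varepsilon$. Three ingredients enter: (i) the regularized dynamic capacity theorem for $\cN$; (ii) the fact that Hadamard channels have a single-letter dynamic capacity region, so all three entropic bounds in \eqref{eq-tr:CQ-bound}--\eqref{eq-tr:CQE-bound} are additive across tensor powers for $\cM$; and (iii) continuity bounds for each of these three entropic expressions in terms of $\tfrac{1}{2}\|\cN-\cM\|_\diamond \leq \varepsilon$.

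First, I would fix any $(C,Q,E) \in \cC_{\operatorname{CQE}}(\cN)$. By the regularized formula \eqref{eq:reg-3-tradeoff}, for every $\delta>0$ there exist $n\in\mathbb{N}$ and a cq state
\begin{align}
\sigma_{XAB^n} = \sumi_{x} p_X(x)\,|x\rangle\langle x|_X \otimes \cN^{\otimes n}_{A^{\prime n}\to B^n}\!\left(\phi^x_{AA^{\prime n}}\right),
\end{align}
such that the three rate bounds hold (up to $\delta$) with the entropic quantities on $\sigma$ normalized by $1/n$. I would then define $\tau_{XAB^n}$ by replacing $\cN^{\otimes n}$ with $\cM^{\otimes n}$ on the same inputs $\{p_X(x),\phi^x\}$.

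Next, using a Stinespring dilation, one can rewrite $I(AX;B^n) = H(B^n) - H(B^n|AX)$, $I(A\rangle B^n X) = H(B^n|X) - H(E^n|X)$, and $I(X;B^n)+I(A\rangle B^n X) = H(B^n) - H(E^n|X)$. The $B$-system entropies on $\sigma$ versus $\tau$ can be controlled by the assumption $\tfrac{1}{2}\|\cN-\cM\|_\diamond \leq \varepsilon$, yielding the correction $f_1(\varepsilon)$ of \eqref{eq:f1-func}. The $E$-system entropies additionally require the Stinespring continuity bound $\tfrac{1}{2}\|\cN^c-\cM^c\|_\diamond \leq \sqrt{2\varepsilon}$ obtained via a Fuchs--van de Graaf type inequality on the Stinespring isometries, producing the extra $2\sqrt{2\varepsilon}\log|E| + g(\sqrt{2\varepsilon})$ contribution in $f_2(\varepsilon)$ of \eqref{eq:f2-func}. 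Combining these bounds shows that each $\tfrac{1}{n}$-normalized entropic quantity on $\sigma$ is at most the corresponding quantity on $\tau$ plus $f_1(\varepsilon)$ or $f_2(\varepsilon)$. Since $\cM$ is Hadamard and its three entropic bounds are additive, $\tfrac{1}{n}$ times each $n$-letter quantity on $\tau$ is bounded by the corresponding single-letter quantity on some cq state $\tau'_{XAB}$ built from a single use of $\cM$, placing $(C,Q,E)$ in $\cC^{(1)}_{\operatorname{CQE}}(\cM,\tau',\varepsilon)$.

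The main obstacle is ensuring that the continuity corrections obtained above are independent of $n$. A naive application of the Alicki--Fannes--Winter inequality to $\sigma$ and $\tau$, which can differ in trace distance by up to $n\varepsilon$ and live on output systems of dimension $|B|^n$ and $|E|^n$, would produce corrections scaling as $n\log|B|$ and $n\log|E|$, rendering them useless after normalization by $1/n$. The key refinement is a Shirokov-style continuity argument in the spirit of \cite{Shi15}, which exploits the classical--quantum structure of $\sigma$ together with the additivity of the bounds for $\cM$ to pay only a single-letter factor of $\log|B|$ (or $\log|E|$) per application of the AFW inequality, delivering the $n$-independent corrections $f_1(\varepsilon)$ and $f_2(\varepsilon)$ in the statement.
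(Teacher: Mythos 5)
Your high-level route coincides with the paper's: replace $\cN^{\ox n}$ by $\cM^{\ox n}$ on the same input ensemble, pass to complementary channels via the Stinespring continuity bound $\frac{1}{2}\Vert\cN_c-\cM_c\Vert_\diamond\leq\sqrt{2\eps}$ \cite{KSW08}, invoke the single-letterization of the entropic bounds for the Hadamard channel $\cM$ \cite{WH12,Wil13}, and pay $n$-independent corrections $f_1(\eps)$ and $f_2(\eps)$. You also correctly identify the central obstacle: the all-$\cN$ state $\sigma$ and the all-$\cM$ state $\tau$ can be $n\eps$ apart in trace distance, so a direct Alicki--Fannes--Winter estimate is useless.

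The problem is that the step disposing of this obstacle is exactly where your proposal stops being a proof. ``A Shirokov-style continuity argument in the spirit of \cite{Shi15}'' is the conclusion you need, not an off-the-shelf lemma: \cite{Shi15} treats the Holevo information and classical capacity, and its technique must be re-executed for the three dynamic-region quantities. Concretely, the paper constructs interpolating states $\sigma^t$, $0\leq t\leq n$, in which only the first $t$ channel uses are replaced by (dilations of) $\cM$; consecutive hybrids satisfy $\frac{1}{2}\Vert\sigma^{t-1}-\sigma^t\Vert_1\leq\eps$ on the $B$-systems and $\leq\sqrt{2\eps}$ on the $E$-systems, and---crucially---have identical marginals once the $t$-th output is traced out. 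It is this equal-marginals property that lets one apply \cite[Corollary~1]{Shi15} to the conditional mutual informations $I(AX;B_t|\tilde{B}_{<t}B_{>t})$ and \cite[Lemma~2]{Win15} to the conditional entropies, paying only the single-letter factors $\log|B|$ and $\log|E|$ per step; telescoping over $t$ then gives total error $nf_1(\eps)$ resp.\ $nf_2(\eps)$. Your attribution of the single-letter savings to ``the classical--quantum structure together with the additivity of the bounds for $\cM$'' is off: additivity for $\cM$ enters only afterwards, to single-letterize the $\cM^{\ox n}$ quantities, while the savings in the continuity step come solely from the one-use-at-a-time replacement. Two further points you elide: (i) your decomposition $I(AX;B^n)=H(B^n)-H(B^n|AX)$, telescoped termwise, yields a per-step cost of $4\eps\log|B|+2g(\eps)$ rather than the claimed $f_1(\eps)$, so the conditional-mutual-information bound of \cite[Corollary~1]{Shi15} (which itself requires the equal-marginals condition) must be used directly; and (ii) the final single-letterization needs one common collection of states $\omega^i$---hence one state $\tau'$---satisfying all four entropy inequalities simultaneously, which is what \cite{WH12} supplies; your ``some cq state $\tau'$'' silently assumes this.
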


\begin{proof}
To prove this result, we consider an approach related to that for a statement given in \cite{Shi15}, recalled here as
 Theorem~\ref{prop:continuity}.  It is convenient to distinguish between the output system $B$ of
the channel $\mathcal{N}$ and the output system $\tilde{B}$ of $\mathcal{M}$,
where $\tilde{B}\cong B$. For $n\in\mathbb{N}$, let%
\begin{align}
\rho_{XAA^{\prime n}}=\sum_{x}p_{X}(x)|x\rangle\langle x|_{X}\otimes
\phi_{AA^{\prime n}}^{x}%
\end{align}
be an arbitrary cq state, where each $\phi_{AA^{\prime n}}^{x}$ is a pure
state. By assumption, we have that%
\begin{align}
\frac{1}{2}\Vert\mathcal{N}_{A^{\prime}\rightarrow B}-\mathcal{M}_{A^{\prime
}\rightarrow\tilde{B}}\Vert_{\diamond}\leq\varepsilon
,\label{eq:diamond-norm-bnd-3-trade-off}%
\end{align}
and so by the continuity of Stinespring's representation theorem \cite{KSW08}, it
follows that there exist isometric channels $\mathcal{U}_{A^{\prime
}\rightarrow BE}^{\mathcal{N}}$ and $\mathcal{U}_{A^{\prime}\rightarrow
BE}^{\mathcal{M}}$ extending $\mathcal{N}_{A^{\prime}\rightarrow B}$ and
$\mathcal{M}_{A^{\prime}\rightarrow\tilde{B}}$, respectively, such that the
corresponding complementary channels $\mathcal{N}_{c}$ and $\mathcal{M}_{c}$
satisfy%
\begin{align}
\frac{1}{2}\left\Vert \mathcal{N}_{c}%
-\mathcal{M}_{c}\right\Vert _{\diamond
}\leq\sqrt{2\varepsilon}.\label{eq:diamond-norm-bnd-3-trade-off-comp}%
\end{align}
We thus define the following cq states for $1\leq t\leq n-1$:%
\[
\sigma_{XAB^{n}E^{n}}^{0}\coloneqq \left(  \operatorname{id}_{XA}\otimes\left[
\mathcal{U}_{A^{\prime}\rightarrow BE}^{\mathcal{N}}\right]  ^{\otimes
n}\right)  (\rho_{XAA^{\prime n}}),
\]
\vspace{-.3in}
\begin{multline}
\sigma_{XA\tilde{B}_{\leq t}\tilde{E}_{\leq t}B_{>t}E_{>t}}^{t}\coloneqq \\
\left(  \operatorname{id}_{XA}\otimes\left[  \mathcal{U}_{A^{\prime
}\rightarrow\tilde{B}\tilde{E}}^{\mathcal{M}}\right]  ^{\otimes t}%
\otimes\left[  \mathcal{U}_{A^{\prime}\rightarrow BE}^{\mathcal{N}}\right]
^{\otimes(n-t)}\right)  (\rho_{XAA^{\prime n}}),
\end{multline}
\vspace{-.3in}
\[
\sigma_{XA\tilde{B}^{n}\tilde{E}^{n}}^{n}\coloneqq \left(  \operatorname{id}%
_{XA}\otimes\left[  \mathcal{U}_{A^{\prime}\rightarrow\tilde{B}\tilde{E}%
}^{\mathcal{M}}\right]  ^{\otimes n}\right)  (\rho_{XAA^{\prime n}}),
\]
where we use the shorthand $Q_{\leq i}\equiv Q_{1}Q_{2}\dots Q_{i}$ for $1\leq
i$ and quantum systems $Q_{1},\dots,Q_{i}$, and use analogous definitions for
$Q_{<i}$, $Q_{\geq i}$, and $Q_{>i}$.

For a Hadamard channel $\mathcal{M}$ and for all integers $n>1$, there exist
states $\omega_{A_{i}X_{i}\tilde{B}_{i}\tilde{E}_{i}}^{i}$ for $1\leq i\leq
n$, each having the form%
\[
\omega_{A_{i}X_{i}\tilde{B}_{i}\tilde{E}_{i}}^{i}\coloneqq \sum_{x_{i}}p_{X_{i}}%
(x_{i})|x_{i}\rangle\langle x_{i}|_{X_{i}}\otimes\mathcal{U}_{A_{i}^{\prime
}\rightarrow\tilde{B}_{i}\tilde{E}_{i}}^{\mathcal{M}}(\phi_{A_{i}A_{i}%
^{\prime}}^{x_{i}}),
\]
such that the following entropy inequalities hold \cite{WH12,Wil13}%
\begin{align}
I(AX;\tilde{B}^{n})_{\sigma^{n}}  & \leq\sum_{i=1}^{n}I(A_{i}X_{i};\tilde
{B}_{i})_{\omega^{i}}=nI(\tilde{A}\tilde{X};\tilde{B}|Z)_{\omega},\\
H(\tilde{B}^{n}|X)_{\sigma^{n}}  & \leq\sum_{i=1}^{n}H(\tilde{B}_{i}%
|X_{i})_{\omega^{i}}=nH(\tilde{B}|\tilde{X}Z)_{\omega},\\
H(\tilde{B}^{n})_{\sigma^{n}}  & \leq\sum_{i=1}^{n}H(\tilde{B}_{i}%
)_{\sigma^{n}}=nH(\tilde{B}|Z)_{\omega},\\
-H(\tilde{E}^{n}|X)_{\sigma^{n}}  & \leq-\sum_{i=1}^{n}H(\tilde{E}_{i}%
|X_{i})_{\omega^{i}}=-nH(\tilde{E}|\tilde{X}Z)_{\omega},\label{eq:hadamard-entropy-ineqs}
\end{align}
where the state $\omega_{Z\tilde{A}\tilde{X}\tilde{B}\tilde{E}}$ is defined as%
\[
\omega_{Z\tilde{A}\tilde{X}\tilde{B}\tilde{E}}\coloneqq \frac{1}{n}\sum_{i=1}%
^{n}|i\rangle\langle i|_{Z}\otimes\omega_{A_{i}X_{i}\tilde{B}_{i}\tilde{E}%
_{i}}^{i},
\]
and the registers $\tilde{A}$ and $\tilde{X}$ are taken to be large enough to
contain the contents of the largest $A_{i}$ and $X_{i}$, respectively.

Applying \eqref{eq:diamond-norm-bnd-3-trade-off} and
\eqref{eq:diamond-norm-bnd-3-trade-off-comp}, we find that
\begin{align}
\frac{1}{2}\Vert\sigma_{XA\tilde{B}_{\leq t-1}B_{>t-1}}^{t-1}-\sigma
_{XA\tilde{B}_{\leq t}B_{>t}}^{t}\Vert_{1}  & \leq\varepsilon
,\label{eq:bob-continuity-states}\\
\frac{1}{2}\Vert\sigma_{XA\tilde{E}_{\leq t-1}E_{>t-1}}^{t-1}-\sigma
_{XA\tilde{E}_{\leq t}E_{>t}}^{t}\Vert_{1}  & \leq\sqrt{2\varepsilon
},\label{eq:eve-continuity-states}%
\end{align}
for $1\leq t\leq n$ by the definition of the diamond norm. Moreover,
$\sigma_{XA\tilde{B}_{\leq t-1}B_{>t-1}}^{t-1}$ and $\sigma_{XA\tilde{B}_{\leq
t}B_{>t}}^{t}$ have the same marginals on the systems $XA\tilde{B}_{<t}B_{>t}%
$, and $\sigma_{XA\tilde{E}_{\leq t-1}E_{>t-1}}^{t-1}$ and $\sigma
_{XA\tilde{E}_{\leq t}E_{>t}}^{t}$ have the same marginals on the systems
$XA\tilde{E}_{<t}E_{>t}$. Hence, using the continuity bound from \cite[Corollary~1]{Shi15}
we obtain%
\begin{multline}
\left\vert I(AX;B_{t}|\tilde{B}_{<t}B_{>t})_{\sigma^{t-1}}-I(AX;\tilde{B}%
_{t}|\tilde{B}_{<t}B_{>t})_{\sigma^{t}}\right\vert \\
\leq f_{1}(\varepsilon) \label{eq:shi-cont}
\end{multline}
for $1\leq t\leq n$, where $\tilde{B}_{<1}$ and $B_{>n}$ represent empty
systems, respectively, and where
$f_{1}(\varepsilon)$ is defined in \eqref{eq:f1-func}. We also get that%
\begin{multline}
\left\vert H(B_{t}|\tilde{B}_{<t}B_{>t}X)_{\sigma^{t-1}}-H(\tilde{B}%
_{t}|\tilde{B}_{<t}B_{>t}X)_{\sigma^{t}}\right\vert \\
\leq2\varepsilon\log|B|+g(\varepsilon),
\end{multline}%
\begin{multline}
\left\vert H(E_{t}|\tilde{E}_{<t}E_{>t}X)_{\sigma^{t-1}}-H(\tilde{E}%
_{t}|\tilde{E}_{<t}E_{>t}X)_{\sigma^{t}}\right\vert \\
\leq2\sqrt{2\varepsilon}\log|E|+g(\sqrt{2\varepsilon}),
\end{multline}
by applying \eqref{eq:bob-continuity-states},
\eqref{eq:eve-continuity-states}, and \cite[Lemma~2]{Win15}. We can now bound $I(AX;B^{n}%
)_{\sigma^{0}}$ from above as follows:
\begin{align}
&  I(AX;B^{n})_{\sigma^{0}}\nonumber\\
&  =I(AX;\tilde{B}^{n})_{\sigma^{n}}+I(AX;B^{n})_{\sigma^{0}}-I(AX;\tilde
{B}^{n})_{\sigma^{n}}\nonumber\\
&  =I(AX;\tilde{B}^{n})_{\sigma^{n}}+\sum_{t=1}^{n}\Bigg[  I(AX;B_{t}|\tilde{B}_{<t}B_{>t})_{\sigma^{t-1}%
}\nonumber\\
&\qquad \qquad -I(AX;\tilde{B}_{t}|\tilde{B}_{<t}B_{>t})_{\sigma^{t}}\Bigg]
\label{eq:telescope}\\
&  \leq I(AX;\tilde{B}^{n})_{\sigma^{n}}+nf_{1}(\varepsilon)\label{eq:bound}\\
&  \leq\sum_{i=1}^{n}I(A_{i}X_{i};\tilde{B})_{\omega^{i}}+nf_{1}%
(\varepsilon)\nonumber\\
&  =nI(\tilde{A}\tilde{X};\tilde{B}|Z)_{\omega}+nf_{1}(\varepsilon)\nonumber\\
&  \leq nI(\tilde{A}\tilde{X}Z;\tilde{B})_{\omega}+nf_{1}(\varepsilon
).\nonumber
\end{align}
where \eqref{eq:telescope} follows from writing out the telescope sum and
observing that all terms except the first and last one cancel out. Step
\eqref{eq:bound} uses \eqref{eq:shi-cont} in each term of the
sum. The next inequality and the equality after that follow by invoking \eqref{eq:hadamard-entropy-ineqs}.
The final inequality is a consequence of the chain rule and non-negativity of mutual information.

For the second bound involving the coherent information term $I(A\rangle
B^{n}X)_{\sigma^{0}}$, consider that%
\begin{align}
I(A\rangle B^{n}X)_{\sigma^{0}}=H(B^{n}|X)_{\sigma^{0}}-H(E^{n}|X)_{\sigma
^{0}}.
\end{align}
We handle each term separately. First, 
by reasoning similar to that in the proof of \cite[Theorem~3.4]{SSWR15}, consider that%
\begin{align}
&  H(B^{n}|X)_{\sigma^{0}}\\
&  =H(\tilde{B}^{n}|X)_{\sigma^{n}}+H(B^{n}|X)_{\sigma^{0}}-H(\tilde{B}%
^{n}|X)_{\sigma^{n}}\\
&  =H(\tilde{B}^{n}|X)_{\sigma^{n}}\\
&  \quad+\sum_{t=1}^{n}\left[  H(B_{t}|\tilde{B}_{<t}B_{>t}X)_{\sigma^{t-1}%
}-H(\tilde{B}_{t}|\tilde{B}_{<t}B_{>t}X)_{\sigma^{t}}\right]  \\
&  \leq H(\tilde{B}^{n}|X)_{\sigma^{n}}+n\left(  2\varepsilon\log
|B|+g(\varepsilon)\right)  \\
&  \leq\sum_{i=1}^{n}H(\tilde{B}|X_{i})_{\omega^{i}}+n\left(  2\varepsilon
\log|B|+g(\varepsilon)\right)  \\
&  =nH(\tilde{B}|\tilde{X}Z)_{\omega}+n\left(  2\varepsilon\log
|B|+g(\varepsilon)\right)  .
\end{align}
By similar reasoning, we find that%
\begin{multline}
-H(E^{n}|X)_{\sigma^{0}}\leq-nH(\tilde{E}|\tilde{X}Z)_{\omega}\\
+n\left(  2\sqrt{2\varepsilon}\log|E|+g(\sqrt{2\varepsilon})\right)  .
\end{multline}
Thus, we conclude that%
\[
\frac{1}{n}I(A\rangle B^{n}X)_{\sigma^{0}}\leq H(\tilde{B}|\tilde{X}%
Z)_{\omega}-H(\tilde{E}|\tilde{X}Z)_{\omega}+f_{2}(\varepsilon),
\]
where $f_{2}(\varepsilon)$ is defined in 
\eqref{eq:f2-func}.
For the final term $I(X;B^{n})_{\sigma^{0}}+I(A\rangle B^{n}X)_{\sigma^{0}}$,
consider that%
\[
I(X;B^{n})_{\sigma^{0}}+I(A\rangle B^{n}X)_{\sigma^{0}}=H(B^{n})_{\sigma^{0}%
}-H(E^{n}|X)_{\sigma^{0}}.
\]
Since we already have a bound for $-H(E^{n}|X)_{\sigma^{0}}$, we just need to
bound $H(B^{n})_{\sigma^{0}}$. By applying similar reasoning as above, we find
that%
\begin{align}
H(B^{n})_{\sigma^{0}}  & \leq H(\tilde{B}^{n})_{\sigma^{n}}+n\left(
2\varepsilon\log|B|+g(\varepsilon)\right)  \\
& \leq\sum_{i=1}^{n}H(\tilde{B}_{i})_{\sigma^{n}}+n\left(  2\varepsilon
\log|B|+g(\varepsilon)\right)  \\
& =nH(\tilde{B}|Z)_{\omega}+n\left(  2\varepsilon\log|B|+g(\varepsilon
)\right)  .
\end{align}
Combining with the above, we then get that%
\begin{multline}
\frac{1}{n}\left[  H(B^{n})_{\sigma^{0}}-H(E^{n}|X)_{\sigma^{0}}\right]  \\
\leq H(\tilde{B}|Z)_{\omega}-H(\tilde{E}|\tilde{X}Z)_{\omega}+f_{2}%
(\varepsilon).
\end{multline}
Considering that%
\begin{align}
H(\tilde{B}|Z)_{\omega}-H(\tilde{E}|\tilde{X}Z)_{\omega}  & =I(\tilde
{X};\tilde{B}|Z)_{\omega}+I(A\rangle B\tilde{X}Z)_{\omega}\\
& \leq I(\tilde{X}Z;\tilde{B})_{\omega}+I(A\rangle B\tilde{X}Z)_{\omega}%
\end{align}
completes the proof.
\end{proof}

We end this section by mentioning that this kind of approach could be applied in the context of Hadamard broadcast channels,
which constitute the main class of quantum broadcast channels for which we have single-letter
capacity regions \cite{WDW17,QW17}.

\section{Application to particular qubit channels}\label{sec:applications}

The MATLAB code used to obtain the numerical results of this section is available in the ancillary files section on the arXiv page of this paper.
The code makes heavy use of the quantinf package \cite{Cub}, as well as the YALMIP package to solve SDPs \cite{Lof04}.

\subsection{On the amplitude damping channel}
The amplitude damping channel $\cA_p$ is defined for $p\in[0,1]$ as
\begin{align}
\cA_p(\rho) &= K_1\rho K_1^\dagger + K_2 \rho K_2^\dagger,
\label{eq:amplitude-damping}
\end{align}
where
\begin{align} 
K_1 &= |0\rangle \langle 0| + \sqrt{1-p} |1\rangle\langle 1|\\
K_2 &= \sqrt{p}|0\rangle \langle 1|.
\end{align}
An exact expression for the Holevo information of the amplitude damping channel was derived by \textcite{GF05}.
They showed that $\chi(\cA_p)$ is equal to
\begin{align}
\max_{q\in [0,1]} \left\lbrace h((1-p)q) - h\!\left(\frac{1+\sqrt{1-4p(1-p)q^2}}{2}\right) \right\rbrace\!,
\end{align}
where $h(x)\coloneqq -x\log x - (1-x)\log (1-x)$ denotes the binary entropy.
However, its classical capacity $C(\cA_p)$ is yet to be determined, which remains a major open problem in quantum information theory.

For low values of $p$, the amplitude damping channel belongs to the class of so-called \emph{low-noise channels} \cite{LLS17}.
This is the content of Proposition~\ref {prop:ampdamp-low-noise} below, whose proof we defer to Appendix~\ref {sec:ampdamp-properties}.
\begin{proposition}\label{prop:ampdamp-low-noise}
	For $p\in[0,1]$, we have
	\begin{align}
	\left\Vert \operatorname{id}-\mathcal{A}_{p}\right\Vert _{\diamond}=2p.
	\end{align}
\end{proposition}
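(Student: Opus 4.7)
The plan is to combine the phase covariance of $\cA_p$ with Proposition~\ref{prop:covariance} to cut the diamond-norm optimization down to a single real parameter, then finish by elementary algebra. For the lower bound $\|\id-\cA_p\|_\diamond \geq 2p$, it suffices to use the unentangled input $|1\rangle_A$: a direct computation gives $\id(|1\rangle\langle 1|) - \cA_p(|1\rangle\langle 1|) = p(|1\rangle\langle 1| - |0\rangle\langle 0|)$, which has trace norm $2p$.

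For the matching upper bound, I would observe that both $\id_A$ and $\cA_p$ are jointly covariant under the $\mathbb{Z}_2$ action generated by $Z = |0\rangle\langle 0| - |1\rangle\langle 1|$ (with the same representation on input and output): covariance of $\id$ is trivial, and for $\cA_p$ one checks that $K_1$ commutes with $Z$ while $ZK_2Z = -K_2$, so $\cA_p(Z\rho Z) = Z\cA_p(\rho)Z$. Since $\tfrac{1}{2}\|\cN-\cM\|_\diamond$ is the generalized channel divergence associated to the trace distance, Proposition~\ref{prop:covariance} restricts the supremum to purifications of $Z$-invariant (hence diagonal) qubit states. By the Schmidt decomposition, it therefore suffices to consider inputs of the form $|\psi_q\rangle_{RA} = \sqrt{q}\,|00\rangle + \sqrt{1-q}\,|11\rangle$ for $q\in[0,1]$.

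A direct calculation on this one-parameter family shows that $(\id_R \otimes (\id-\cA_p))(\psi_q)$ is supported on $\spn\{|00\rangle, |11\rangle, |10\rangle\}$, with one easily computed eigenvalue $-(1-q)p$ on $|10\rangle$ and two more coming from a $2\times 2$ block coupling $|00\rangle$ and $|11\rangle$. Setting $u = 1-q$ and $c = \sqrt{1-p}$ and summing absolute values of the three eigenvalues gives
\begin{align*}
u(1-c^2) + \sqrt{u^2(1-c^2)^2 + 4u(1-u)(1-c)^2}.
\end{align*}
Bounding this expression by $2(1-c^2) = 2p$ reduces, after squaring and cancelling a factor of $4(1-u)$, to the elementary inequality $u \leq (1+c)^2$, which holds trivially for $u \in [0,1]$ and $c \geq 0$. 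Equality is attained at $u=1$ (i.e.\ $q=0$), matching the lower bound.

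Conceptually, the covariance reduction is doing all of the work by eliminating the off-diagonal coherences on $A$ that would otherwise make the optimization unwieldy; the only remaining obstacle is the trace-norm computation, which is completely elementary once the input has been restricted to the diagonal family above.
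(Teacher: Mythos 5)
Your proposal is correct, and its skeleton coincides with the paper's \emph{analytical} argument: the same lower bound from the product input $|1\rangle$ (trace norm $2p$), the same joint covariance under $\{\one,Z\}$ combined with Proposition~\ref{prop:covariance} to reduce to inputs $\sqrt{q}\,|00\rangle+\sqrt{1-q}\,|11\rangle$, and the same spectrum of the difference operator --- indeed your radicand $u^{2}(1-c^{2})^{2}+4u(1-u)(1-c)^{2}$ equals the paper's $(1-q)\,r(p,q)$ upon using $(1-\sqrt{1-p})^{2}=2(1-\sqrt{1-p})-p$. Where you genuinely depart is the endgame. The paper's primary upper-bound proof is an SDP-duality certificate (a feasible $Z_{AB}$ with $\tr_{B}Z_{AB}=p\one_{A}$, with feasibility verified in an ancillary Mathematica notebook), and its analytical alternative shows $f(p,q)\leq f(p,0)=2p$ by calculus: computing $\tfrac{d}{dq}f$, checking $\tfrac{d}{dq}f\big|_{q=0}\leq 0$, and proving concavity-type monotonicity via $\tfrac{d^{2}}{dq^{2}}f\leq 0$. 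Your squaring argument replaces both finishes with pure algebra: since the right-hand side $(2-u)(1-c^{2})$ is nonnegative, squaring is an equivalence, $(2-u)^{2}-u^{2}=4(1-u)$ lets you cancel $4(1-u)$, and (a step you should make explicit) a further cancellation of $(1-c)^{2}$ leaves $u\leq(1+c)^{2}$, trivially true; the degenerate cases $u=1$ (your equality case, matching the lower bound) and $c=1$ (i.e., $p=0$) are immediate. This is shorter and more self-contained than either of the paper's routes --- no derivative estimates and no external certificate --- with the only trade-off being that the SDP approach additionally produces a reusable dual feasible point.
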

We have $\cA_0=\id$, the identity channel, which is trivially covariant under the full unitary group, and furthermore Hadamard, since the complementary channel of the identity channel is completely depolarizing and hence entanglement-breaking.
It is also easy to see that $\cA_1$ is entanglement-breaking.
Therefore, one might expect to get useful upper bounds on the classical capacity of $\cA_p$ using notions of approximate covariance or Hadamard-ness for $p\gtrsim 0$, or approximate entanglement-breaking for $p\lesssim 1$.

The results of \cite{LLS17} can be applied to show that a (super)linear behavior in the underlying noise parameter of an approximation parameter leads to a useless bound on the classical capacity, as the appearing error term involves the function $\eps\log\eps$, which has infinite slope at $\eps = 0$ if $\eps=O(p)$.
Unfortunately, all parameters that we introduced in Section~\ref {sec:approximate-channels} have such behavior in the amplitude damping parameter $p$: the covariance parameter $\cov_\cP(\cA_p)$, the entanglement-breaking parameter $\eb(\cA_p)$, and the Hadamard parameters $\Had(\cA_p)$ and $\Haddeg(\cA_p)$.
First, the developments in Section~\ref {sec:generalized-channel-divergence} lead to the following analytical formula for $\cov_\cP(\cA_p)$, which we prove in Appendix~\ref {sec:ampdamp-properties}:
\begin{proposition}\label{prop:amp-damp-cov-param}
	The covariance parameter $\cov_\cP(\mathcal{A}_{p})$ of an amplitude
	damping channel $\cA_p$ with damping parameter $p\in\left[  0,1\right]  $ with respect to the Pauli group $\cP$ is given by
	\begin{align}
	\cov_\cP(\mathcal{A}_{p})=\frac{p}{2}.
	\label{eq:amp-damping-cov}%
	\end{align}
	
\end{proposition}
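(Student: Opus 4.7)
The plan is to apply Proposition~\ref{prop:covariance} to the diamond-norm generalized divergence between $\cA_p$ and its Pauli-twirled version $(\cA_p)_\cP$. The key observation is that $\cA_p$ enjoys a residual $Z$-symmetry $\cA_p(Z\rho Z) = Z\cA_p(\rho)Z$ inherited from its full phase-covariance, while $(\cA_p)_\cP$ is manifestly Pauli-covariant by construction. Both channels are therefore jointly covariant under the order-two group $G = \{I,Z\}$, so Proposition~\ref{prop:covariance} restricts the supremum defining $\cov_\cP(\cA_p) = \frac{1}{2}\|\cA_p - (\cA_p)_\cP\|_\diamond$ to purifications of input states $\psi_A$ fixed by $\psi_A = \frac{1}{2}(\psi_A + Z\psi_A Z)$, i.e., states diagonal in the computational basis.

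I would carry this out in four short steps. (i) Verify the $\{I,Z\}$-covariance of $\cA_p$ directly using $K_1 Z = Z K_1$ and $K_2 Z = -Z K_2$, which together yield $\cA_p(Z\rho Z) = Z\cA_p(\rho)Z$. (ii) Compute $(\cA_p)_\cP$ explicitly as a Pauli channel by first evaluating $\cA_p(I) = I + pZ$, $\cA_p(X) = \sqrt{1-p}\,X$, $\cA_p(Y) = \sqrt{1-p}\,Y$, $\cA_p(Z) = (1-p)Z$ and then solving the resulting linear system for the Pauli-channel weights (yielding $q_X = q_Y = p/4$ and $q_I - q_Z = \sqrt{1-p}$). (iii) Parametrize the admissible inputs as $\psi_A = q|0\rangle\langle 0| + (1-q)|1\rangle\langle 1|$ with canonical purification $|\psi\rangle_{RA} = \sqrt{q}|00\rangle + \sqrt{1-q}|11\rangle$, and compute both $\cA_p(\psi_{RA})$ and $(\cA_p)_\cP(\psi_{RA})$ explicitly in the basis $\{|00\rangle, |01\rangle, |10\rangle, |11\rangle\}$. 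A direct calculation then shows that both outputs produce the same $|00\rangle\langle 11|$ coherence $\sqrt{q(1-q)(1-p)}$, so the difference is the diagonal operator $\frac{p}{2}\,\mathrm{diag}(q,-q,1-q,-(1-q))$, whose trace norm equals $p$ uniformly in $q$. (iv) Conclude $\cov_\cP(\cA_p) = p/2$.

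The main obstacle is the symmetry-reduction step: without Proposition~\ref{prop:covariance} one would need to rule out by hand that an input with nontrivial off-diagonal structure on $A$ achieves a strictly larger value, which is not at all obvious from the SDP~\eqref{eq:diamondnorm-SDP} alone. Once the reduction is in hand, the remaining computation is pleasantly rigid: the mixing parameter $q$ drops out of the trace norm entirely, so the supremum is attained by \emph{every} admissible invariant input. A minor side-check is that the Pauli twirl does not destroy the $\{I,Z\}$-covariance of $\cA_p$, but this is automatic because $\{I,Z\}$ is a subgroup of the twirling group $\cP$.
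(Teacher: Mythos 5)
Your proof is correct, and it reaches \eqref{eq:amp-damping-cov} by a genuinely different route than the paper's. The paper cannot invoke Proposition~\ref{prop:covariance} directly with $G=\cP$, since $\cA_p$ and $\cA_{p,\cP}$ are \emph{not} jointly Pauli-covariant; instead it applies the underlying Lemma~\ref{lemma:cov-critical-step} with the trace distance, using that $\{U_A(g)\}_{g\in\cP}$ is a one-design (so every group average is $\pi_A$) to show the maximally entangled state $\Phi_{RA}$ achieves the diamond norm, and then uses the $Z$-covariance of $\cA_p$ to collapse the twirl to $\cA_{p,\cP}=\frac{1}{2}\left[\cA_p+\sigma_X\cA_p(\sigma_X\cdot\sigma_X)\sigma_X\right]$ and evaluate the trace norm at $\Phi_{RA}$ to be $p$. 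You sidestep that argument by observing that $\cA_p$ and $(\cA_p)_\cP$ \emph{are} jointly covariant under the residual subgroup $\{I,Z\}$, so Proposition~\ref{prop:covariance} legitimately reduces the optimization to purifications $\sqrt{q}\,|00\rangle+\sqrt{1-q}\,|11\rangle$ of diagonal inputs; your explicit twirl computation (transfer coefficients $(1,\sqrt{1-p},\sqrt{1-p},1-p)$, equivalently $q_X=q_Y=p/4$, $q_I-q_Z=\sqrt{1-p}$) is correct, the $|00\rangle\langle 11|$ coherences indeed cancel, and the difference $\frac{p}{2}\operatorname{diag}(q,-q,1-q,-(1-q))$ has trace norm $p$ for every $q$, which I have verified. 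Two points you use implicitly are standard but worth stating: the trace distance has the direct-sum property required by Proposition~\ref{prop:covariance}, and the trace norm of the output difference depends only on $\psi_A$ (all purifications with $R\cong A$ are related by a unitary on $R$), so the canonical purification suffices. As for what each approach buys: the paper's argument yields the structural fact that the maximally entangled state is exactly optimal, while yours shows the optimum is flat across the entire family of invariant inputs (consistent with the paper at $q=1/2$) and produces the twirled Pauli channel in closed form, which is independently useful when evaluating $\chi(\cN_\cP)$ via Lemma~\ref{lem:1-design-holevo} in Corollary~\ref{cor:cov-upper-bounds}.
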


We can also derive an analytical lower bound on the entanglement-breaking parameter $\eb(\cA_p)$, showing that the latter has at least linear behavior in $p$.
The proof of the following result is again deferred to Appendix~\ref {sec:ampdamp-properties}:

\begin{proposition}\label{prop:amp-damp-eb-param}
	The entanglement-breaking parameter $\eb(\cA_p)$ of an amplitude damping channel $\cA_p$ with damping parameter $p\in [0,1]$ satisfies
	\begin{align}
	\eb(\cA_p) \geq \frac{(1-p)(2\sqrt{1-p}-p)}{4(1-p)-p^2} \geq \frac{1-p}{2}.
	\end{align}
\end{proposition}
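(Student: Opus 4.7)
The plan is to lower bound $\eb(\cA_p)$ by exhibiting a single explicit input state and POVM element that witness a nonzero distance from $\cA_p$ to every entanglement-breaking channel. Combining the definition of the diamond norm with Helstrom's theorem, for any pure state $\psi^\lambda_{RA}$ and any POVM element $0\leq \Lambda_{RB}\leq \one$,
\begin{align}
\tfrac{1}{2}\|\cA_p-\cM\|_\diamond \geq \tr\bigl(\Lambda\,[(\id\ox\cA_p)(\psi^\lambda)-(\id\ox\cM)(\psi^\lambda)]\bigr),
\end{align}
so it suffices to evaluate the first term and to upper bound $\tr(\Lambda\,(\id\ox\cM)(\psi^\lambda))$ uniformly over $\cM\in\kE$. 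I would take both the input and the witness to be of the same Schmidt-like form: $|\psi^\lambda\rangle = \sqrt{\lambda}|00\rangle + \sqrt{1-\lambda}|11\rangle$ on $RA$, and $\Lambda = |\psi^\mu\rangle\langle\psi^\mu|$ on $RB$, with $\lambda,\mu\in[0,1]$ to be optimized at the end.

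The first term $\tr(\Lambda\,(\id\ox\cA_p)(\psi^\lambda))$ is a direct computation from the Kraus operators of $\cA_p$, producing $\mu\lambda + 2\sqrt{\mu(1-\mu)\lambda(1-\lambda)(1-p)} + (1-\mu)(1-\lambda)(1-p)$. The key step, and the main obstacle, is the EB upper bound $\max_{\cM\in\kE}\tr(\Lambda\,(\id\ox\cM)(\psi^\lambda)) \leq \mu\lambda + (1-\mu)(1-\lambda)$. Since $(\id\ox\cM)(\psi^\lambda)$ is separable on $RB$ with $R$-marginal $\mathrm{diag}(\lambda,1-\lambda)$, I would bound the maximum over this larger class of separable states with the prescribed marginal. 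Writing an arbitrary such $\tau = \sum_k p_k|a_k\rangle\langle a_k|\ox|b_k\rangle\langle b_k|$ and applying Cauchy--Schwarz term-by-term bounds $|\langle\psi^\mu|a_k,b_k\rangle|^2$ by $\mu|\langle 0|a_k\rangle|^2 + (1-\mu)|\langle 1|a_k\rangle|^2$; summing against the marginal constraint $\sum_k p_k|a_k\rangle\langle a_k| = \mathrm{diag}(\lambda,1-\lambda)$ produces the claimed upper bound. Note that without the marginal constraint Cauchy--Schwarz alone would yield only $\max(\mu,1-\mu)$, so the interplay between Cauchy--Schwarz and the prescribed marginal is essential for the tight estimate.

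Subtracting the two values leaves $2\sqrt{\mu(1-\mu)\lambda(1-\lambda)(1-p)} - p(1-\mu)(1-\lambda)$ as a valid lower bound on $\eb(\cA_p)$ for every $\lambda,\mu\in[0,1]$. The expression is symmetric in $\lambda\leftrightarrow\mu$, so the critical point lies on the diagonal $\lambda = \mu$; solving the resulting one-variable stationarity condition with $s\coloneqq\sqrt{1-p}$ gives $\lambda^\star = (s+p)/(2s+p)$, and substitution produces the value $(1-p)/(2\sqrt{1-p}+p)$, which rationalizes to $\tfrac{(1-p)(2\sqrt{1-p}-p)}{4(1-p)-p^2}$, proving the first inequality.

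The second inequality is then purely algebraic: setting $s = \sqrt{1-p}$ and clearing denominators reduces the claim to $(s-1)^2(s^2+2s-1)\geq 0$ when $p\leq 2\sqrt{2}-2$ (so that $4(1-p)-p^2>0$), and to the reverse sign when $p > 2\sqrt{2}-2$. Both cases are settled by the factorization: $(s-1)^2\geq 0$ always, while $s^2+2s-1$ is positive for $s>\sqrt{2}-1$ and negative for $s<\sqrt{2}-1$, matching the sign of the denominator on each side.
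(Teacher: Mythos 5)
Your proof is correct, and it takes a genuinely different route from the paper's. The paper proves the first inequality by exhibiting a feasible point of the primal SDP \eqref{eq:EB-SDP-primal} obtained by dualizing \eqref{eq:EB-SDP}: a rank-one ansatz for $N_{AB}$ supported on $\spn\{|00\rangle,|11\rangle\}$ with parameters $(q_1,q_2,r)$ satisfying $r^2=q_1q_2$ and $r=1-\tfrac{1}{2}(q_1+q_2)$, reverse-engineered from numerical optima, with the remaining constrained optimization delegated to Lagrange multipliers in an attached Mathematica notebook. You instead give a self-contained operational argument: Helstrom's theorem with input $\psi^\lambda$ and witness $\Lambda=|\psi^\mu\rangle\langle\psi^\mu|$, together with the overlap bound $\langle\psi^\mu|\tau_{RB}|\psi^\mu\rangle\leq\mu\lambda+(1-\mu)(1-\lambda)$ for every separable $\tau_{RB}$ with $\tr_B\tau_{RB}=\rho_\lambda\coloneqq\mathrm{diag}(\lambda,1-\lambda)$. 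I verified your steps: the Cauchy--Schwarz bound (normalization of $|b_k\rangle$ plus the marginal constraint is exactly what improves on the trivial $\max(\mu,1-\mu)$), the value of $\tr\bigl(\Lambda\,(\id\ox\cA_p)(\psi^\lambda)\bigr)$ (the $K_2$ Kraus term contributes nothing since $|\psi^\mu\rangle\perp|10\rangle$), the stationary point $\lambda^\star=(s+p)/(2s+p)$ with value $(1-p)/(2s+p)$, the rationalization, and the final sign analysis all check out. In fact the two proofs are dual faces of the same certificate: setting $(q_1,q_2,r)=(2\lambda^2,\,2(1-\lambda)^2,\,2\lambda(1-\lambda))$, i.e., $N_{AB}=2(\sqrt{\rho_\lambda}\ox\one)\,|\psi^\lambda\rangle\langle\psi^\lambda|\,(\sqrt{\rho_\lambda}\ox\one)$, satisfies the paper's constraints ($r^2=q_1q_2$ is precisely rank-one-ness, and $q_1+q_2+2r=2$ is the normalization $\tr M_A=2$), and the paper's reduced objective $\tfrac{1}{2}(2r\sqrt{1-p}-pq_2)$ becomes literally your diagonal expression $g(\lambda)=2s\lambda(1-\lambda)-p(1-\lambda)^2$. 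Thus your argument supplies, in closed form and without computer algebra, the analytic content the paper obtains from numerics plus Mathematica, and your separable-overlap lemma with a prescribed marginal is the conceptual payoff, since it explains \emph{why} the bound holds; what the paper's route buys is that the SDP template certifies feasibility mechanically and meshes with the numerics suggesting $f(p)$ equals $\eb(\cA_p)$ exactly. Two cosmetic points: symmetry of your two-variable expression under $\lambda\leftrightarrow\mu$ does not by itself force a maximizer onto the diagonal (symmetric functions can have off-diagonal maximizer pairs), but this is immaterial because any choice of $(\lambda,\mu)$ yields a valid lower bound, and your diagonal restriction is a concave quadratic in $\lambda$ (leading coefficient $-(2s+p)$), so $\lambda^\star$ is its maximum and already attains the claimed value; and at $p=2\sqrt{2}-2$ the rationalized expression is formally $0/0$, so it is slightly cleaner to quote the bound in your unrationalized form $(1-p)/(2\sqrt{1-p}+p)$, which your derivation produces directly.
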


Moreover, as shown in Figure~\ref {fig:ampdamp-hadamard}, numerics demonstrate that both Hadamard parameters $\Had_{\cS}(\cA_p)$ (with $\cS=\lbrace |0\rangle,|1\rangle\rbrace$) and $\Haddeg(\cA_p)$ are at least linear in $p$.

\begin{figure}[t]
	\centering
	\includegraphics{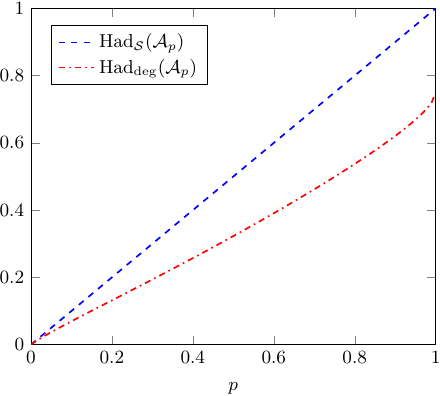}
	\caption{Plot of the Hadamard parameters of the amplitude damping channel $\cA_p$. The Hadamard parameter $\Had_{\cS}(\cA_p)$ with $\cS=\lbrace |0\rangle,|1\rangle\rbrace$, as given in \eqref{eq:approximate-hadamard-S}, is plotted in blue/dashed, and the Hadamard parameter $\Haddeg(\cA_p)$, as given in Definition~\ref {def:approximate-hadamard-deg}, is plotted in red/dash-dotted.}
	\label{fig:ampdamp-hadamard}
\end{figure}

Recently, \textcite{WXD16} derived the following strong converse upper bound on the classical capacity of a quantum channel $\cN\colon A\to B$:
\begin{align}
C(\cN) \leq C_\beta(\cN) \coloneqq \log \beta(\cN),\label{eq:C_beta}
\end{align}
where the quantity $\beta(\cN)$ is the solution to the following SDP, with $N_{AB}$ denoting the Choi operator of $\cN$:
\begin{align}
\begin{aligned}
\text{\normalfont minimize: } & \tr(S_B)\\
\text{\normalfont subject to: } & -R_{AB} \leq N_{AB}^{T_B} \leq R_{AB}\\
& -\one_A \ox S_B \leq R_{AB}^{T_B} \leq \one_A \ox S_B.
\end{aligned}
\label{eq:beta-sdp}
\end{align}
The quantity $C_\beta(\cN)$ is a strong converse bound in the sense that for any sequence of classical codes with rate above the classical capacity $C(\cN)$, the success probability of that code sequence converges to 0 exponentially fast.

For the amplitude damping channel $\cA_p$, the quantity $C_\beta(\cA_p)$ constitutes the best known upper bound on $C(\cA_p)$ for $p\in [0,1/2]$ \cite{WXD16}, and it can be expressed in a closed form,
\begin{align}
C_\beta(\cA_p) = \log(1+\sqrt{1-p}).
\end{align}
We plot this bound in Figure~\ref {fig:amp-damp}, together with the Holevo information $\chi(\cA_p)$, which is a lower bound on the classical capacity $C(\cA_p)$.
Evidently, $C_\beta(\cA_p)$ is not tangent to the Holevo information $\chi(\cA_p)$. 
This indicates that a \emph{suitable} notion of approximate additivity of the Holevo information of $\cA_p$ would \emph{necessarily} improve upon $C_\beta(\cA_p)$ for low values of $p$ due to continuity.
Here, suitable means that the resulting upper bound is tangent to $\chi(\cA_p)$.
However, our arguments made above, together with the results from \cite{LLS17}, show that the approximation parameters defined in the present paper are \emph{not} suitable in this sense.
It therefore remains an interesting open question to find suitable approximation parameters for $\cA_p$ with the correct sublinear behavior in the damping parameter $p$.

\begin{figure}[t]
	\centering
	\includegraphics{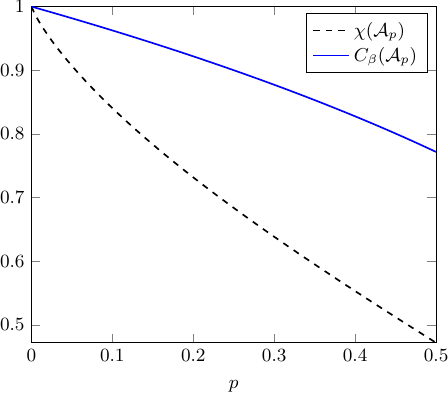}
	\caption{Lower and upper bound on the classical capacity $C(\cA_p)$ of the amplitude damping channel $\cA_p$ defined in \eqref{eq:amplitude-damping}.
		The Holevo information $\chi(\cA_p)$, a lower bound on $C(\cA_p)$, is plotted in black/dashed, and the strong converse upper bound $C_\beta(\cA_p)$, defined in \eqref{eq:C_beta} and equal to $\log(1+\sqrt{1-p})$, is plotted in blue/solid.}
	\label{fig:amp-damp}
\end{figure}

As mentioned above, the quantity $C_\beta(\cN)$ provides a strong bound on the classical capacity $C(\cN)$ of a quantum channel, as exhibited with the example of the amplitude damping channel.
Moreover, it can be easily computed for any channel $\cN$ via its SDP representation given in \eqref{eq:beta-sdp}.
For these reasons, we use $C_\beta(\cN)$ as a benchmark for upper bounds on the classical capacity of certain examples of channels in the following sections.
These bounds are obtained from Corollaries~\ref {cor:cov-upper-bounds},  \ref {cor:eb-upper-bound}, and \ref {cor:hadamard-upper-bound}, respectively.

\subsection{Convex mixture of amplitude damping and depolarizing noise}

\begin{figure*}[t]
	\centering
	\includegraphics{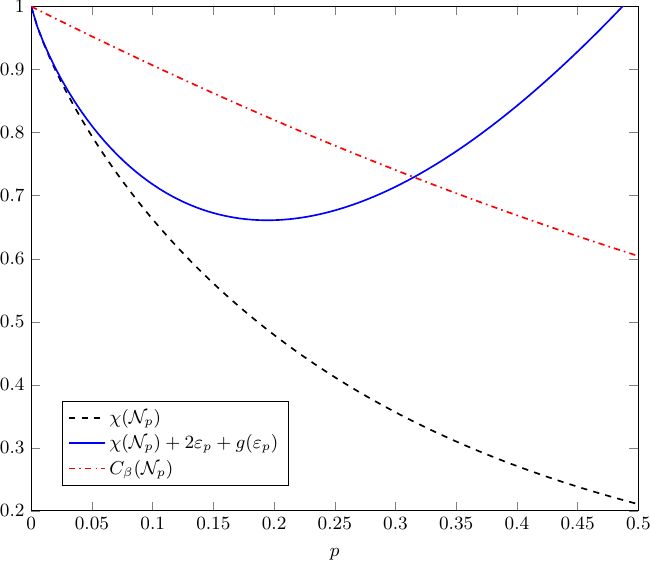}
	\caption{Upper and lower bounds on the classical capacity $C(\cN_p)$ of the channel $\cN_p$ defined in \eqref{eq:conv-mix-ampdamp-depol}. 
		The Holevo information $\chi(\cN_p)$, a lower bound on $C(\cN_p)$, is plotted in black/dashed, the upper bound $\chi(\cN_p) + 2\eps_p + g(\eps_p)$ on $C(\cN_p)$ from Corollary~\ref {cor:cov-upper-bounds}\MakeUppercase (\ref {item:cov-qubits}) with the covariance parameter $\eps_p\coloneqq \cov_\cP(\cN_p)=\frac{1}{2}p^2$ is plotted in blue/solid, and the strong converse upper bound $C_\beta(\cN_p)$ defined in \eqref{eq:C_beta} is plotted in red/dash-dotted.}
	\label{fig:conv-mix-ampdamp-depol}
\end{figure*}

We first consider a channel that is a convex mixture of an amplitude damping channel and a depolarizing channel.
More precisely, for $p\in [0,1]$ we define the channel
\begin{align}
\cN_p \coloneqq p \cA_p + (1-p) \cD_p,
\label{eq:conv-mix-ampdamp-depol}
\end{align}
where the amplitude damping channel $\cA_p$ is defined in \eqref{eq:amplitude-damping}, and $\cD_p$ denotes the qubit depolarizing channel 
\begin{align}
\cD_p(\rho) = (1-p) \rho + \frac{p}{3}(X\rho X + Y\rho Y + Z\rho Z).
\end{align}
The depolarizing channel $\cD_p$ is covariant under the full unitary group for all $p$, and hence also covariant with respect to the Pauli group $\cP$.
We therefore expect the channel $\cN_p$ to be almost covariant with respect to the Pauli group for small values of $p$, so that Corollary~\ref {cor:cov-upper-bounds} applies.
Indeed, we have the following immediate consequence of Proposition~\ref {prop:amp-damp-cov-param}:
\begin{corollary}
	The covariance parameter of the channel $\mathcal{N}_p$ defined in \eqref{eq:conv-mix-ampdamp-depol} with respect to the Pauli group $\cP$ is given by
	\begin{align}
	\operatorname{cov}_{\cP}(\mathcal{N}_p)=\frac{p^{2}}{2}.
	\end{align}
\end{corollary}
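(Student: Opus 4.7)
The plan is to exploit linearity of the twirling operation together with the fact that the depolarizing channel $\cD_p$ is already Pauli-covariant, reducing the computation to the amplitude damping case covered by Proposition~\ref{prop:amp-damp-cov-param}.

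First I would observe that the map $\cN \mapsto \cN_\cP$ defined in \eqref{eq:twirled-channel} is linear in $\cN$, since it is just an average of conjugations. Next, since $\cD_p$ is covariant with respect to every unitary (it is a depolarizing channel), it is in particular covariant with respect to the Pauli group $\cP$, so $(\cD_p)_\cP = \cD_p$. Applying linearity of twirling to the definition \eqref{eq:conv-mix-ampdamp-depol} of $\cN_p$ then gives
\begin{align}
(\cN_p)_\cP = p\,(\cA_p)_\cP + (1-p)\,\cD_p.
\end{align}

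Subtracting this from $\cN_p = p\cA_p + (1-p)\cD_p$, the depolarizing contribution cancels exactly, yielding
\begin{align}
\cN_p - (\cN_p)_\cP = p\,\bigl(\cA_p - (\cA_p)_\cP\bigr).
\end{align}
Taking diamond norms, using its absolute homogeneity, and invoking Proposition~\ref{prop:amp-damp-cov-param} together with Definition~\ref{def:approximate-covariance} gives
\begin{align}
\cov_\cP(\cN_p) = \tfrac{1}{2}\|\cN_p - (\cN_p)_\cP\|_\diamond = p\cdot\cov_\cP(\cA_p) = \tfrac{p^2}{2}.
\end{align}

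There is no real obstacle here: the argument is a one-line computation once the linearity of twirling and the Pauli-covariance of $\cD_p$ are noted. The only step that warrants explicit mention in the writeup is the cancellation of the $\cD_p$ term, which is what upgrades the linear scaling $\cov_\cP(\cA_p) = p/2$ into the quadratic scaling $p^2/2$ for the mixture, reflecting the fact that the ``non-covariant piece'' of $\cN_p$ comes with a prefactor of $p$ from the convex combination.
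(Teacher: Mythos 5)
Your proof is correct and matches the paper's intent: the paper states this corollary as an ``immediate consequence'' of Proposition~\ref{prop:amp-damp-cov-param}, and the implicit mechanism is exactly what you spell out --- linearity of the twirl, invariance $(\cD_p)_\cP = \cD_p$ from unitary covariance of the depolarizing channel, and homogeneity of the diamond norm, giving $\cov_\cP(\cN_p) = p\,\cov_\cP(\cA_p) = p^2/2$. Nothing is missing; your writeup simply makes explicit the one-line cancellation the paper leaves to the reader.
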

The resulting upper bound on $C(\cN_p)$ obtained via Corollary~\ref {cor:cov-upper-bounds}\MakeUppercase (\ref {item:cov-qubits}) is plotted in Figure~\ref {fig:conv-mix-ampdamp-depol} together with the Holevo information $\chi(\cN_p)$, and the strong converse upper bound $C_\beta(\cN_p)$ defined in \eqref{eq:C_beta}.
Numerical investigations (that were verified using the results from \cite{SSMR16}) showed that $\chi(\cN_p)$ is equal to the Holevo information $\chi(\cN_{p,\,\cP})$ of the twirled channel $\cN_{p,\,\cP}$ computed via Lemma~\ref {lem:1-design-holevo}, which enables us to use the stronger upper bound $\chi(\cN_p) + 2\eps_p + g(\eps_p)$ in Figure~\ref {fig:conv-mix-ampdamp-depol}.

\subsection{Composition of amplitude damping and dephasing noise}

\begin{figure*}[t]
	\centering
	\includegraphics{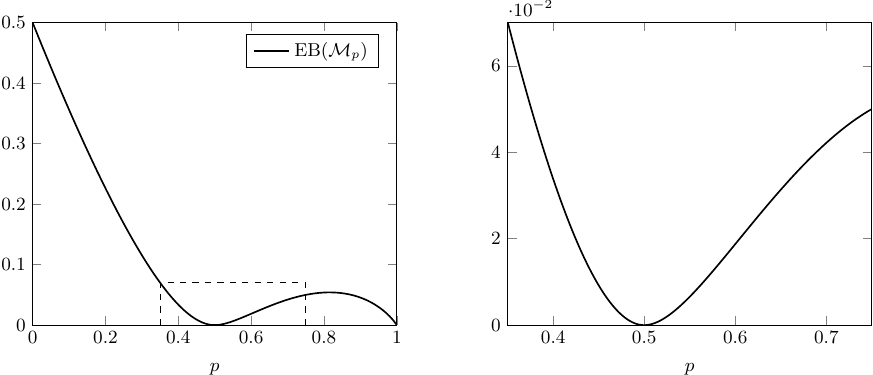}
	\caption{Plot of the entanglement-breaking parameter $\eb(\cM_p)$ of the channel $\cM_p$ defined in \eqref{eq:comp-ampdamp-dep} for the interval $p\in[0,1]$ (left) and zoomed in on the interval $p\in[0.35,0.75]$ (right).}
	\label{fig:comp-ampdamp-dep-eb-parameter}
\end{figure*}

We now consider the channel
\begin{align}
\cM_p \coloneqq \cA_p\circ \cZ_p,
\label{eq:comp-ampdamp-dep}
\end{align}
where $\cZ_p$ denotes the $Z$-dephasing channel
\begin{align}
\cZ_p(\rho) = (1-p)\rho + p Z\rho Z.
\end{align}
This channel was, e.g., considered by \textcite{ABDPST09} in the context of fault-tolerant quantum computation.

Let us explicitly calculate the Choi operator $M_{1/2}$ of $\cM_{1/2}$.
It is easy to see that $(\id_A\ox \cZ_{1/2})(\gamma) = |00\rangle\langle 00| + |11\rangle \langle 11|$, where $|00\rangle \equiv |0\rangle\ox|0\rangle$ and $|11\rangle \equiv |1\rangle \ox |1\rangle$.
The action of the Kraus operators $\lbrace K_1, K_2\rbrace$ of $\cA_{1/2}$ (given in \eqref{eq:amplitude-damping}) on the computational basis is
\begin{align}
K_1 |0\rangle &= |0\rangle & K_2 |0\rangle &= 0\\
K_1 |1\rangle &= \frac{1}{\sqrt{2}} |1\rangle & K_2|1\rangle &= \frac{1}{\sqrt{2}} |0\rangle,
\end{align}
from which we obtain $M_{1/2} = |00\rangle\langle 00| + |1\rangle\langle 1| \ox \pi_2$.
Since $M_{1/2}$ is manifestly separable, $\cM_{1/2}$ is entanglement-breaking, and in a neighborhood of $p=\frac{1}{2}$ we expect $\cM_p$ to be approximately entanglement-breaking according to Definition~\ref {def:approximate-entanglement-breaking}.
In Figure~\ref {fig:comp-ampdamp-dep-eb-parameter}, the entanglement-breaking parameter $\eb(\cM_p)$ is plotted as a function of $p$ for the whole interval $p\in[0,1]$ and zoomed in on the interval $p\in[0.35,0.75]$.
Evidently, the numerics suggest a quadratic dependence of $\eb(\cM_p)$ on $p$, i.e., $\eb(\cM_p)=O(p^2)$.
The resulting upper bound from Corollary~\ref {cor:eb-upper-bound} on the classical capacity $C(\cM_p)$ is plotted in Figure~\ref {fig:comp-ampdamp-dep}, both in terms of the original channel $\cM_p$ and the entanglement-breaking channel $\cM_p^{\mathrm{EB}}$ found by the SDP in \eqref{eq:EB-SDP} that is closest to $\cM_p$ in diamond distance. 
The plot also includes the Holevo information $\chi(\cM_p)$ and the strong converse upper bound $C_\beta(\cM_p)$.
The computation of $\chi(\cM_p)$ and $\chi(\cM_p^{\mathrm{EB}})$ was verified using the methods from \cite{SSMR16}.

\begin{figure*}[t]
	\centering
	\includegraphics{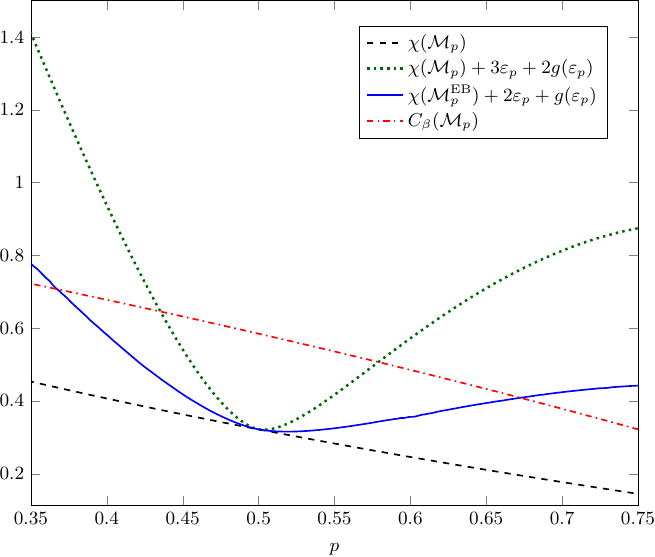}
	\caption{Upper and lower bounds on the classical capacity $C(\cM_p)$ of the channel $\cM_p$ defined in \eqref{eq:comp-ampdamp-dep}. 
		The Holevo information $\chi(\cM_p)$ is plotted in black/dashed, the upper bound $\chi(\cM_p) + 3\eps_p + 2g(\eps_p)$ on $C(\cM_p)$ from Corollary~\ref {cor:eb-upper-bound} with the entanglement-breaking parameter $\eps_p\coloneqq \eb(\cM_p)$ is plotted in green/dotted, the upper bound $\chi(\cM^{\mathrm{EB}}_p) + 2\eps_p + g(\eps_p)$ from Corollary~\ref {cor:eb-upper-bound} is plotted in blue/solid, and the strong converse upper bound $C_\beta(\cM_p)$ defined in \eqref{eq:C_beta} is plotted in red/dash-dotted.
		The channel $\cM^{\mathrm{EB}}_p$ is the entanglement-breaking channel closest in diamond distance to $\cM_p$ found by the SDP in \eqref{eq:EB-SDP}.}
	\label{fig:comp-ampdamp-dep}
\end{figure*}

\section{Concluding remarks and open problems}\label{sec:conclusion}
A quantum channel can be used in different contexts, and depending on the information-processing task there are different capacities characterizing the channel.
Most of these capacities --- including the classical, quantum, and private capacity --- are given in terms of regularized formulae that are intractable to evaluate for most channels.\footnote{A notable exception is the \emph{entanglement-assisted classical capacity} of a quantum channel, for which a single-letter formula was derived by \textcite{BSST99}.}
This is both good and bad news: Good, because entanglement between different uses of a quantum channel clearly enhances some its communication capabilities, giving rise to interesting effects such as superadditivity of coherent \cite{DSS98} and private information \cite{SRS08}, and superactivation of quantum capacity \cite{SY08};
bad, because the regularization severely limits our understanding of these capacities.
Hence, we need to resort to deriving useful lower and upper bounds on these capacities, as well as single out classes of channels for which the respective capacity formulae reduce to single-letter ones due to additivity of the underlying quantities (Holevo, coherent, and private information).

In the case of the quantum capacity, the largest class of channels with additive coherent information is the class of so-called informationally degradable channels \cite{CLS17}, which include degradable and conjugate degradable channels.\footnote{It is worthwhile to point out that there is no known example of an informationally degradable channel that is not also (conjugate) degradable. It remains an interesting open problem to determine whether the two sets are identical or not.}
To obtain upper bounds on the quantum capacity, \textcite{SSWR15} defined an approximate version of degradability, and showed that the desirable additivity properties of exactly degradable quantum channels are approximately preserved.
Their approach yields efficiently computable upper bounds on the quantum capacity, and works particularly well for so-called low-noise channels \cite{LLS17}, which are close in diamond norm to the ideal channel and hence `almost' degradable.

In this paper, we applied the approach of \cite{SSWR15} to the classical capacity.
In contrast to the quantum capacity, there are various distinct classes of channels with (weakly or strongly) additive Holevo information, including entanglement-breaking channels, unital qubit channels, depolarizing channels, and Hadamard channels.
This presented us with the option of defining multiple `approximation parameters': a covariance parameter (Definition~\ref {def:approximate-covariance}), an entanglement-breaking parameter (Definition~\ref {def:approximate-entanglement-breaking}), and two Hadamard parameters (Definition~\ref {def:approximate-hadamard} and Definition~\ref {def:approximate-hadamard-deg}).
For the first two parameters, we found interesting examples of channels (the channels $\cN_p$ and $\cM_p$ defined in \eqref{eq:conv-mix-ampdamp-depol} and \eqref{eq:comp-ampdamp-dep} of Section~\ref {sec:applications}, respectively) that exhibited the usefulness of our approach.
More generally, we note here that one can obtain single-letter upper bounds on the triple trade-off capacities \cite{HW10,HW10b,WH12,Wil13} of approximately Hadamard channels, given that Hadamard channels are quite special, as their triple-trade-off capacity regions are single-letter \cite{BHTW10,WH12,Wil13} in addition to their classical capacities.

We were not able to find an interesting channel for which the notion of approximate Hadamard-ness leads to good upper bounds on its classical capacity.
More precisely, we were looking for a channel $\cL_p$ defined in terms of a noise parameter $p\in[0,1]$, satisfying
\begin{align}
\Had(\cL_p)=O(p^a) \qquad \text{or} \qquad \Haddeg(\cL_p) =O(p^a)
\label{eq:good-hadamard-channel}
\end{align}
for some $a>1$, where $\Had(\cdot)$ and $\Haddeg(\cdot)$ are the Hadamard parameters defined in Definition~\ref {def:approximate-hadamard} and Definition~\ref {def:approximate-hadamard-deg}, respectively.
The results of \cite{LLS17} show that $a>1$ is necessary for the upper bound on $C(\cL_p)$ from Corollary~\ref {thm:approximate-additivity} to be non-trivial.
It remains an interesting open problem to find a channel satisfying \eqref{eq:good-hadamard-channel}.

On a more general note, we once again mention that our approach was not helpful for gaining a better understanding of the classical capacity of the amplitude damping channel $\cA_p$.
This is because for this channel all four approximation parameters introduced in this paper have at least linear behavior in the amplitude damping parameter $p$ (cf.~Section~\ref {sec:applications}), rendering the resulting upper bounds on $C(\cA_p)$ trivial.
We are curious as to whether a variation of the approximation methods of \cite{SSWR15} and the present paper might yield new insights in this matter, or whether a different approach, e.g., in the spirit of \cite{SS08,LDS17}, is needed.

\paragraph*{Acknowledgments.} We would like to thank David Sutter for providing MATLAB code for the results of \cite{SSMR16}, and Kun Wang for pointing out a minor issue with the numerical data for Figure~\ref{fig:ampdamp-hadamard}.
Part of this work was done during the workshop `Beyond I.I.D. in Information Theory’, hosted by the Institute for Mathematical Sciences, Singapore, 24-28 July 2017.
FL acknowledges support by the National Science Foundation under Grant Number 1125844, and appreciates the hospitality of the Hearne Institute for Theoretical Physics at Louisiana State University, Baton Rouge, where part of this work was done.
EK and MMW acknowledge support from the Office of Naval Research and the National Science Foundation.

\appendix
\section{Proof of \texorpdfstring{Lemma~\ref {lem:2-design-additivity}}{Lemma \ref{lem:2-design-additivity}}}\label{sec:bitwirl}
Before proving Lemma~\ref {lem:2-design-additivity}, we need the following result, whose proof we give for the sake of completeness:
\begin{lemma}[\cite{VW01}]\label{lem:bitwirl}
	Let $\cH$ and $\cH'$ be isomorphic Hilbert spaces with $d = \dim\cH = \dim\cH'$, and let $T\in\cB(\cH'\ox\cH)$ be a Hermitian operator.
	Then,
	\begin{multline}
	\int_\cU (U \ox \bar{U}) T (U\ox \bar{U})^\dagger d\mu(U) \\ = \frac{t-f}{d^2-1} \one_{\cH'\ox\cH} + \frac{d^2f-t}{d^2-1} \Phi,
	\end{multline}
	where $t\coloneqq \tr T$, $\Phi$ is a (normalized) maximally entangled state on $\cH'\ox\cH$, and $f\coloneqq \langle\Phi|T|\Phi\rangle$.
\end{lemma}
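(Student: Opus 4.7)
The strategy is to observe that the averaged operator
$T' \coloneqq \int_\cU (U\ox\bar U)\,T\,(U\ox\bar U)^\dagger\,d\mu(U)$
lies in the commutant of the representation $U\mapsto U\ox\bar U$ of $\cU(d)$ on $\cH'\ox\cH$, because by left-invariance of the Haar measure, $(V\ox\bar V)\,T'\,(V\ox\bar V)^\dagger = T'$ for every $V\in\cU(d)$. So the main task is to describe this commutant, and then fix the two free coefficients by computing two invariants.

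To describe the commutant, I would use the canonical isomorphism $\cH'\ox\cH \cong \mathrm{End}(\cH)$ determined by $|i\rangle\ox|j\rangle \mapsto |i\rangle\langle j|$, which intertwines $U\ox\bar U$ with the conjugation action $X\mapsto UXU^\dagger$. Under this isomorphism, $|\Phi\rangle$ corresponds (up to normalization) to $\one/\sqrt{d}$, which is invariant; by Schur's lemma, the $U$-invariant operators on $\cH$ are precisely the scalars, so the invariant subspace is one-dimensional and spanned by $|\Phi\rangle$. Its orthogonal complement corresponds to the traceless operators on $\cH$, which carry the adjoint representation and are irreducible (for $d \geq 2$). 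Hence the representation decomposes as a direct sum of two inequivalent irreducibles, and by Schur's lemma applied componentwise, the commutant is two-dimensional, spanned by the projectors onto the two isotypic subspaces, i.e.\ by $\Phi$ and $\one-\Phi$, or equivalently by $\one$ and $\Phi$.

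So $T' = a\,\one + b\,\Phi$ for some scalars $a,b$, which we pin down by two linear equations. Taking traces, and using that the twirl preserves the trace, yields
\begin{equation}
a d^2 + b = \tr T' = \tr T = t.
\end{equation}
Taking the expectation in $|\Phi\rangle$, and using that $|\Phi\rangle$ is invariant under every $U\ox\bar U$ (so $\langle\Phi|T'|\Phi\rangle = \langle\Phi|T|\Phi\rangle$), yields
\begin{equation}
a + b = \langle\Phi|T'|\Phi\rangle = f.
\end{equation}
Solving this $2\times 2$ system gives $a=(t-f)/(d^2-1)$ and $b=(d^2 f - t)/(d^2-1)$, which is exactly the claimed formula.

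The only step requiring care is the irreducibility of the traceless block, since this is what guarantees the commutant is only two-dimensional; this is the main conceptual input, but it is a standard consequence of Schur's lemma once one passes to $\mathrm{End}(\cH)$. Hermiticity of $T$ is not actually needed for the identity itself, but it ensures $T'$ is Hermitian and the coefficients are real, consistent with $t,f\in\mathbb{R}$.
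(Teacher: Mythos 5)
Your proof is correct, and it takes a genuinely different route from the paper's. The paper never analyzes the $U\ox\bar U$ representation directly: it starts from the standard Schur--Weyl twirl formula for $U\ox U$ (whose commutant is spanned by $\one$ and the flip $\mathbb{F}$) and then transports it by partial transposition on the second factor, using the identities $\mathbb{F}^{T_2}=d\,\Phi$, $\tr(S\mathbb{F})=d\tr\!\left(S^{T_2}\Phi\right)$, and $\left((U\ox U)S(U\ox U)^\dagger\right)^{T_2}=(U\ox\bar U)S^{T_2}(U\ox\bar U)^\dagger$, finally substituting $S=T^{T_2}$. You instead work natively with $U\ox\bar U$: identify $\cH'\ox\cH\cong\mathrm{End}(\cH)$ so the action becomes conjugation, decompose it as trivial $\oplus$ adjoint, conclude by Schur's lemma that the commutant is $\spn\{\one,\Phi\}$, and fix the two coefficients from the linear system $ad^2+b=t$, $a+b=f$ (the second using invariance of $|\Phi\rangle$ under every $U\ox\bar U$) — a cleaner coefficient determination than the paper's bookkeeping through the transpose. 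The trade-off: the paper's route rests only on the widely cited $U\ox U$ Schur--Weyl fact and yields that twirl formula as a byproduct, while yours avoids all partial-transpose manipulations and explains structurally why $\one$ and $\Phi$ span the invariants. One small imprecision in your closing remark: irreducibility of the traceless (adjoint) block for $d\geq 2$ is the essential external input and is \emph{not} itself a consequence of Schur's lemma — Schur's lemma only converts the multiplicity-free decomposition into two-dimensionality of the commutant. That irreducibility is standard (and can in fact be deduced from the two-dimensionality of the $U\ox U$ commutant via partial transposition, which is precisely the paper's mechanism), so this does not affect correctness, but it should be cited or proved rather than attributed to Schur's lemma.
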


\begin{proof}
	We first consider an arbitrary Hermitian operator $S$ and its twirling by the product unitary $U\ox U$ (instead of $U\ox \bar{U}$).
	A standard argument using invariance of the Haar measure and Schur-Weyl duality shows that
	\begin{multline}\label{eq:schur-weyl}
	\int_{\cU} (U\ox U) S (U\ox U)^\dagger d\mu(U) \\ = \frac{ds-f'}{d^3-d} \one_{\cH'\ox\cH} + \frac{df'-s}{d^3-d} \mathbb{F},
	\end{multline}
	where $s=\tr S$ and $f'=\tr(S\mathbb{F})$, and the flip operator $\mathbb{F}$ is defined as the linear extension of its action on product vectors, $\mathbb{F}(|\phi\rangle\ox|\psi\rangle) \coloneqq |\psi\rangle \ox |\phi\rangle$.
	Let us denote the partial transpose of an operator $X\in\cB(\cH'\ox\cH)$ with respect to the second tensor factor by $X^{T_2}$.
	The following identities can easily be verified by inspection:
	\begin{align}
	\mathbb{F}^{T_2} &= d \Phi\\
	\tr S &= \tr S^{T_2} \\
	\tr\!\left(S\mathbb{F}\right)&= d\tr\left(S^{T_2} \Phi\right)\\
	\left((U\ox U) S (U\ox U)^\dagger\right)^{T_2} &= (U\ox \bar{U}) S^{T_2} (U\ox \bar{U})^\dagger
	\end{align}
	Taking the partial transpose on $\cH$ in \eqref{eq:schur-weyl} and using the above identities, we obtain
	\begin{align}
	&\int_{\cU} (U\ox \bar{U}) S^{T_2} (U\ox \bar{U})^\dagger d\mu(U)\nonumber\\
	&\qquad\qquad = \frac{ds-f'}{d^3-d} \one_{\cH'\ox\cH} + \frac{df'-s}{d^2-1} \Phi\\
	&\qquad\qquad = \frac{s-f}{d^2-1}\one_{\cH'\ox\cH} + \frac{d^2f-s}{d^2-1}\Phi,
	\end{align}
	where $f  = \tr\left(S^{T_2}\Phi\right) = f'/d$ and $s=\tr S=\tr S^{T_2}$.
	Choosing $S= T^{T_2}$ and noting that $s=t$ for this choice proves the claim.
\end{proof}

We are now ready to prove Lemma~\ref {lem:2-design-additivity}:
\begin{proof}[Proof of Lemma~\ref {lem:2-design-additivity}]
	The covariance of $\cN$ with respect to $\lbrace U_A(g)\rbrace_{g\in G}$ can be written as 
	\begin{align}
	\cN(\cdot) = U_A(g)^\dagger \cN(U_A(g) \cdot U_A(g)^\dagger) U_A(g).\label{eq:covariance-condition}
	\end{align}
	Let $\mathfrak{B}$ denote the basis for $\cH_A$ with respect to which the unnormalized maximally entangled vector $\gamma_{AA'}$ in \eqref{eq:gamma} is defined.
	Then, due to \eqref{eq:covariance-condition}, the Choi operator $N_{AB} \coloneqq (\id_{A}\ox \cN)(\gamma_{AA'})$ of $\cN$ is equal to 
	\begin{align}
	N_{AB} &= \frac{1}{|G|} \sum_{g\in G} (\bar{U}_A(g)\ox U_A(g)) N_{AB} (\bar{U}_A(g)\ox U_A(g))^\dagger \label{eq:choi-2-design}
	\end{align}
	where $\bar{X}$ denotes the complex conjugate of an operator $X$ with respect to the basis $\mathfrak{B}$.
	The formula \eqref{eq:choi-2-design} follows from the `transpose trick' identity $(A\ox \one)|\gamma\rangle = (\one \ox A^T) |\gamma\rangle$.
	Since $U_A(g)\in\cU(\cH_A)$ is a unitary two-design, we further have
	\begin{align}
	N_{AB} &= \frac{1}{|G|} \sum_{g\in G} (\bar{U}_A(g)\ox U_A(g)) N_{AB} (\bar{U}_A(g)\ox U_A(g))^\dagger\\
	&= \int_{\cU(\cH_A)} d\mu(U)\, (\bar{U}\ox U) N_{AB} (\bar{U}\ox U)^\dagger\\
	&= x \one_{AB} + y \Phi_{AB},\label{eq:depolarizing-Choi-state}
	\end{align}
	where $|\Phi\rangle_{AB}\coloneqq \frac{1}{\sqrt{|A|}} |\gamma\rangle_{AB}$, the last equality follows from Lemma~\ref {lem:bitwirl} in Appendix~\ref {sec:bitwirl}, and
	\begin{align}
	x &= \frac{|A|- f}{|A|^2-1} &
	y &= \frac{|A|^2f-|A|}{|A|^2-1}
	\end{align}
	with $f = \langle \Phi|N_{AB}|\Phi\rangle_{AB}$.
	Hence, setting $q = |A|x$ and noting that $\frac{y}{|A|} = 1-q$, we can rewrite \eqref{eq:depolarizing-Choi-state} as
	\begin{align}
	N_{AB} &= x \one_{AB} + y \Phi_{AB}\\
	&= |A|x \one_A \ox \pi_B + \frac{y}{|A|} \gamma_{AB}\\
	&= (\id_A \ox \cD_q)(\gamma_{AA'}),
	\end{align}
	where $\cD_q(\cdot) \coloneqq (1-q)\id + q\tr(\cdot)\pi$ is an $|A|$-dimensional depolarizing channel, which has strongly additive Holevo information as proved by \textcite{Kin03}.
\end{proof}

\section{Proof of \texorpdfstring{Theorem~\ref {thm:eps-hadamard}}{Theorem \ref{thm:eps-hadamard}}}\label{sec:eps-hadamard}
	The following proof of Theorem~\ref{thm:eps-hadamard} is closely related to the one of \cite[Theorem~20.4.1]{Wil13}  and previous methods employed in the proof of Theorem~\ref{prop:continuity}. 
	
	From the proof of
	\cite[Theorem~20.4.1]{Wil13}, we know that the Holevo
	information of a quantum channel $\mathcal{N}_{A\rightarrow B}$ can be written
	as%
	\begin{align}
	\chi(\cN) &= \max_{\{p_{X}(x),\psi^{x}\}} \left\lbrace H(B)_{\omega}-H(E|X)_{\omega} \right\rbrace\!,\label{eq:holevo-alt}%
	\intertext{where}
	\omega_{XBE} &=\sumi_{x}p_{X}(x)|x\rangle\langle x|_{X}\otimes\mathcal{V}%
	_{A\rightarrow BE}^{\mathcal{N}}(\psi_{A}^{x}),
	\end{align}
	$p_{X}$ is a probability distribution, $\{\psi_{A}^{x}\}_{x}$ is a set of pure
	states, and $\mathcal{V}_{A\rightarrow BE}^{\mathcal{N}}$ is a Stinespring dilation of $\mathcal{N}_{A\rightarrow B}$.
	
	It is convenient to distinguish between the output system $E$ of the
	complementary channel $\mathcal{N}_{c}$ and the output system $\tilde{E}$ of
	$\mathcal{D}_{B\rightarrow\tilde{E}}\circ\mathcal{N}_{A\rightarrow B}$, where
	$\tilde{E}\cong E$. For $n\in\mathbb{N}$, let%
	\begin{align}
	\rho_{XA^{n}}=\sumi_{x}p_{X}(x)|x\rangle\langle x|_{X}\otimes\psi_{A^{n}}^{x}%
	\end{align}
	be an arbitrary cq state, where each $\psi_{A^{n}}^{x}$ is a pure state. By
	assumption, we have that%
	\begin{align}
	\frac{1}{2}\Vert\mathcal{N}_{c}-\mathcal{D}_{B\rightarrow\widetilde{E}}%
	\circ\mathcal{N}_{A\rightarrow B}\Vert_{\diamond}\leq\varepsilon
	.\label{eq:diamond-norm-eps-hadamard}%
	\end{align}
	We define the following cq states for $1\leq t\leq n-1$:
	\begin{align}
	\sigma_{XB^{n}E^{n}}^{0} &\coloneqq\left(  \operatorname{id}_{X}\otimes\left[
	\mathcal{V}^{\mathcal{N}}\right]  ^{\otimes n}\right)
	(\rho_{XA^{n}}),\\
	\sigma_{X\tilde{E}_{\leq t}E_{>t}}^{t} &\coloneqq \left(  \operatorname{id}_{X}\otimes\left[  \mathcal{D}\circ\mathcal{N}\right]  ^{\otimes t}\otimes
	\mathcal{N}_{c}^{\otimes(n-t)}\right)  (\rho_{XA^{n}}),\\
	\sigma_{XF^{n}\tilde{E}^{n}}^{n} &\coloneqq\left(  \operatorname{id}_{X}\otimes\left[
	\mathcal{U}^{\mathcal{D}}\circ\mathcal{N}\right]  ^{\otimes n}\right)  (\rho_{XA^{n}}),
	\end{align}
	where $\cV^\cN \equiv \cV^\cN_{A\to BE}$ and $\cU^\cD\equiv \cU^\cD_{B\rightarrow F\tilde{E}}$ is an isometric extension of $\cD_{B\rightarrow \tilde{E}}$.
	
	The multi-letter version of \eqref{eq:holevo-alt} that we are interested in
	bounding is%
	\[
	H(B^{n})_{\sigma^{0}}-H(E^{n}|X)_{\sigma^{0}}=H(F^{n}\tilde{E}^{n}%
	)_{\sigma^{n}}-H(E^{n}|X)_{\sigma^{0}},
	\]
	where the equality holds because entropy is invariant with respect to applying the isometry $\mathcal{U}^{\mathcal{D}}$.
	For an entanglement-breaking channel $\mathcal{D}_{B\rightarrow\tilde{E}}%
	\circ\mathcal{N}_{A\rightarrow B}$ and for all integers $n>1$, there exist
	states $\omega_{X_{i}F_{i}\tilde{E}_{i}}^{i}$ for $1\leq i\leq n$, each
	having the form%
	\begin{align}
	\omega_{X_{i}F_{i}\tilde{E}_{i}}^{i}\coloneqq \sum_{x_{i}}p_{X_{i}}(x_{i}%
	)|x_{i}\rangle\langle x_{i}|_{X_{i}}\otimes(\cU^\cD_i\circ\mathcal{N}_i)(\phi_{A_{i}}^{x_{i}})
	\end{align}
	with $\cU^\cD_i\equiv \mathcal{U}_{B_{i}\rightarrow F_{i}\tilde{E}_{i}}^{\mathcal{D}}$ and $\cN_i\equiv\mathcal{N}_{A_{i}\rightarrow B_{i}}$, 
	such that the following entropy inequalities hold \cite[Theorem~20.4.1]{Wil13}:
	\begin{align}
	H(F^{n}\tilde{E}^{n})_{\sigma^{n}}  & \leq\sum_{i=1}^{n}H(F_{i}\tilde{E}%
	_{i})_{\omega^{i}} \\
	&= H(F\tilde{E}|Z)_{\omega},\label{eq:eps-had-subadd}\\
	-H(\tilde{E}^{n}|X)_{\sigma^{n}}  & \leq-\sum_{i=1}^{n}H(\tilde{E}_{i}%
	|X_{i})_{\omega^{i}}\\
	&=-nH(\tilde{E}|\tilde{X}Z)_{\omega}%
	,\label{eq:eps-had-subadd-env}%
	\end{align}
	where the state $\omega_{Z\tilde{X}F\tilde{E}}$ is defined as%
	\begin{align}
	\omega_{Z\tilde{X}F\tilde{E}}\coloneqq\frac{1}{n}\sum_{i=1}^{n}|i\rangle\langle
	i|_{Z}\otimes\omega_{X_{i}F_{i}\tilde{E}_{i}}^{i},
	\end{align}
	and the register $\tilde{X}$ is taken to be large enough to contain the
	contents of the largest $X_{i}$. Applying
	\eqref{eq:diamond-norm-eps-hadamard}, we find that
	\begin{align}
	\frac{1}{2}\Vert\sigma_{X\tilde{E}_{\leq t-1}E_{>t-1}}^{t-1}-\sigma
	_{X\tilde{E}_{\leq t}E_{>t}}^{t}\Vert_{1}\leq\varepsilon
	,\label{eq:eve-continuity-states-eps-had}%
	\end{align}
	for $1\leq t\leq n$ by the definition of the diamond norm. Moreover,
	$\sigma_{X\tilde{E}_{\leq t-1}E_{>t-1}}^{t-1}$ and $\sigma_{X\tilde{E}_{\leq
			t}E_{>t}}^{t}$ have the same marginals on the systems $X\tilde{E}_{<t}E_{>t}$.
	We get that%
	\begin{multline}
	\left\vert H(E_{t}|\tilde{E}_{<t}E_{>t}X)_{\sigma^{t-1}}-H(\tilde{E}%
	_{t}|\tilde{E}_{<t}E_{>t}X)_{\sigma^{t}}\right\vert \\
	\leq2\varepsilon
	\log|E|+g(\varepsilon),
	\end{multline}
	by applying \eqref{eq:eve-continuity-states-eps-had} and \cite[Lemma~2]%
	{Win15}. By reasoning similar to that in the proof of \cite[Theorem~3.4]%
	{SSWR15}, consider that%
	\begin{align}
	&  -H(E^{n}|X)_{\sigma^{0}}\nonumber\\
	&  =-H(\tilde{E}^{n}|X)_{\sigma^{n}}-H(E^{n}|X)_{\sigma^{0}}+H(\tilde{E}%
	^{n}|X)_{\sigma^{n}}\nonumber\\
	&  =-H(\tilde{E}^{n}|X)_{\sigma^{n}}\nonumber\\
	&  \quad-\sum_{t=1}^{n}\left[  H(E_{t}|\tilde{E}_{<t}E_{>t}X)_{\sigma^{t-1}%
	}-H(\tilde{E}_{t}|\tilde{E}_{<t}E_{>t}X)_{\sigma^{t}}\right]  \nonumber\\
	&  \leq-H(\tilde{E}^{n}|X)_{\sigma^{n}}+n\left(  2\varepsilon\log
	|E|+g(\varepsilon)\right)  \nonumber\\
	&  \leq -nH(\tilde{E}|\tilde{X}Z)_{\omega}+n\left(  2\varepsilon\log
	|E|+g(\varepsilon)\right)  .\label{eq:eps-had-env-upper}%
	\end{align}
	The final inequality is a consequence of \eqref{eq:eps-had-subadd-env}. We can
	now bound the quantity $H(F^{n}\tilde{E}^{n})_{\sigma^{n}}-H(E^{n}%
	|X)_{\sigma^{0}}$ of interest from above as follows:%
	\begin{align}
	& H(F^{n}\tilde{E}^{n})_{\sigma^{n}}-H(E^{n}|X)_{\sigma^{0}}\\
	& \leq n\left[  H(F\tilde{E}|Z)_{\omega}-H(\tilde{E}|\tilde{X}Z)_{\omega
	}\right]  +n\left(  2\varepsilon\log|E|+g(\varepsilon)\right)  \\
	& \leq n\left[  H(F\tilde{E})_{\omega}-H(\tilde{E}|\tilde{X}Z)_{\omega
	}\right]  +n\left(  2\varepsilon\log|E|+g(\varepsilon)\right)  \\
	& \leq n \max_{\{p_{X}(x),\psi^{x}\}} \left\lbrace H(F\tilde{E})_{\xi}-H(\tilde
	{E}|X)_{\xi} \right\rbrace  \\ & \qquad +n\left(  2\varepsilon\log|E|+g(\varepsilon)\right)\!.
	\end{align}
	The first inequality follows from \eqref{eq:eps-had-subadd} and
	\eqref{eq:eps-had-env-upper}. The second inequality follows from the chain
	rule. The final inequality follows because the state $\omega_{Z\tilde
		{X}F\tilde{E}}$ is a particular one to consider for optimizing the entropy
	difference. Furthermore, it suffices to optimize over pure states due to
	another application of the data processing inequality. Since we have shown the above chain of inequalities for an arbitrary positive integer $n$ and an arbitrary initial ensemble at the channel input, we can conclude the statement of the theorem. \qedhere

\section{Properties of the amplitude damping channel}\label{sec:ampdamp-properties}
We first prove Proposition~\ref {prop:amp-damp-cov-param}, which states that $\cov_{\cP}(\mathcal{A}_{p})=\frac{p}{2}.$
\begin{proof}[Proof of Proposition~\ref {prop:amp-damp-cov-param}]
	We set $G=\cP$ and argue by employing Lemma~\ref{lemma:cov-critical-step}, proving the
	equality in \eqref{eq:amp-damping-cov} by first showing that the maximally
	entangled qubit state achieves the diamond norm and then that the resulting
	trace distance is equal to $\frac{p}{2}$, i.e.,%
	\begin{align}
	\frac{1}{2}\left\Vert \mathcal{A}_{p}-\mathcal{A}_{p,G}\right\Vert _{\diamond
	}=\frac{1}{2}\left\Vert \operatorname{id}_{R}\otimes\left[  \mathcal{A}%
	_{p}-\mathcal{A}_{p,G}\right]  (\Phi_{RA})\right\Vert _{1}=\frac{p}{2}.
	\end{align}
		
	First, consider a general channel $\mathcal{N}$ and its twirled version
	$\mathcal{N}_{G}$ with respect to some covariance group $\{\left(  U_{g}%
	,V_{g}\right)  \}_{g}$, where $\left\{  U_{g}\right\}  _{g}$ is a one-design.
	We again use the notation $\cN^g \coloneqq \mathcal{V}^{g\dag}\circ\mathcal{N}\circ\mathcal{U}^{g}$.
	Taking the generalized divergence to be the trace distance, $\psi_{RA}$ an
	arbitrary pure state, and applying Lemma~\ref{lemma:cov-critical-step}, we
	find that%
	\begin{align}
	&  \left\Vert \left(  \operatorname{id}_{R}\otimes\mathcal{N}\right)
	(\Phi_{RA})-\left(  \operatorname{id}_{R}\otimes\mathcal{N}_{G}\right)
	(\Phi_{RA})\right\Vert _{1}\nonumber\\
	& \quad \geq\left\Vert \frac{1}{\left\vert G\right\vert }\sumi_{g} \cN^g \left(  \psi_{RA}\right)  \otimes
	|g\rangle\langle g|- \mathcal{N}%
	_{G}   (\psi_{RA})\otimes\pi_{G}\right\Vert _{1}\\
	& \quad =\frac{1}{\left\vert G\right\vert }\sum_{g}\left\Vert  \cN^g   \left(  \psi_{RA}\right)  - \mathcal{N}_{G} (\psi_{RA})\right\Vert
	_{1},
	\end{align}
	where $\pi_{G}$ denotes the maximally mixed state (note that for the second
	channel, we get the maximally mixed state here because the channel is already
	symmetrized with respect to the covariance group). The second equality follows
	from how the trace norm decomposes when acting on block-diagonal operators.
	
	Applying the above inequality to the amplitude damping channel $\mathcal{A}%
	_{p}$ and the symmetrized channel $\mathcal{A}_{p,G}$ with covariance group
	given by the Pauli group on the input and output, and noting that
	$\mathcal{A}_{p}$ is covariant with respect to $\sigma_{Z}$, we find that%
	
	\begin{widetext}
		\begin{align}
		\left\Vert \mathcal{A}_{p}(\Phi_{RA})-\mathcal{A}_{p,G}(\Phi_{RA})\right\Vert
		_{1} \geq \frac{1}{2}\left\Vert \mathcal{A}_{p}(\psi_{RA})-\mathcal{A}%
		_{p,G}(\psi_{RA})\right\Vert _{1} +\frac{1}{2}\left\Vert \sigma_{X}\mathcal{A}_{p}(\sigma_{X}\psi_{RA}\sigma
		_{X})\sigma_{X}-\mathcal{A}_{p,G}(\psi_{RA})\right\Vert _{1}.
		\end{align}
		For an amplitude damping channel, consider that $\mathcal{A}_{p,G}%
		(\cdot)=\frac{1}{2}\mathcal{A}_{p}(\cdot)+\frac{1}{2}\sigma_{X}\mathcal{A}%
		_{p}(\sigma_{X}\left(  \cdot\right)  \sigma_{X})\sigma_{X}$. 
		This implies that%
	\begin{align}
	\left\Vert \sigma_{X}\mathcal{A}_{p}(\sigma_{X}\psi_{RA}\sigma_{X})\sigma_{X}-\mathcal{A}_{p,G}(\psi_{RA})\right\Vert _{1} &=\bigg\Vert \sigma_{X}\mathcal{A}_{p}(\sigma_{X}\psi_{RA}\sigma_{X}%
	)\sigma_{X} -\frac{1}{2}\left[  \mathcal{A}_{p}(\psi_{RA})+\sigma
	_{X}\mathcal{A}_{p}(\sigma_{X}\psi_{RA}\sigma_{X})\sigma_{X}\right]
	\bigg\Vert _{1}\\
	&  =\frac{1}{2}\left\Vert \mathcal{A}_{p}(\psi_{RA})-\sigma_{X}\mathcal{A}%
	_{p}(\sigma_{X}\psi_{RA}\sigma_{X})\sigma_{X}\right\Vert _{1}\\
	&  =\left\Vert \mathcal{A}_{p}(\psi_{RA})-\frac{1}{2}\left[  \mathcal{A}%
	_{p}(\psi_{RA})+\sigma_{X}\mathcal{A}_{p}(\sigma_{X}\psi_{RA}\sigma_{X}%
	)\sigma_{X}\right]  \right\Vert _{1}\\
	&  =\left\Vert \mathcal{A}_{p}(\psi_{RA})-\mathcal{A}_{p,G}(\psi
	_{RA})\right\Vert _{1}.
	\end{align}
	\end{widetext}
	Combining with the above, this implies that the following inequality holds for
	any state $\psi_{RA}$:%
	\begin{multline}
	\left\Vert \mathcal{A}_{p}(\Phi_{RA})-\mathcal{A}_{p,G}(\Phi_{RA})\right\Vert
	_{1} \\ 
	\geq\left\Vert \mathcal{A}_{p}(\psi_{RA})-\mathcal{A}_{p,G}(\psi
	_{RA})\right\Vert _{1}.
	\end{multline}
	This establishes that the diamond norm is achieved by the maximally entangled
	state. We can then use the above again to see that%
	\begin{multline}
	\left\Vert \mathcal{A}_{p}(\Phi_{RA})-\mathcal{A}_{p,G}(\Phi_{RA})\right\Vert
	_{1} \\
	=\frac{1}{2}\left\Vert \mathcal{A}_{p}(\Phi_{RA})-\sigma_{X}%
	\mathcal{A}_{p}(\sigma_{X}\Phi_{RA}\sigma_{X})\sigma_{X}\right\Vert _{1}.
	\end{multline}
	To compute the value of the right-hand side, recall that the Kraus operators
	of the amplitude damping channel $\mathcal{A}_{p}$ are given by%
	\begin{align}
	K_1 &= |0\rangle\langle0|+\sqrt{1-p}|1\rangle\langle 1| \\ 
	K_2 &= \sqrt{p}|0\rangle
	\langle 1|,
	\end{align}
	which implies that the Kraus operators of $\sigma_{X}\mathcal{A}_{p}%
	(\sigma_{X}\left(  \cdot\right)  \sigma_{X})\sigma_{X}$ are given by%
	\begin{align}
	L_1 &= |1\rangle\langle1|+\sqrt{1-p}|0\rangle\langle 0|\\
	L_2 &= \sqrt{p}|1\rangle \langle 0|.
	\end{align}
	Applying these Kraus operators to the maximally entangled state $|\Phi\rangle = \frac{1}{\sqrt{2}}(|00\rangle + |11\rangle)$ leads
	to%
	\begin{align}
	K_1 \Phi K_1^\dagger &=\frac{1}{2} \big(  |00\rangle\langle00|+\sqrt{1-p}|00\rangle
	\langle 11|\nonumber\\
	&\qquad {}+\sqrt{1-p}|11\rangle\langle00|+\left(  1-p\right)  |11\rangle
	\langle 11| \big) \\
	K_2 \Phi K_2^\dagger &=\frac{p}{2}|10\rangle\langle10|\\
	L_1\Phi L_1^\dagger &= \frac{1}{2}\big(  \left(  1-p\right)  |00\rangle\langle00|+\sqrt
	{1-p}|00\rangle\langle 11| \nonumber\\
	&\qquad{}+\sqrt{1-p}|11\rangle\langle 00|+|11\rangle
	\langle11|\big)\\
	L_2\Phi L_2^\dagger &= \frac{p}{2}|01\rangle\langle01|
	\end{align}
	Then we find that%
	\begin{multline}
	\mathcal{A}_{p}(\Phi_{RA})-\sigma_{X}\mathcal{A}_{p}(\sigma_{X}\Phi
	_{RA}\sigma_{X})\sigma_{X}\\
	=\frac{p}{2}\left(  |00\rangle\langle00|+|10\rangle\langle10|+|01\rangle
	\langle01|-|11\rangle\langle11|\right)  ,
	\end{multline}
	implying that%
	\begin{align}
	\left\Vert \mathcal{A}_{p}(\Phi_{RA})-\sigma_{X}\mathcal{A}_{p}(\sigma_{X}%
	\Phi_{RA}\sigma_{X})\sigma_{X}\right\Vert _{1}=2p
	\end{align}
	and in turn that%
	\begin{align}
	\left\Vert \mathcal{A}_{p}(\Phi_{RA})-\mathcal{A}_{p,G}(\Phi_{RA})\right\Vert
	_{1}=p.
	\end{align}
	This concludes the proof.
\end{proof}

Next, we prove Proposition~\ref {prop:amp-damp-eb-param}, which states that 
\begin{align}
\eb(\cA_p) \geq f(p)\coloneqq \frac{(1-p)(2\sqrt{1-p}-p)}{4(1-p)-p^2} \geq \frac{1-p}{2}.
\end{align}
Note that numerics suggest that $f(p)$ is in fact optimal, i.e., $\eb(\cA_p) = f(p)$.

\begin{proof}[Proof of Proposition~\ref {prop:amp-damp-eb-param}]
	We prove the lower bound on $\eb(\cA_p)$ by judiciously choosing a feasible point of the primal SDP for $\eb(\cA_p)$.
	The primal SDP can be obtained from the dual formulation in \eqref{eq:EB-SDP} by standard techniques (we use the notation $\langle X_A,Y_A\rangle \coloneqq \tr(X_A^\dagger Y_A)$ for the Hilbert-Schmidt inner product):
	\begin{align}
	\begin{aligned}
	\text{\normalfont maximize: } & \frac{1}{2}\left( \langle N_{AB},A_p\rangle - \tr(H_A) \right) \\
	\text{\normalfont subject to: } & \tr(M_A) \leq 2\\
	& N_{AB} \leq M_A \ox \one_B\\
	& N_{AB} + P_{AB}^{T_B} \leq H_A \ox \one_B\\
	& M_A \geq 0\\
	& N_{AB} \geq 0\\
	& P_{AB} \geq 0\\
	& H_A = H_A^\dagger,	
	\end{aligned}\label{eq:EB-SDP-primal}
	\end{align}
	where $A_p$ is the Choi operator of the amplitude damping channel $\cA_p$,
	\begin{align}
	A_p = (\id\ox\cA_p)(\gamma) = \begin{pmatrix}
	1 & 0 & 0 & \sqrt{1-p} \\ 0 & 0 & 0 & 0\\ 0 & 0 & p & 0\\ \sqrt{1-p} & 0 & 0 & 1-p
	\end{pmatrix}\!.
	\label{eq:ampdamp-choi}
	\end{align}
	We now make the following ansatz for the variables appearing in the primal SDP in \eqref{eq:EB-SDP-primal} (with a dot representing a zero for improved readability):\footnote{The ansatz in \eqref{eq:eb-primal-sdp-ansatz} and \eqref{eq:eb-primal-sdp-var-constr}, which seems rather ad-hoc at first, was determined by carefully analyzing the optimal numerical solutions to the SDP in \eqref{eq:EB-SDP-primal}.}
	\begin{align}
	N_{AB} &= \begin{pmatrix}
	q_1 & . & . & r\\
	. & . & . & .\\
	. & . & . & .\\
	r & . & . & q_2
	\end{pmatrix} & 
	P_{AB} &= \begin{pmatrix}
	. & . & . & .\\ . & q_1 & -r & . \\ . & -r & q_2 & . \\ . & . & . & .
	\end{pmatrix} \\
	H_A &= \begin{pmatrix}
	q_1 & .\\ .& q_2
	\end{pmatrix} & 
	M_A&= \begin{pmatrix}
	q_1 + r & . \\ . & q_2 + r
	\end{pmatrix}\!,
	\label{eq:eb-primal-sdp-ansatz}
	\end{align}
	where $q_1,q_2,r\geq 0$ are real parameters satisfying
	\begin{align}
	r^2 = q_1 q_2 \qquad \text{and} \qquad r = 1-\frac{1}{2}(q_1 + q_2).
	\label{eq:eb-primal-sdp-var-constr}
	\end{align}
	It is easy to check that, imposing the conditions \eqref{eq:eb-primal-sdp-var-constr} on $q_1,q_2,r$, the choices of $N_{AB},P_{AB},H_A,M_A$ in \eqref{eq:eb-primal-sdp-ansatz} satisfy all constraints in \eqref{eq:EB-SDP-primal}.
	Moreover, for these choices the objective function reduces to
	\begin{align}
	\frac{1}{2}\left( \langle N_{AB},A_p\rangle - \tr(H_A) \right) = \frac{1}{2}\left(2r\sqrt{1-p} - pq_2\right)\!.\label{eq:objective}
	\end{align}
	With the constraints \eqref{eq:eb-primal-sdp-var-constr}, the objective function \eqref{eq:objective} can be determined as a function $f(p)$ using the method of Lagrange multipliers, which we carried out in the attached Mathematica notebook \verb+amp_damp_EB_parameter.nb+.
	This yields
	\begin{align}
	f(p) = \frac{(1-p)(2\sqrt{1-p}-p)}{4(1-p)-p^2}.
	\end{align}
	It is furthermore easy to see that $f(p)\geq \frac{1}{2}(1-p)$, which proves the claim.
\end{proof}

To conclude, we prove the fact that the amplitude damping channel $\cA_p$ is a \emph{low-noise channel} in the sense of \cite{LLS17}, i.e., $\|\id - \cA_p\|_\diamond = 2p$.

\begin{proof}[Proof of Proposition~\ref {prop:ampdamp-low-noise}]
	By sending in the state $|1\rangle
	\langle1|$ (with trivial purification $|11\rangle\langle 11|$) to both channels, we first obtain the following lower bound:%
	\begin{align}
	\left\Vert \operatorname{id}-\mathcal{A}_{p}\right\Vert _{\diamond}\geq 2p.
	\end{align}
	Indeed, consider that%
	\begin{align}
	\mathcal{A}_{p}(|1\rangle\langle1|)=(1-p)|1\rangle\langle1|+p|0\rangle
	\langle0|,
	\end{align}
	so that%
	\begin{align}
	\operatorname{id}(|1\rangle\langle1|)-\mathcal{A}_{p}(|1\rangle\langle1|)  &
	=|1\rangle\langle1|-(1-p)|1\rangle\langle1|-p|0\rangle\langle0|\\
	&  =p|1\rangle\langle1|-p|0\rangle\langle0|,
	\end{align}
	and then 
	\begin{align}
	&\|(\id\ox\id)(|11\rangle\langle 11|) - (\id\ox \cA_p)(|11\rangle\langle 11|)\|_1 \nonumber \\
	&\qquad\qquad  = \left\Vert \operatorname{id}(|1\rangle\langle1|)-\mathcal{A}%
	_{p}(|1\rangle\langle1|)\right\Vert _{1}\\
	&\qquad\qquad =2p.
	\end{align}
	
	We thus need to prove that the value of $2p$ is indeed optimal.
	One way to show this is by SDP duality as follows.
	Consider the bipartite operator
	\begin{align}
	Z_{AB} \coloneqq \begin{pmatrix}
	\frac{q^2}{p} & 0 & 0 & q \\ 0 & p- \frac{q^2}{p} & 0 & 0\\ 0 & 0 & 0 & 0 \\ q & 0 & 0 & p
	\end{pmatrix}\!,
	\end{align}
	where $q\coloneqq 1-\sqrt{1-p}$.
	It is easy to check (see \verb+amp_damp_feasible_Z.nb+ in the ancillary files) that
	\begin{align}
	Z_{AB} \geq 0 \qquad \text{and}\qquad Z_{AB} \geq \gamma - A_p,
	\label{eq:Z-feasibility}
	\end{align}
	where $\gamma\equiv|\gamma\rangle\langle\gamma|$ is the Choi operator of $\id$, and where $A_p$ is the Choi operator of $\cA_p$ as given in \eqref{eq:ampdamp-choi}.
	The two conditions \eqref{eq:Z-feasibility} show that $Z_{AB}$ is a feasible point for the dual SDP of $\frac{1}{2}\|\id - \cA_p\|_\diamond$ in \eqref{eq:diamondnorm-SDP}.
	Furthermore, $\tr_BZ_{AB} = p \one_A$, which shows that $\frac{1}{2}\|\id - \cA_p\|_\diamond \leq p$, concluding the proof.
	
	In the following, we also give an analytical proof of the fact that $\|\id - \cA_p\|_\diamond \leq 2p$.
	We consider only the interval $p \in(0,1]$ because otherwise the diamond norm is trivially equal to zero, given that the amplitude damping channel $\mathcal{A}%
	_{p}$ becomes the identity channel for $p=0$. The identity channel and the
	amplitude damping channel are jointly covariant with respect to the group
	$\{I,\sigma_{Z}\}$. Applying Proposition~\ref{prop:covariance}, we find that
	the optimal state for the diamond norm of the difference $\operatorname{id}%
	-\mathcal{A}_{p}$ takes the form:%
	\begin{align}
	|\phi^{q}\rangle\equiv\sqrt{q}|0\rangle|0\rangle+\sqrt{1-q}|1\rangle|1\rangle,
	\end{align}
	for some $q\in\left[  0,1\right]  $. By following steps similar to those in
	the proof of Proposition~\ref{prop:amp-damp-cov-param}, we find that%
	\begin{multline}
	|\phi^{q}\rangle\langle\phi^{q}|-\left(  \operatorname{id}\otimes
	\mathcal{A}_{p}\right)  \left(  |\phi^{q}\rangle\langle\phi^{q}|\right) \\
	\qquad {} =\sqrt{q\left(  1-q\right)  }\left(  1-\sqrt{1-p}\right)  \left(
	|00\rangle\langle11|+|11\rangle\langle00|\right) \\
	+p\left(  1-q\right)  \left(  |11\rangle\langle11|-|10\rangle\langle
	10|\right).
	\end{multline}
	One can compute that the eigenvalues of the above matrix are given by%
	\begin{align}
	\left\lbrace 0, p(q-1), \frac{1}{2}\left(  p\left(  1-q\right)  \pm\sqrt{\left(  1-q\right)r(p,q) 	}\right)\right\rbrace,
	\end{align}
	where 
	\begin{align}
	r(p,q) \coloneqq p^{2}\left(  1-q\right)  +8q\left(  1-\sqrt{1-p}\right)  -4pq.
	\end{align}
	This implies that the trace norm $\||\phi^{q}\rangle\langle\phi^{q}|-\left(  \operatorname{id}\otimes
	\mathcal{A}_{p}\right)  \left(  |\phi^{q}\rangle\langle\phi^{q}|\right)\|_1$ is equal to%
	\begin{align}
	f(p,q) \coloneqq p\left(  1-q\right)  +\sqrt{\left(  1-q\right)  r(p,q)}.
	\end{align}
	We note that $r(p,q)\geq0$ for all $p\in(0,1]$ and $q\in\left[  0,1\right]  $, which follows
	because $r(p,0)=p^{2}\geq0$ and%
	\begin{align}
	\frac{d}{dq}r(p,q)=8(1-\sqrt{1-p})-p\left(  4+p\right)  \geq
	0,\label{eq:particular-p-q-ineq}%
	\end{align}
	on $p\in(0,1]$, implying that $r(p,q)$ is monotone increasing in $q$ on
	the interval $\left[  0,1\right]  $, for all fixed $p\in\left(  0,1\right]  $.
	Consider that%
	\begin{align}
	f(p,0)=2p.
	\end{align}
	Our aim is to show that for fixed $p\in\left(  0,1\right]  $, the function
	$f(p,q)$ is monotone decreasing in $q$ on the interval $\left[  0,1\right]  $.
	If this holds, then we can conclude the statement of the proposition. To this
	end, we will compute $\frac{d}{dq}f(p,q)$ and show that $\frac{d}%
	{dq}f(p,q)\leq0$ for all $p\in(0,1]$ and $q\in\left[  0,1\right]  $. Consider that%
	\begin{align}
	&\frac{d}{dq}f(p,q) \nonumber \\
	&\quad =-p-\frac{r(p,q)+\left(  1-q\right)  \left(  4p+p^{2}%
		-8\left[  1-\sqrt{1-p}\right]  \right)  }{2\sqrt{\left(  1-q\right)  r(p,q)}}.
	\end{align}
	We have
	\begin{align}
	&\left.  \frac{d}{dq}f(p,q)\right\vert _{q=0} \nonumber\\ 
	&\quad =\frac{2}{p}\left(  2\left[
	1-\sqrt{1-p}\right]  -p\left(  1+p\right)  \right)  \leq
	0\label{eq:first-deriv-q-0}%
	\end{align}
	for all $p\in(0,1]$. We then compute%
	\begin{align}
	\frac{d^{2}}{dq^{2}}f(p,q) &= -\frac{4s(p)  \sqrt{\left(  1-q\right)  r(p,q)}%
	}{\left(  1-q\right)  ^{2}\left[  r(p,q)\right]  ^{2}},
	\end{align}
	where $s(p)\coloneqq 8\left(  1-\sqrt{1-p}\right)  +p\left(  p+4\sqrt{1-p}-8\right)$.
	Since $s(p) \geq0$ for $p\in\left(  0,1\right]  $, we conclude that $\frac{d^{2}}{dq^{2}%
	}f(p,q)\leq0$. This implies that for fixed $p\in\left(  0,1\right]  $,
	$\frac{d}{dq}f(p,q)$ is monotone decreasing in $q\in\left[  0,1\right]  $.
	This in turn implies, by combining with \eqref{eq:first-deriv-q-0}, that
	$\frac{d}{dq}f(p,q)\leq0$. This concludes the proof.
\end{proof}

\bibliographystyle{apsrev}
\bibliography{references}

\end{document}